\documentclass[conference,compsoc]{IEEEtran}

\usepackage{amsmath,amsfonts, amsthm, amssymb, bm}
\usepackage[noend]{algorithmic}
\usepackage{algorithm}
\usepackage{array}
\usepackage[caption=false,font=normalsize,labelfont=sf,textfont=sf]{subfig}
\usepackage{textcomp}
\usepackage{stfloats}
\usepackage{url}
\usepackage{verbatim}
\usepackage{graphicx}
\usepackage{cite}
\usepackage{mathtools}
\usepackage{multirow, multicol}
\usepackage[font=small,skip=2pt]{caption}
\usepackage{enumitem}
\usepackage[table]{xcolor}
\usepackage{xcolor}

\newcommand\Tstrut{\rule{0pt}{2.7ex}}         
\newcommand\Bstrut{\rule[-1.4ex]{0pt}{0pt}}   

\newtheorem{lemma}{Lemma}
\newtheorem{theorem}{Theorem}
\newtheorem{corollary}{Corollary}
\theoremstyle{definition}
\newtheorem{example}{Example}
\newtheorem{definition}{Definition}

\newtheorem{remark}{Remark}

\hyphenation{op-tical net-works semi-conduc-tor IEEE-Xplore}


\newcommand{\cA}{\mathfrak{A}}

\newcommand{\E}{\mathcal{E}}
\newcommand{\Q}{\mathcal{Q}}
\newcommand{\R}{\mathcal{R}}
\newcommand{\T}{\mathcal{T}}

\renewcommand{\O}{\mathcal{O}}

\newcommand{\kl}{k_\ell}
\newcommand{\il}{i_\ell}
\newcommand{\ilp}{i'_\ell}
\newcommand{\jkl}{j_{k_\ell}}
\newcommand{\sbl}{\texttt{sb}_\ell}

\newcommand{\rand}{\overset{R}{\gets}}

\renewcommand{\H}{\mathsf{H}}
\renewcommand{\S}{\mathsf{S}}

\newcommand{\Cq}{{\mathsf{C_Q}}}
\newcommand{\Ce}{{\mathsf{C_E}}}
\newcommand{\B}{{\mathsf{B}}}

\newcommand{\bq}{\bm{q}}

\newcommand{\br}{\bm{r}}
\newcommand{\bx}{\bm{x}}

\newcommand{\negl}{{\sf{negl}}}
\newcommand{\poly}{{\sf{poly}}}

\newcommand{\et}{\textit{et al.}}

\begin{document}

\title{TreePIR: Efficient Private Retrieval of Merkle Proofs via Tree Colorings\\ with Fast Indexing and Zero Storage Overhead\vspace{-20pt}}

\author{Son Hoang Dau, Quang Cao, Rinaldo Gagiano, Duy Huynh, Xun Yi,\\ 
Phuc Lu Le, 
Quang-Hung Luu,
Emanuele Viterbo,
Yu-Chih Huang,\\
Jingge Zhu,
Mohammad M. Jalalzai, and Chen Feng}
\vspace{-10pt}

\maketitle

\begin{abstract}
A Batch Private Information Retrieval (batch-PIR) scheme allows a client to retrieve multiple data items from a database without revealing them to the storage server(s).  
Most existing approaches for batch-PIR are based on batch codes, in particular, probabilistic batch codes (PBC) (Angel \et~S\&P'18), which incur large storage overheads.
In this work, we show that \textit{zero} storage overhead is achievable for tree-shaped databases. In particular, we develop \textit{TreePIR}, a novel approach tailored made for private retrieval of the set of nodes along an arbitrary \textit{root-to-leaf path} in a Merkle tree with no storage redundancy.
This type of trees has been widely implemented in many real-world systems such as Amazon DynamoDB, Google's Certificate Transparency, and blockchains. Tree nodes along a root-to-leaf path forms the well-known \textit{Merkle proof}. 
TreePIR, which employs a novel tree coloring, outperforms PBC, a fundamental component in state-of-the-art batch-PIR schemes (Angel \et~S\&P'18, Mughees-Ren~S\&P'23, Liu \et~S\&P'24), in all metrics, achieving $3\times$ lower total storage and $1.5$-$2\times$ lower computation and communication costs.
Most notably, TreePIR has $8$-$160\times$ lower setup time and its \textit{polylog}-complexity indexing algorithm is $19$-$160\times$ faster than PBC for trees of $2^{10}$-$2^{24}$ leaves.
\end{abstract}


\section{Introduction}
\label{sec:intro}
A \textit{Merkle tree} is a binary tree 
in which each node is the (cryptographic) hash of the concatenation of the contents of its child nodes~\cite{Merkle1988}. 
A Merkle tree can be used to represent a large number of data items in a way that not only guarantees data integrity but also allows a very efficient membership verification, which can be performed with complexity $\Theta(\log(n))$ where $n$ is the number of items represented by the tree. 
More specifically, the membership verification of an item uses its \textit{Merkle proof} defined as follows: the Merkle proof 
for the item 
corresponding to a leaf node consists of $\Theta(\log(n))$ hashes stored at the siblings of the nodes in the path from that leaf node to the root. 
Due to their simple construction and powerful features, Merkle trees have been widely used in practice, e.g., for data synchronization in Amazon DynamoDB~\cite{AmazonDynamo2007}, for certificates storage in Google's Certificate Transparency~\cite{GoogleCT, CTwork21, rfc9162}, and states/transactions storage in blockchains~\cite{Dryja_UTREEXO_2019,BaileySankagiri_FC21,HederaWP1,HederaStateTree,wood2014ethereum,Hopwood2023}.

\subsection{The Problem of Interest} 
\vspace{-5pt}

In this work, we investigate 
the problem of \textit{private retrieval of Merkle proofs} from a Merkle tree described as follows. Suppose that 
$n$ items $(T_i)_{i=1}^n$ are represented 
by a Merkle tree of height $h$ and that the Merkle root 
is made public. We also assume that the Merkle tree is collectively stored at one or more servers. The goal is to design an \textit{efficient} retrieval scheme that allows a client who owns an item $T_i$ to retrieve its corresponding Merkle proof (in order to verify if $T_i$ indeed belongs to the tree) \textit{without} revealing $i$ to the servers that store the Merkle tree. This problem was originally introduced in the work of Lueks and Goldberg~\cite{LueksGoldberg2015} and Kale \et~\cite{kales2019} in the context of Certificate Transparency~\cite{GoogleCT, CTwork21}.

\subsection{Applications} 
\vspace{-5pt}

We demonstrate below 
potential applications that motivate the problem of private retrieval of Merkle proofs. 

\textbf{Certificate Transparency} (CT) was initiated by Google in 2012 to provide transparency and verifiability for website certificate issuance and has become an Internet security standard~\cite{GoogleCT, CTwork21,rfc6962,rfc8446}, adopted in major internet browsers including Chrome and Safari. A \textit{log} in a CT system is an append-only Merkle tree of certificates and precertificates (for more details, see~\cite[p.~17]{rfc9162}). At the time of writing, large CT logs such as Cloudfare's Nimbus2024 and DigiCert's Yeti2024 contain hundreds of million or even more than a \textit{billion} certificates (valid 1/1/2024-1/1/2025) for the case of Google's Xenon2024~\cite{MerkleTown}. 
However, in CT, an HTTPS client must fetch a Merkle proof to verify the validity of a certificate either via an auditor or by itself, which raises an immediate privacy concern: the log or the auditor can track the websites that the client is visiting or has previously visited (see~\cite[Secs. 10.5, 11.2]{ietf-trans-gossip-05}, \cite{EskandarianMesseriBonneuBoneh2017,kales2019,Kwon_etal_2023}).
Our solution would allow the client to efficiently verify a website certificate without revealing which website it is visiting. 



\textbf{A blockchain's stateless (or state-compact) client}, as opposed to a full node, does not store the entire state of the chain but can still verify the correctness of its operation by downloading state proofs. The chain's state is usually large. For example, Bitcoin's state is represented by a fast-growing list (currently at 170 \textit{million}\footnote{\url{https://www.blockchain.com/explorer/charts/utxo-count}}) of unspent transactions outputs (UTXOs)\footnote{Most blockchains adopt the UTXO model (similar to cash transactions) or the account model (resembles how bank accounts work).}. There have been several proposals to organize the Bitcoin UTXOs list into a single Merkle tree~\cite{Dryja_UTREEXO_2019} or a forest of perfect Merkle trees as in UTREEXO~\cite{BaileySankagiri_FC21} to support stateless clients.  A \textit{bridge} server (see~\cite{Dryja_UTREEXO_2019, BaileySankagiri_FC21}), which stores all the state trees at all times, is responsible for producing the Merkle proof for each UTXO. A stateless client, who has the root of such a tree, can verify the validity of a new transaction by retrieving a Merkle proof from bridge servers for each of its UTXOs, ensuring that the transaction spends from a set of valid UTXOs. Merkle trees are actually used as state trees storing all UTXOs in other chains such as Hedera~\cite{HederaStateTree,HederaWP1} and Neptune~\cite{NeptuneWP_2021}. Our TreePIR 
will allow the stateless clients in such blockchains to efficiently download the Merkle proof of a UTXO without revealing them to the bridge servers. This ensures that even if the bridge server is compromised or under surveillance, the attacker cannot pinpoint or target specific users based on their transaction verification requests.

\begin{figure}
    \centering    
    \includegraphics[scale=0.95]{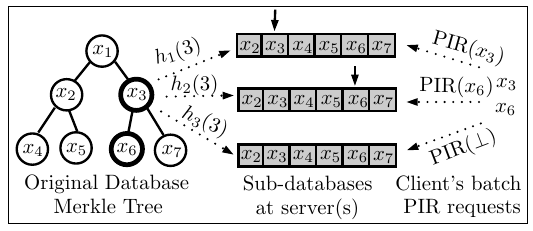}
    \caption{An illustration of the \textit{Probabilistic Batch Code} (PBC) approach~\cite{angel2018} for private retrieval of a Merkle proof $(x_3,x_6)$ in a (swapped) Merkle tree of height $h=2$. There are $1.5h = 3$ sub-databases and \textit{each tree node $x_i$ is replicated three times} and stored at three sub-databases indexed by the hash functions $h_1, h_2, h_3$. To privately retrieve $(x_3,x_6)$ (the root $x_1$ is publicly known), the client must send $1.5h=3$ PIR queries to three sub-databases, one of which is useless (PIR($\perp$)) but required to guarantee the privacy.}
    \label{fig:pbc_toy}
\end{figure}

\textbf{Merkle Tree Ladder} mode of operation was recently proposed by researchers from VeriSign to allow a signer to prove to a verifier that a particular message in a growing set of messages has been signed, without signing every message~\cite{FreglyHarveyKaliskiSheth_CTRSA_2023}. The trick is to let the signer create a forest of perfect Merkle trees with leaves storing the hashes of the messages and then sign the root hashes only. When a verifier need to verify a particular message, the signer simply sends the corresponding Merkle proof (including the signed root hash) to the verifier, who then can verify that the message is indeed included in that Merkle tree with a properly signed root. TreePIR will 
provide an efficient mechanism for the verifier to verify a message without revealing it to the signer.  

\begin{figure}
    \centering 
    \includegraphics{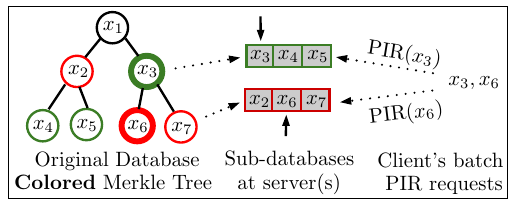}
    \caption{An illustration of \textit{our coloring-based approach} for batch private retrieval of a Merkle proof $(x_3,x_6)$ in a (swapped) Merkle tree of height $h=2$. \textit{Each tree node $x_i$ is stored in exactly one sub-database} according to its assigned color, hence incurring no storage redundancy. As the result, our scheme uses only $h=2$ sub-databases, each of which is of size \textit{half} of that in PBC~\cite{angel2018}, and the client only computes and sends $h=2$ PIR queries.}
    \label{fig:coloring_toy}
\end{figure}

\subsection{Batch-PIR Generic Solution} 
\label{subsec:batchPIR}
\vspace{-5pt}

The problem of private retrieval of a Merkle proof from a Merkle tree can be solved by any \textit{batch Private Information Retrieval} (batch-PIR) scheme, in which a client retrieves a \textit{batch} of $h$ data items $\{x_{k_1},x_{k_2},\ldots,x_{k_h}\}$ from the database $(x_1,x_2,\ldots,x_N)$ stored at one or more servers \textit{privately}, i.e., without revealing $\{k_1,k_2,\ldots,k_h\}$ to the server(s). Ordinary PIR schemes, which correspond to the case of batch size one ($h=1$), were first introduced in the seminal work of Chor-Goldreich-Kushilevitz-Sudan \cite{Chor1995}, followed by Kushilevitz-Ostrovsky~\cite{kushilevitz1997}, and have been extensively studied in the literature. Batch codes (BC)~\cite{ishai2004} and combinatorial batch codes (CBC)~\cite{stinson2009} are powerful primitives that allow one to conveniently construct a batch-PIR scheme from an ordinary PIR scheme, leveraging the plethora of PIR schemes existing in the literature. 
\textit{In a nutshell}, a BC/CBC constructs $m$ \textit{sub-databases} from the original one so that any batch of $h$ items can be privately retrieved by sending \textit{one} (ordinary) PIR query to \textit{every} sub-database. 
A BC/CBC with \textit{fewer sub-databases} of \textit{smaller sizes} lead to a batch-PIR with faster server and client computation times and less communication.

Traditional (deterministic) batch codes have small sub-database sizes (hence, lower server computation time) but use a large number of them, resulting in high communication costs. \textit{Probabilistic batch codes} (PBC), introduced by Angel-Chen-Laine-Setty~\cite{angel2018}, 
is a probabilistic relaxation of CBC that offers a better trade-off between the number of sub-databases and their sizes. Similar to a BC/CBC, a PBC can be combined with a PIR scheme to create a batch-PIR.
More specifically, a PBC uses $w$ independent hash functions $h_1,\ldots,h_w$ to distribute $x_i$'s in the original database to $w$ among $m$ sub-databases 
indexed by $h_1(i),\ldots,h_w(i)$, $i=1,\ldots,n$. Hence, the total storage overhead is $wN$.
In its typical setting, $w = 3$ and $m = 1.5h$. 
The client then uses Cuckoo hashing~\cite{pagh2004} to find a one-to-one mapping between the $h$ items to be retrieved and some $h$ sub-databases, and sends one PIR query to each of the $h$ sub-databases to privately retrieve these items. It also sends a dummy (useless) PIR query to each of the remaining $m-h$ sub-databases (for privacy). PBC 
forms an essential component in state-of-the-art batch-PIR schemes~\cite{angel2018,mughees2022vectorized,Pirana2024}. 

\begin{table}[htb!]
\centering
\setlength{\tabcolsep}{0.5pt}
    \caption{A comparison of TreePIR and related batch-PIR approaches 
    when applied to a perfect Merkle tree of height $h$ and $N\hspace{-2pt}=\hspace{-2pt}2^{h+1}\hspace{-3pt}-\hspace{-2pt}2$ nodes. 
    Note that the number of sub-databases and their sizes determine the storage, computation, and communication costs of the scheme (see Section~\ref{subsec:ProblemDescription}).}
    \label{table:BatchPIRcomparison}
\begin{tabular}{p{1.6cm}|p{1.6cm}|p{1.7cm}|p{1.65cm}|p{1.7cm}}
    \hline
    \Tstrut
    \textbf{Approaches} & \textbf{Total storage} & \textbf{Number of sub-databases} 
    & \textbf{Sub-database size} & \textbf{Indexing complexity}\\
    \hline \cline{1-5}
    \Tstrut
    Subcube code \cite{ishai2004}  ($\ell\geq 2$) & $N h^{\log_2\frac{\ell + 1}{\ell}}$ & $h^{\log_2{(\ell + 1)}}$ & $\dfrac{N}{h^{\log_2\ell}}$ & $\Omega\big(h^{\log_2(\ell + 1)}$ $\times N\big)$\Bstrut\\ 
    \hline
    \Tstrut
    Balbuena\hspace{-1pt} graph \cite{rawat2016} & $2N$ & $2(h^3 - h)$ & $\dfrac{N}{h^3 - h}$ & $\Omega(N)$\Bstrut\\ 
    \hline
    \Tstrut
    CBC\qquad\qquad \cite[Thm. \hspace{-2pt}2.7]{stinson2009} & $hN\hspace{-2pt} -\hspace{-2pt} (h\hspace{-2pt} - \hspace{-2pt}1) \times {m  \choose {h - 1}}$ & $m$ \qquad \qquad$\hspace{-2pt}\big(\hspace{-2pt}{m \choose {h - 1}}\hspace{-2pt} \leq\hspace{-2pt} \frac{N}{h - 1}\hspace{-2pt}\big)\hspace{-2pt}$ & $\frac{hN}{m}\hspace{-2pt} -\hspace{-2pt} \frac{{{m - 1} \choose {h\hspace{-1pt} -\hspace{-1pt} 2}}}{m \hspace{-1pt}-\hspace{-1pt} h + 1}$ & 
    $\Omega(Nm)$ $\Omega(h)\hspace{-2pt} \text{ if } m\hspace{-2pt}=\hspace{-2pt}h$\Bstrut\\
    \hline
    \Tstrut
    PBC~\cite{angel2018} & $3N$ & $1.5h$ & $2N/h$ & $\Omega(N)$\\
    \hline
    \Tstrut
    \textbf{TreePIR} & $N$ & $h$ & $N/h$ & $\O(h^3)$\\
    \hline \cline{1-5}
\end{tabular}
\vspace{5pt}
    \vspace{-10pt}
\end{table}

\subsection{Our Proposal}
\vspace{-5pt}

In this work, we first show that Probabilistic Batch Codes, although provide an elegant \textit{generic} solution for the batch-PIR problem, turn out to have significant drawbacks when applied to the problem of private retrieval of Merkle proofs on Merkle trees. Especially, PBC's indexing strategies require the client to download an \textit{index} that is \textit{almost as large as the entire Merkle tree} for large trees, rendering it impractical. 
We then propose \textit{TreePIR}, a \textit{storage-optimal} solution based on a novel concept of \textit{balanced ancestral coloring} of a binary tree, in which tree nodes are assigned one of the $h$ colors so that nodes with ancestor-descendant relationship (i.e. nodes belonging to the same Merkle proof) have distinct colors \textit{and} each color is assigned to almost the same number of nodes. See Fig.~\ref{fig:toy} for examples of balanced ancestral coloring for perfect trees of height $h=1,2,3$. 

A balanced ancestral coloring of a height-$h$ Merkle tree partitions its nodes into $h$ parts of almost equal sizes so that every root-to-leaf path intersects with each part in exactly one node. The nodes in the same part have the same color and form a sub-database. Hence, by sending a single PIR query to each sub-database, all nodes in a root-to-leaf path can be privately retrieved. Having a unique color, each node appears in exactly one sub-database, leading to the minimum total storage. As the total storage is the product of the number of sub-databases (proportional to the communication cost) and their averaged size (proportional to the server computation time), TreePIR achieves the best trade-off between communication and computation costs 
among \textit{all} batch-code-based approaches (see Table~\ref{table:BatchPIRcomparison}, Column 2). Especially, TreePIR 
possesses a \textit{fast indexing} with \textit{polylog} complexity, allowing the client to find all required PIR indices in \textit{milliseconds} for trees with \textit{billions} of nodes.

A toy example of our approach for a Merkle tree of height $h=2$ is given in Fig.~\ref{fig:coloring_toy}. For convenience, we consider instead a \textit{swapped} Merkle tree (by swapping sibling nodes), in which a Merkle proof corresponds to a root-to-leaf path. Note that the tree root is publicly known. 
In the typical setting of PBC~\cite{angel2018, mughees2022vectorized}, there are $1.5h$ sub-databases 
and each tree node is replicated and stored in three different sub-databases. 
By contrast, TreePIR 
requires only $h$ sub-databases, each of which has size \textit{half} of that of the PBC. TreePIR client computes and sends only $h$ PIR queries as opposed to $1.5h$ queries in PBC. 

\begin{figure}[t]
\centering
\includegraphics[scale=0.95]{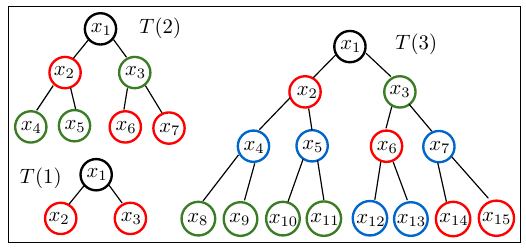}
\caption{\textit{Balanced ancestral colorings} of perfect binary trees $T(h)$ with $h = 1,2,3$. Nodes that are ancestor and descendant have different colors, and the color classes have sizes almost equal.}
\label{fig:toy}
\end{figure}

Our approach is orthogonal to the optimization techniques proposed by Lueks-Goldberg~\cite{LueksGoldberg2015}, Kale \et~\cite{kales2019}, Mughees-Ren~\cite{mughees2022vectorized}, and Liu \et~\cite{Pirana2024}. Employing TreePIR 
will further improve their schemes (see Section~\ref{subsec:comparisons}).

\subsection{Contributions}
\vspace{-5pt}

Our main contributions are summarized below.
\begin{itemize}
\item We propose \textit{TreePIR}, an efficient approach to privately retrieve a Merkle proof based on the novel concept of \textit{balanced ancestral coloring} of binary trees. 
TreePIR 
outperforms the state-of-the-art approach based on PBC~\cite{angel2018}, with $3\times$ less storage, $1.5\times$ less communication, $1.5$-$2\times$ 
faster server computation/client query generation times, $8$-$60\times$ faster setup, and $19$-$160\times$ faster indexing (even when ignoring PBC's index download). 
\item TreePIR 
requires no replication of the tree nodes, hence achieving \textit{optimal} total storage, i.e. zero storage redundancy. Existing works~\cite{ishai2004, Balbuena2009, stinson2009, angel2018} all require large redundancy, e.g. $200$\% in~\cite{angel2018} (see Table~\ref{table:BatchPIRcomparison}).
\item We develop a \textit{fast indexing} algorithm for TreePIR with $\O(h^3)$ space/time complexity, which is a huge improvement from the complexity $\Omega(2^h)$ of PBC~\cite{angel2018}. It finds required sub-indices in a tree of 64 \textit{billion} leaves in under a millisecond. 
\item TreePIR 
hinges on our discovery of a \textit{necessary and sufficient condition} for the existence of ancestral colorings and an efficient \textit{divide-and-conquer} algorithm that generates a coloring in $\Theta(n\log\log(n))$ operations on the perfect binary tree of $n$ leaves. 
Trees with \textit{billions} of nodes can be colored in \textit{minutes}.
\end{itemize}


The paper is organized as follows. Basic concepts 
are discussed in Section~\ref{sec:preliminaries}. Section~\ref{sec:OurProposalRelatedWorks} presents our approach to the problem of private retrieval of Merkle proofs based on ancestral tree colorings and comparisons to related works. We develop in Section~\ref{sec:Algorithm} a necessary and sufficient condition for the existence of an ancestral coloring and an efficient tree coloring algorithm. 
Experiments and evaluations are discussed in Section~\ref{sec:ExpandEval}. We conclude the paper in Section~\ref{sec:Conclusion}.

\section{Preliminaries}
\label{sec:preliminaries}
\vspace{-5pt}

For basic concepts on graphs, trees, coloring, Merkle tree and Merkle proof, please refer to Appendix~\ref{app:graph_tree}. We use $[n]$ to denote the set $\{1,2,\ldots,n\}$.

\begin{figure}[t]
\centering
\includegraphics[scale=0.93]{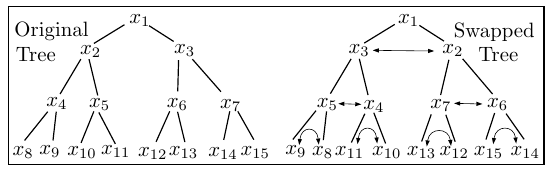}
\caption{Illustration of an original Merkle tree and a \textit{swapped} tree, in which sibling nodes, e.g., $x_2$ and $x_3$, are swapped. A Merkle proof in the original Merkle tree (consisting of the siblings of nodes along a root-to-leaf path) forms a root-to-leaf path in the swapped tree. \textit{We henceforth only work with swapped trees}.}
\label{fig:merkle_swapped}
\end{figure}

\subsection{Swapped Merkle Tree}
\label{subsec:MTMP}
\vspace{-5pt}

To facilitate the discussion using tree coloring, we consider the so-called \textit{swapped} Merkle tree, which is obtained from a Merkle tree by swapping the positions of every node and its sibling (the root node is the sibling of itself and stays unchanged). 
The nodes in a Merkle proof in the original tree now form a \textit
{root-to-leaf path} in the swapped Merkle tree, which is more convenient to handle. 
For example, in Fig.~\ref{fig:merkle_swapped}, the nodes in the Merkle proof $(x_{11},x_4,x_3)$ form a root-to-leaf path (excluding the root) in the swapped tree.

\subsection{Batch Private Information Retrieval}
\label{subsec:PIR}
\vspace{-5pt}

Private Information Retrieval (PIR) was first introduced by Chor-Goldreich-Kushilevitz-Sudan~\cite{Chor1995} and has since become an important area in data privacy. In a PIR scheme, one or more servers store a database of $n$ data items $\{x_1,x_2,\ldots,x_N\}$ while a client wishes to retrieve an item $x_j$ without revealing the index $j$ to the server(s). Depending on the way the privacy is defined, we have \textit{information-theoretic} PIR (IT-PIR)~\cite{Chor1995} or \textit{computational} PIR (cPIR)~\cite{kushilevitz1997}. 
Some recent noticeable practical developments 
include SealPIR~\cite{angel2018, SealPIR2022}, which is based on Microsoft SEAL (Simple Encrypted Arithmetic Library)~\cite{SealPIR2022}, and MulPIR from a Google research team~\cite{AliMultPIR2019}.

The notion of batch-PIR extends that of PIR to the setting where the client wants to retrieve a \textit{batch} of items. We define below a computational batch-PIR scheme. The information-theoretic version can be defined similarly. The problem of privately retrieving a Merkle proof from a Merkle tree 
is a special case of the batch-PIR problem: the tree nodes form the database and the proof forms a batch.

\begin{definition}[batch-PIR]\label{def::bPIR}
A (single-server) $(n,h)$ {\em batch-PIR scheme} consists of three algorithms described as follows.
\begin{itemize}
    \item $\bq \leftarrow \Q(1^\lambda, n, B)$ is a randomized {\em query-generation} algorithm for the client. It takes as input the database size $n$, the security parameter $\lambda$, and a set $B\subseteq [n]$, $|B|=h$, and outputs the query $\bq$.
    \item $\br\leftarrow\R(1^\lambda, \bx, \bq)$ is a deterministic {\em response-generation} algorithm used by the server, which takes as input the database~$\bx$, the security parameter $\lambda$, the query $\bq$, and outputs the response $\br$. 
    \item $\{x_i\}_{i \in B}\leftarrow\E(1^\lambda, B, \br)$ is a deterministic {\em answer-extraction} algorithm run by the client, which takes as input the security parameter $\lambda$, the index set $B$, the response $\br$ from the server, and reconstructs $\{x_i\}_{i\in B}$.
\end{itemize}
\end{definition}
\vspace{-5pt}
\begin{definition}[Correctness]\label{def::correctness}
A batch-PIR scheme is {\em correct} if the client can retrieve $\{x_i\}_{i\in B}$ from the response given that all parties correctly follow the protocol. More formally, for every $\bx$ and $B\subset [n]$, $|B|=h$, it holds that
$\Pr[\bq \leftarrow \Q(1^\lambda, n, B), \br\leftarrow\R(1^\lambda, \bx, \bq) \colon \{x_i\}_{i\in B}\hspace{-2pt}=\hspace{-2pt}\E(1^\lambda, B, \br)]\hspace{-2pt} =\hspace{-2pt} 1$.
\end{definition}

We denote by $\lambda\hspace{-2pt}\in\hspace{-2pt} \mathbb{N}$ the security parameter, e.g., $\lambda\hspace{-2pt} =\hspace{-2pt} 128$, and $\negl(\lambda)$ the set of \textit{negligible functions} in $\lambda$.
A positive-valued function $\varepsilon(\lambda)$ belongs to $\negl(\lambda)$ if for every $c\hspace{-2pt}>\hspace{-2pt}0$, there exists a $\lambda_0\in \mathbb{N}$ such that $\varepsilon(\lambda)<1/\lambda^c$ for all $\lambda > \lambda_0$.

\begin{definition}[Computational Privacy]\label{def::privacy}
The privacy of a PIR scheme requires that a computationally bounded server cannot distinguish, with a non-negligible probability, between the distributions of any two queries $\bq(B)$ and $\bq(B')$. That is, for every database $\bx$ of size $n$, for every $B, B'\subset [n]$, $|B|=|B'|=h$, and for every probabilistic polynomial time algorithm $\cA$, it is required that (following Cachin \textit{et al.}~\cite{cachin1999})
\begin{multline*}
\big|\Pr[\bq(B) \rand \Q(1^\lambda,n,B) \colon \cA(1^\lambda,n,h,\bq(B))=1] -\\ \Pr[\bq(\hspace{-1pt}B'\hspace{-1pt})\hspace{-2pt} \rand \hspace{-2pt} \Q(1^\lambda,n,B') \hspace{-1pt}\colon \hspace{-1pt} \cA(1^\lambda,n,h,\bq(B'))\hspace{-2pt}=\hspace{-2pt}1]\big|\hspace{-2pt} \in\hspace{-2pt} \negl(\lambda).
\end{multline*}
\end{definition}

Note that in our approach to batch-PIR, which is similar to most related works~\cite{ishai2004, angel2018, mughees2022vectorized}, one splits the database into independent sub-databases and applies an existing PIR on each. Hence, the privacy of our batch-PIR reduces to that of the underlying PIR in a straightforward manner. 

\section{Private Retrieval of Merkle Proofs}
\label{sec:OurProposalRelatedWorks}
\vspace{-5pt}

We first describe the problem of private retrieval of Merkle proofs and then propose TreePIR, an efficient solution 
based on the novel concept of tree coloring. 
We also introduce a fast algorithm with no communication overhead to resolve the \textit{sub-index problem}, described as follows: after partitioning a height-$h$ tree into $h$ sub-databases $C_1,C_2,\ldots,C_h$ of tree nodes, one must be able to efficiently determine the sub-index of an arbitrary node $j$ in $C_i$, for every $i=1,2,\ldots,h$. We note that PBC~\cite{angel2018} indexing has complexity $\O(N)$ in time or space, 
where $N=2^{h+1}-2$ is the number of nodes in the tree (excluding the root). TreePIR 
only requires the client to perform $\O(h^3)$ operations. 
Finally, we explain how to use TreePIR to improve several existing schemes in the literature~\cite{LueksGoldberg2015,angel2018,mughees2022vectorized,Pirana2024}.

\subsection{Problem Description}
\label{subsec:ProblemDescription}
\vspace{-5pt}

Given a (perfect) Merkle tree stored at one or more servers, we are interested in designing an \textit{efficient} private retrieval scheme in which a client can send queries to the server(s) and retrieve an arbitrary Merkle proof without letting the server(s) know which proof is being retrieved.
As discussed in Sections~\ref{sec:intro} and~\ref{subsec:MTMP}, this problem is equivalent to the problem of designing an efficient private retrieval scheme for every root-to-leaf path in a perfect binary tree (with height $h$ and $n=2^h$ leaves).
An efficient retrieval scheme should have \textit{low} \textit{storage}, \textit{computation}, and \textit{communication overheads}.
The server(s) should \textit{not} be able to learn which path is being retrieved based on the received queries. 
This is an instance of the batch-PIR problem (Definition~\ref{def::bPIR}) in which the $N$ tree nodes form the database and the $h$ nodes in a root-to-leaf path form a (special) batch.

\textbf{Performance metrics for batch-PIR.} Suppose that there are $m$ sub-databases stored at one or more servers. Moreover, the client generates and sends one PIR query to each sub-database to retrieve $h$ items. Clearly, the number $m$ of sub-databases and their sizes (assuming they are all approximately $s$) are the main factors that decide the cost of running the batch-PIR. The total storage $ms$ represents the \textit{trade-off} between computation and communication.

More specifically, let $\S(s), \Cq(s), \Ce(s)$, and $\B(s)$ denote the server computation time, client query generation time, client extraction time, and communication cost per pair of query-response, respectively, of the \textit{underlying PIR scheme} on a database of size $s$. Then a batch-PIR with $m$ sub-databases of size $s$ has max/average server computation time $\S(s)$, total server computation time $m\S(s)$, client query generation time $m\Cq(s)$, client extraction time $h\Ce(s)$, and communication cost $m\B(s)$. The total storage is $ms$. Note that the \textit{max} server computation time is relevant  in the parallel setting (one server/thread per sub-database), whereas the \textit{total} server computation time is relevant in the sequential setting (one server handling all sub-databases sequentially). Here we assume all servers/threads have the same computation/storage/networking capacities for simplicity. TreePIR 
does extend to the more general heterogeneous setting.



\subsection{Our Solution}
\label{subsec:parallelApproach}
\vspace{-5pt}

To motivate 
TreePIR, let us start with the trivial 
scheme \textit{$h$-Repetition}, which uses $m=h$ sub-databases, each contains the \textit{entire} Merkle tree of size $s=N$. The client sends $h$ PIR queries to $h$ sub-databases to retrieve privately all nodes of a Merkle proof. 
This scheme is wasteful because each tree node is replicated $h$ times, leading to large sub-databases.
Existing approaches all require some levels of replication: $2\times$, $3\times$, $h\times$, and $h^{\log_2 \frac{\ell+1}{\ell}}\times$ in~\cite{Balbuena2009, angel2018, stinson2009}, and \cite{ishai2004}, respectively (Table~\ref{table:BatchPIRcomparison}). 
It turns out that to privately retrieve Merkle proofs, \textit{node replication is not needed}. 



\begin{figure}
    \centering \includegraphics[scale=1]{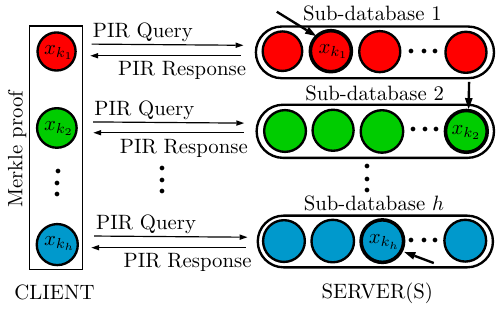}
    \caption{An illustration of our approach. First, the nodes of the (swapped) Merkle tree of height $h$ are partitioned into $h$ sub-databases, each corresponds to one color. The client runs $h$ PIR schemes independently on $h$ sub-databases to privately retrieve $h$ nodes of a Merkle proof. This is possible because our coloring ensures that each Merkle proof contains $h$ nodes of different colors.}
    \label{fig:coloring-based-PIR}
\end{figure}

Our proposal is to \textit{partition} the (swapped) Merkle tree into $h$ \textit{balanced} (disjoint) parts/sub-databases of sizes $\lfloor N/h \rfloor$ or $\lceil N/h \rceil$ (for brevity, we can set the sub-database size as $s=N/h$, ignoring the rounding), and let the client run a PIR scheme on $h$ independent sub-databases (see Fig.~\ref{fig:coloring-based-PIR}).
Note that partitioning the tree into $h$ sub-databases is equivalent to coloring its nodes with $h$ colors: two nodes have the same colors if any only if they belong to the same sub-database. To ensure that sending one PIR query to each sub-database is sufficient to retrieve all $h$ nodes, it is critical that \textit{every root-to-leaf path contains exactly one node of each color}. Equivalently, nodes in the same root-to-leaf path (having ancestor-descendant relationship) must have different colors. This motivates the new concept of \textit{ancestral coloring} defined below.
Compared to $h$-Repetition, the coloring-based solution has the same number of sub-databases but with $h\times$ smaller sizes, hence reducing the total storage and the server computation by a factor of $h$. 
\begin{definition}[Ancestral Coloring]
A (node) coloring of a rooted tree of height $h$ using $h$ colors $1,2,\ldots,h$ is referred to as an \textit{ancestral coloring} if it satisfies the Ancestral Property defined as follows. 
\begin{itemize}
    \item (\textbf{Ancestral Property}) Each color class $C_i$, which consists of tree nodes with color $i$, doesn't contain any pair of nodes $u$ and $v$ so that $u$ is an \textit{ancestor} of $v$. In other words, nodes that have ancestor-descendant relationship must have different colors. 
\end{itemize}    
An ancestral coloring is called \textit{balanced} if it also satisfies the Balanced Property defined below.
\begin{itemize}
    \item (\textbf{Balanced Property}) $\left||C_i|-|C_j|\right| \leq 1$, $\forall i,j\leq h$. 
\end{itemize}
\end{definition}
For brevity, we henceforth use $i$ to represent a tree node instead of $x_i$. Note that a trivial ancestral coloring is the coloring by layers, i.e., for a perfect binary tree, $C_1 = \{2,3\}$, $C_2 = \{4,5,6,7\}$, etc. The layer-based coloring, however, is not balanced. 
In Section~\ref{sec:Algorithm}, we will present a near-linear-time divide-and-conquer algorithm that finds a balanced ancestral coloring for every perfect binary tree. In fact, our algorithm is capable of finding an ancestral coloring for every \textit{feasible} color sequences $(c_1,c_2,\ldots,c_h)$, where $c_i$ denotes the number of tree nodes having color $i$, not just the balanced ones.
Thus, TreePIR 
can accommodate servers with \textit{heterogeneous} storage/computational capacities. Previous works do not have this flexibility.

Our coloring-based parallel private retrieval scheme of Merkle proofs (TreePIR) is presented in Algorithm~\ref{algo:coloring-based-PIR}.
Note that the pre-processing step is performed \textit{once} per Merkle tree and can be done offline. The time complexity of this step is dominated by the construction of a balanced ancestral coloring, which is $\mathcal{O}(N\log\log(N)) = \mathcal{O}\big(2^h\log(h)\big)$ (see Section~\ref{sec:Algorithm}). 
In fact, the ancestral coloring can be generated once and used for all (perfect) Merkle trees of the same height, regardless of the contents stored it their nodes. 
Nodes in the tree are indexed by $1,2,\ldots,N+1$ in a top-down left-right manner, where $1$ is the index of the root. 

\newpage
\floatname{algorithm}{Algorithm}
\begin{algorithm}[htb!]
    \begin{algorithmic}[1] 
    \vspace{2pt}
    \STATE{\textbf{Input}: A Merkle tree, tree height $h$, and a leaf index $j$;}
    \hrule
    \vspace{2pt}
    \textbf{//Pre-processing:}
    \textcolor{gray}{\scriptsize {// Performed offline \underline{once} per Merkle tree/database}}
    \STATE{Generate the swapped Merkle tree $T(h)$;}
    \STATE{Find a balanced ancestral coloring $\{C_i\}_{i=1}^h$ for $T(h)$;}
    \STATE{Sub-database $\T_i$ stores tree nodes indexed by $C_i$, $i\hspace{-2pt}\in\hspace{-2pt} [h]$;}\\
    \vspace{2pt}
    \hrule
    \vspace{2pt}
    \textbf{//Batch-PIR:} \textcolor{gray}{\scriptsize {// Run the same PIR scheme on $h$ sub-databases}}
    \STATE{Client finds the indices $k_1,\ldots,k_h$ of the nodes in the root-to-leaf-$j$ path by setting $k_h:= j$ and $k_\ell := \lfloor k_{\ell+1}/2\rfloor$ for $\ell=h-1,h-2,\ldots,1$;} 
    \STATE{Client finds the index $\jkl$ of the node $k_\ell$ in its corresponding color class $C_{\il}$, for $\ell\in[h]$;}  \textcolor{gray}{\scriptsize {// Indexing $\mathcal{O}(h^3)$}}
    \FOR{$\ell = 1$ to $h$}
      \STATE Client and server run a PIR scheme to retrieve node indexed $\jkl$ from each sub-database $\T_{\il}$;
    \ENDFOR
    \STATE{Client forms the Merkle proof from the retrieved nodes;}
    \end{algorithmic}
\caption{\textbf{TreePIR}: Coloring-Based Private Retrieval of Merkle Proofs on Merkle Trees}
\label{algo:coloring-based-PIR}
\end{algorithm}
\vspace{-10pt}

\begin{example}
Taking the perfect tree of height three in Fig.~\ref{fig:toy}, TreePIR first finds a balanced ancestral coloring, e.g. $C_1\hspace{-2pt}=\hspace{-2pt}\{2,6,14,15\}$, $C_2\hspace{-2pt}=\hspace{-2pt}\{8,9,10,11,3\}$,  $C_3\hspace{-2pt}=\hspace{-2pt}\{4,5,12,13,7\}$ (line~3). Note how the elements in each $C_i$ are listed using a \textit{left-to-right order}, that is, the nodes on the left in the tree are listed first. 
This order guarantees a \textit{fast indexing} in a later step.
Next, suppose that the client wants to retrieve all the nodes in the root-to-leaf-$11$ path, i.e., $j=11$. The client calculates the indices of these nodes as $k_3\hspace{-2pt}=\hspace{-2pt}j\hspace{-2pt}=\hspace{-2pt}11$, $k_2\hspace{-2pt}=\hspace{-2pt}\lfloor 11/2\rfloor \hspace{-2pt}=\hspace{-2pt} 5$, and $k_1\hspace{-2pt}=\hspace{-2pt}\lfloor 5/2\rfloor\hspace{-2pt}=\hspace{-2pt}2$ (line~5). It then uses an indexing algorithm (see below) to obtain the index of each node in the corresponding color class, namely, node $k_1\hspace{-2pt}=\hspace{-2pt}2$ is the first node in $C_1$ ($j_{k_1}\hspace{-2pt}=\hspace{-2pt}1$), node $k_2\hspace{-2pt}=\hspace{-2pt}5$ is the second node in $C_3$ ($j_{k_2}\hspace{-2pt}=\hspace{-2pt}2$), and node $k_3\hspace{-2pt}=\hspace{-2pt}11$ is the fourth node in $C_2$ ($j_{k_3}\hspace{-2pt}=\hspace{-2pt}4$) (line~6).
Finally, knowing all the indices of these nodes in their corresponding sub-databases, the client sends one PIR query to each sub-database to retrieve privately all nodes in the desired root-to-leaf path (lines~7-9).
\end{example}

\textbf{Indexing.} For the client to run a PIR scheme on a sub-database, it must be able to determine the sub-index $\jkl$ of each node $\kl$ in $C_{\il}$, $\ell \in [h]$ (line 6, Algorithm~\ref{algo:coloring-based-PIR}). We refer to this as the \textit{sub-index problem}. This problem was also discussed in~\cite[p. 970]{angel2018} and in~\cite[p.~439]{mughees2022vectorized}. Existing solutions~\cite{angel2018, AngelSetty2016, mughees2022vectorized} require space or time complexity in $\Omega(N)=\Omega(2^{h+1})$, which is unavoidable for solutions based on probabilistic batch codes. Other batch-code-based solutions~\cite{ishai2004, rawat2016, stinson2009} also require similar or higher indexing complexities, except for a special case with CBC (see Table~\ref{table:BatchPIRcomparison}). TreePIR 
allows an elegant indexing algorithm that incurs only $\O(h^3)$ space and time with no communication overhead. The trick is to let the client run a \textit{miniature} version of the coloring algorithm to retrieve the required sub-indices. More details can be found in Section~\ref{subsec:indexing}.

Theorem~\ref{thm:parallel_complexity} (See Appendix~\ref{app:proof_thm_main} for the proof) summarizes the discussion so far on the privacy and overhead of TreePIR 
(Algorithm~\ref{algo:coloring-based-PIR}). Note that apart from an \textit{exponentially} cheaper indexing algorithm, TreePIR 
achieves \textit{optimal} total storage, which was not the case for existing works, and obtains significant 
improvements in all metrics compared to PBC~\cite{angel2018}. 


        
\begin{theorem}
\label{thm:parallel_complexity}
    Assume that the underlying PIR scheme in TreePIR 
    is correct and computationally private, and with server computation time $\S(d)$, client query generation time $\Cq(d)$, client extraction time $\Ce(d)$, and communication cost $\B(d)$ per query-response pair for a database of size $d$.
    Also, the input Merkle tree is perfect with height $h\hspace{-2pt}\in\hspace{-2pt} \poly(\lambda)$ and $N\hspace{-2pt} = \hspace{-2pt} 2^{h+1}\hspace{-2pt}-\hspace{-2pt}2$ nodes. Then the following statements hold. 
    \begin{enumerate}
        \itemindent=-5pt
        \item TreePIR is correct and computationally private according to Definitions~\ref{def::correctness} and~\ref{def::privacy}, respectively. 
        \item TreePIR has $h$ sub-databases of size $\approx N/h$. It has max server computation time $\S(N/h)$, total server computation time $h\S(N/h)$, client query-generation time $h\Cq(N/h)$, client answer-extraction time $h\Ce(N/h)$, and communication cost $h\B(N/h)$. The total storage is $N$ (optimal) and the indexing time is $\O(h^3)$.
    \end{enumerate}
\end{theorem}

\begin{remark}
\label{rm:parallel_obvious}
A trivial approach to achieve optimal total storage is to use a layer-based ancestral coloring, i.e. $h$ sub-databases correspond to $h$ layers of the tree.
However, the maximum server computation time is $\S(N/2)$ nodes (the bottom layer), which is much 
higher compared to TreePIR. 
\end{remark}
\vspace{-5pt}

\subsection{Related Works and Performance Comparisons}
\label{subsec:comparisons}

As discussed before, the problem of private retrieval of Merkle proofs can be treated as a special case of the batch-PIR problem, in which an arbitrary \textit{subset of items} instead of a single one needs to be retrieved (see Definition~\ref{def::bPIR}). As a Merkle proof consists of $h$ nodes, the batch size is $h$. 
Thus, existing batch-PIR schemes can be used to solve this problem. 
However, a Merkle proof does \textit{not} consist of an \textit{arbitrary} set of $h$ nodes as in the batch-PIR's setting. Indeed, there are only $n = 2^h$ such proofs compared to $\binom{2n-2}{h}$ subsets of random $h$ nodes in an $n$-leaf Merkle tree. TreePIR 
exploits this fact to optimize the storage overhead 
compared to similar approaches using batch codes~\cite{ishai2004, stinson2009, rawat2016, angel2018, mughees2022vectorized}.
In this section, we review existing batch-PIR schemes and provide a comparison with TreePIR 
(Table~\ref{table:BatchPIRcomparison}). 
We also show that TreePIR 
can be combined with some schemes~\cite{LueksGoldberg2015,kales2019, angel2018,mughees2022vectorized} to improve their performance.


\textbf{Batch codes} (BC), introduced by Ishai-Kushilevitz-Ostrovsky-Sahai \cite{ishai2004}, encode a database of size $N$ into $m$ sub-databases (or \textit{buckets}) 
so that the client can retrieve \textit{every} batch of $h$ items by downloading at most one item from each sub-database.
A batch code can be used 
to construct a batch-PIR scheme.
For example, 
in the $\ell$-\textit{subcube code} with $\ell\hspace{-2pt}=\hspace{-2pt}2$ and $h\hspace{-2pt}=\hspace{-2pt}2$, a database $X\hspace{-2pt}=\hspace{-2pt}(x_i)_{i=1}^N$ is transformed into $\ell\hspace{-2pt}+\hspace{-2pt}1\hspace{-2pt}=\hspace{-2pt}3$ sub-databases: $X_1 \hspace{-2pt}\triangleq\hspace{-2pt} (x_i)_{i=1}^{N/2}$, $X_2 \hspace{-2pt}\triangleq\hspace{-2pt} (x_i)_{i={N/2} + 1}^{N}$, and $X_3 \hspace{-2pt}\triangleq\hspace{-2pt} X_1 \hspace{-2pt}\bigoplus\hspace{-2pt} X_2 \hspace{-2pt}=\hspace{-2pt} \big(x_i \hspace{-2pt}\oplus\hspace{-2pt} x_{i + {N/2}}\big)_{i=1}^{N/2}$. Suppose the client wants to privately retrieve two items $x_{j_1}$ and $x_{j_2}$. If $x_{j_1}$ and $x_{j_2}$ belong to different sub-databases then the client can send two PIR queries to these two sub-databases for these items, and another to retrieve a random item in the third sub-database. If $x_{j_1}$ and $x_{j_2}$ belong to the same sub-database, for example, $X_1$, then the client will send three parallel PIR queries to retrieve $x_{j_1}$ from $X_1$, $x_{j_2+N/2}$ from $X_2$, and $x_{j_2}\oplus x_{j_2+N/2}$ from $X_3$. The last two items can be XOR-ed to recover $x_{j_2}$. 

More generally, by recursively applying the above construction $\log_2h$ times with $\ell\geq 2$, the $\ell$-subcube code~\cite{ishai2004} has a total storage of $N h^{\log_2\frac{\ell + 1}{\ell}}$ with $m = h^{\log_2{(\ell + 1)}}$ sub-databases, each of size $s = \frac{N}{h^{\log_2\ell}}$.
Using a larger $\ell$ reduces the storage overhead and the sub-database size (hence reducing the server computation time) but results in more sub-databases (increasing the communication cost).
Another example of batch codes was the one developed in~\cite{rawat2016}, which was based on small regular bipartite graph with no cycles of length less than eight originally introduced by Balbuena~\cite{Balbuena2009}. The indexing of these batch codes, without any clever trick (which is currently missing), would require the client to either download or reconstruct the entire code, both of which require a space or time complexity of at least $\Omega\big(Nh^{\log_2\frac{\ell+1}{\ell}}\big)$ and $\Omega(2N)$, respectively. By contrast, TreePIR indexing complexity is $\O(h^3)$, \textit{exponentially} faster.

\textbf{Combinatorial batch codes} (CBC), introduced by Stinson-Wei-Paterson~\cite{stinson2009}, are special batch codes in which each sub-database/bucket stores \textit{subsets} of items (no encoding is allowed as in the subcube code). 
Our ancestral coloring can be considered as a \textit{relaxed} CBC that allows the retrieval of not every but \textit{some} subsets of $h$ items (corresponding to the Merkle proofs). 
By exploiting this relaxation, our scheme only stores $N$ items across all $h$ sub-databases (no redundancy). By contrast, an optimal CBC, with $h$ sub-databases, requires $\Theta(hN)$ total storage (\hspace{-0.5pt}\cite[Thm. 2.2]{stinson2009}). However, due to its simple construction, the client can compute the required sub-indices in $\Theta(h)$ steps via mathematical formulas. The more general CBC with $m$ sub-databases (\hspace{-0.5pt}\cite[Thm. 2.7]{stinson2009}) incurs very high space complexity of $\Omega(Nm)$ (the entire code) and time complexity $\mathcal{O}(Nh^2)$ to find a maximum bipartite matching. 

\textbf{Probabilistic batch codes} (PBC) have been described in Section~\ref{subsec:batchPIR}. Here we discuss the two indexing strategies of PBC proposed in~\cite[p.~970]{angel2018}, both of which are prohibitively expensive. 
PBC Indexing Strategy $\#1$ requires the client to first download a map (a hash table) that stores $3N$ (key,value) pairs when a minimum number of three hash functions are used in the Cuckoo hashing. The key $k\in [N]$ represents the tree node index, while the value $v = (i, j_k)$ gives the index $i\in [1.5h]$ of the sub-database containing that node together with the node's relative position $j_k\in [2N/h]$ within the sub-database. Hence, each (key,value) pair must be represented by at least $\log_2 N + \log_2 (1.5h) + \log_2 \frac{2N}{h}$ bits, resulting in a map of size at least $4N(\log_2 N + \log_2 (1.5h) + \log_2 \frac{2N}{h})$ bits (assuming a standard load factor of 0.75 for the hash table). The tree itself contains $N$ nodes and can be represented by an array of $N$ elements (tree node indexed $i$ has children at indices $2i+1$ and $2i+2$) of size $256$ bits. Hence, it has size around $256N$ bits. The ratio between the map size (the index) and the Merkle tree size (the entire database) is about 1/3, 2/3, and 1 when $N$ is $2^{10}$, $2^{20}$, and $2^{30}$, respectively. Thus, \textit{downloading the index is almost as expensive as downloading the entire Merkle tree for large trees}. While a Bloom filter can help reduce the map size, the client would need to perform $\O(N)$ hash operations to find the correct indices, which is slow for large trees. 
PBC Indexing Strategy \#2 requires the client to build the indexing map itself while discarding the unused part of the map. The indexing requires $3N$ hashing operations, which alone takes about 100 seconds for a tree of height $h = 24$ (using the implementation from~\cite{mughees2022vectorized} with some optimization) and about \textit{seven hours} when $h = 32$. 

\textbf{Vectorized batch-PIR} (VBPIR) was recently proposed by Mughees-Ren~\cite{mughees2022vectorized} (S\&P'23), which also uses PBC~\cite{angel2018} but with a more batch-friendly vectorized homomorphic encryption. Instead of running independent PIR schemes for the sub-databases/buckets, this scheme \textit{merges} the client queries and the server responses to reduce the communication overhead. Replacing the PBC component by our coloring will reduce the number of sub-databases in their scheme by a factor of $1.5\times$ (from $1.5h$ to $h$) and the sub-database size by a factor of $2\times$ (from $2N/h$ to $N/h$). This will further optimize its storage overhead and reduce the server/client running time for the retrieval of Merkle proofs. Last but not least, the indexing space/time complexity will be improved \textit{exponentially} from $\Omega(2^{h+1})$ to $\O(h^3)$.

\textbf{PIRANA} \cite{Pirana2024} (S\&P'24), the latest batch-PIR, also 
employs PBC for batch retrieval, hence inheriting its costly indexing. 
PBC+PIRANA first uses PBC to generate $w=3$ copies of each data item and then organizes these $3N$ items into $m = 3N/s$ sub-databases of fixed size $s\in \{4096, 8192\}$, which is also the number of slots in a ciphertext.
Thus, when retrieving a Merkle proof, replacing the PBC component in PIRANA with TreePIR would reduce the total storage from $3N$ to $N$ items and the number of  sub-databases from $3N/s$ to $N/s$. Hence, the total server computation time of TreePIR+PIRANA would be $3\times$ lower than PBC+PIRANA.  
Note that PIRANA employs the \textit{constant-weight-code} trick from~\cite{mahdavi2022} to reduce the number of queries from $m$ to $m'<m$, with ${{m'}\choose{k}} \geq 3N$, where $k$ is the Hamming weight of the code. Hence, TreePIR+PIRANA client would send $\sqrt[k]{3}$ fewer queries.
TreePIR+PIRANA would also incur exponentially lower indexing complexity.  

\begin{table}[htb!]
\centering
\setlength{\tabcolsep}{4pt}
    \caption{TreePIR can be combined with Lueks-Goldberg's \textit{multi-client} IT-PIR scheme ($c$ clients retrieve $c$ proofs privately) to reduce its server computation complexity by a factor of $h^{0.80735}\times$.}
\begin{tabular}{p{1.8cm}|c|c|p{1.7cm} }
 \hline
\textbf{ } &  \multicolumn{2}{c|}{\textbf{Lueks-Goldberg (LG)~\cite{LueksGoldberg2015}}} & \textbf{LG \hspace{-2pt}+\hspace{-2pt} TreePIR} \\
 \hline \cline{1-4}
    \textbf{Sub-databases} &  $\sqrt{N}\times\sqrt{N}$ & $\sqrt{\frac{hN}{2}}\times\sqrt{\frac{hN}{2}}$ & $h\hspace{-2pt}\left(\hspace{-2pt}\sqrt{\frac{N}{h}}\hspace{-2pt}\times\hspace{-2pt}\sqrt{\frac{N}{h}}\hspace{-2pt}\right)$\\ 
 
    \textbf{Multiplications} & ${(ch)}^{0.80735}N$ & $c^{0.80735}\frac{hN}{2}$ &  $c^{0.80735}N$\\ 
 
    \textbf{Additions} & $\frac{8}{3}{(ch)}^{0.80735}N$ & $\frac{8}{3}c^{0.80735}\frac{hN}{2}$ &  $\frac{8}{3}{c}^{0.80735}N$\\
 \hline \cline{1-4}
\end{tabular}
    \label{table:mulClientITPIR}
\end{table}


Lueks-Goldberg~\cite{LueksGoldberg2015} combine the Strassen's efficient matrix multiplication algorithm \cite{strassen1969} and Goldberg's IT-PIR scheme~\cite{goldberg2007} to speed up the batch-PIR process. 
In the one-client setting, the database with $N$ nodes is represented as a $\sqrt{N}\times\sqrt{N}$ matrix $D$. 
In the original PIR scheme~\cite{goldberg2007}, each server performs a vector-matrix multiplication of the query vector $\bq$ and the matrix $D$. In the batch PIR version~\cite{LueksGoldberg2015}, the $h$ PIR query vectors $\bq_1,\ldots,\bq_h$ are first grouped together to create an $h\times\sqrt{N}$ query matrix $Q$. Each server then applies Strassen's algorithm to perform the fast matrix multiplication $QD$ to generate the responses, incurring a computational complexity of $\mathcal{O}\big(h^{0.80735}N\big)$ (instead of $\mathcal{O}(hN)$ as in an ordinary matrix multiplication). In the multi-client setting where $c$ clients requests $c$ Merkle proofs, each server performs a multiplication of matrices of size $(ch)\times \sqrt{N}$ and $\sqrt{N}\times \sqrt{N}$ in time $\O((ch)^{0.80735}N)$.  
Our coloring can be applied on top of this scheme to improve its running time. More specifically, an ancestral coloring partitions the tree nodes into $h$ sub-databases, represented by $h$ $\sqrt{N/h}\times\sqrt{N/h}$ matrices. Each server then computes $h$ multiplications on matrices of size $c\times \sqrt{N/h}$ and $\sqrt{N/h}\times\sqrt{N/h}$ in time $\O(c^{0.80735}N)$, hence reducing the (original) computation time by a factor of $h^{0.80735}\times$. 

The work of Kales-Omolola-Ramacher~\cite{kales2019} aimed particularly for the Certificate Transparency infrastructure and improved upon Lueks-Goldberg's work~\cite{LueksGoldberg2015} for the case of a single client with multiple queries over large growing Merkle trees with \textit{billions} leaves. Their idea is to split the original tree into multiple tiers of smaller sub-trees 
with heights $10$-$16$, where the sub-trees at the bottom store all the certificates. The Merkle proofs for the certificates within each bottom sub-tree (static, not changing over time) can be \textit{embedded} inside a Signed Certificate Timestamp, which is included in the certificate. 
These proofs can be used to verify the membership of the certificates within the bottom sub-trees. A batch PIR scheme such as Lueks-Goldberg~\cite{LueksGoldberg2015} can be used for the client to retrieve the Merkle proof of the root of each bottom sub-tree within the top sub-tree. 

We do not include the scheme from~\cite{kales2019} in 
Table~\ref{table:BatchPIRcomparison} as it was designed particularly for Certificate Transparency with a tailored modification, hence requiring extra design features outside of the scope of the batch-PIR problem. 
To combine TreePIR with~\cite{kales2019}, we will need to extend the theory of ancestral coloring to \textit{growing} trees, which remains an intriguing question for future research. Note that our method still works if the database is organized as a growing \textit{forest} of perfect Merkle trees (as in~\cite{BaileySankagiri_FC21,FreglyHarveyKaliskiSheth_CTRSA_2023}). However, queries for nodes belonging to small trees of the forest will have lower privacy. A potential extension of our approach to Sparse Merkle Tree~\cite{DahlbergPP16} is discussed in Appendix~\ref{app:SMT}.

\section{A Divide-And-Conquer Algorithm for Finding Ancestral Colorings of Perfect Binary Trees}
\label{sec:Algorithm}
\vspace{-5pt}

\subsection{The Color-Splitting Algorithm}
\vspace{-5pt}

We develop in this section a divide-and-conquer algorithm that finds an ancestral coloring in almost linear time. All missing proofs can be found in Appendix~\ref{app:main_proofs}.
First, we need the definition of a color sequence. 

\begin{definition} [Color sequence]
\label{def:colorconfiguration}
A color sequence of \textit{dimension} $h$ is a sorted sequence of positive integers $\vec c = (c_1,c_2,\ldots,c_h)$, where $c_1\leq c_2\leq\cdots \leq c_h$. The sum $\sum_{i=1}^h c_i$ is referred to as the sequence's \textit{total size}. 
The element $c_i$ is called the \textit{color size}, which represents the number of nodes in a tree that will be assigned Color $i$. 
The color sequence $\vec c$ is called \textit{balanced} if the color sizes $c_i$ differ from each other by at most one, or equivalently, $c_j-c_i\leq 1$ for all $h\geq j > i \geq 1$. It is assumed that the total size of a color sequence is equal to the total number of nodes in a tree (excluding the root).
\end{definition}

\textbf{A high-level description.} The Color-Splitting Algorithm (CSA) starts from a color sequence $\vec c$ and proceeds to color the tree nodes, \textit{two sibling nodes at a time}, from the top of the tree down to the bottom in a recursive manner while keeping track of the number of remaining nodes that can be assigned each color. 
Note that the elements of a color sequence $\vec c$ are always sorted in \textit{non-decreasing} order, e.g., $\vec c=$ [4 Red, 5 Green, 5 Blue], and CSA always tries to color all the children of the current root $R$ with \textit{either} the same color $1$ if $c_1=2$, \textit{or} with two different colors $1$ and $2$ if $2 < c_1 \leq c_2$. 
This rule stems from the intuition that one must use colors of smaller sizes on the top layers and colors of larger sizes on the lower layers (more nodes).
The remaining colors are carefully \textit{split} 
between the left and the right subtrees of $R$ while ensuring that the color used for each child node will no longer be used for the subtree rooted at that node (to guarantee the Ancestral Property). The key technical challenge is to ensure that the split is done in a way that \textit{prevents the algorithm from getting stuck}, i.e., to make sure that it always has ``enough'' colors to produce  ancestral colorings for \textit{both} subtrees. 
If a \textit{balanced} ancestral coloring is required, CSA starts with the balanced color sequence $\vec{c}^*=[c^*_1,\ldots,c^*_h]$, in which $|c^*_i-c^*_j|\leq 1$ for every $i,j\in [h]$. 

Before introducing rigorous notations and providing a detailed algorithm description, let start with an example of how the coloring algorithm works on $T(3)$.

\vspace{-15pt}
\begin{example}
\label{ex:toy}
\begin{figure}[htb!]
\centering
\includegraphics[scale=0.7]{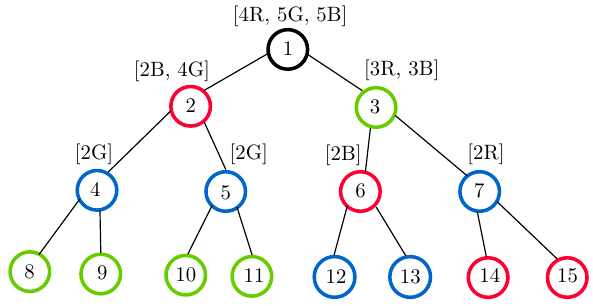}
\caption{An illustration of the Color-Splitting Algorithm being applied to $T(3)$ and the initial color sequence $\vec c=[4,5,5]$. The feasible color sequences used at different node (regarded as root nodes of subtrees) are also given. The splits of color sequences follow the rule in \textbf{FeasibleSplit}(), while the assignment of colors to nodes follow \textbf{ColorSplittingRecursive}().}
\label{fig:split}
\end{figure}

The Color-Splitting Algorithm starts from the root node 1 with the balanced color sequence $\vec{c}^*=[4,5,5]$, which means that it is going to assign Red (Color 1) to four nodes, Green (Color~2) to five nodes, and Blue (Color~3) to five nodes (see~Fig.~\ref{fig:split}). We use $[4\text{R},5\text{G},5\text{B}]$ instead of $[4,5,5]$ to keep track of the colors. Note that the root needs no color. 
According to the algorithm's rule, as Red and Green have the lowest sizes, which are greater than two, the algorithm colors the left child (Node 2) red and the right child (Node 3) green. The dimension-$3$ color sequence $\vec c=[4\text{R},5\text{G},5\text{B}]$ is then split into two dimension-$2$ color sequences $\vec a=[2\text{B}, 4\text{G}]$ and $\vec b=[3\text{R},3\text{B}]$. 
How the split works will be discussed in detail later, however, we can observe that both resulting color sequences have a valid total size $6 = 2+2^2$, which matches the number of nodes in each subtree. Moreover, as $\vec a$ has no Red and $\vec b$ has no Green, the Ancestral Property is guaranteed for Node 2 and Node 3, i.e., these two nodes have different colors from their descendants. The algorithm now repeats what it does to these two subtrees rooted at Node 2 and Node 3 using $\vec a$ and $\vec b$. For the left subtree rooted at 2, the color sequence $\vec a=[2\text{B}, 4\text{G}]$ has two Blues, and so, according to CSA's rule, the two children 4 and 5 of 2 both receive Blue as their colors. The remaining four Greens are split evenly into $[2\text{G}]$ and $[2\text{G}]$, to be used to color $8,9,10$, and~$11$. The remaining steps are carried out in the same manner. 
\end{example}

We observe that not every color sequence $c$ of dimension~$h$, even with a valid total size $\sum_{i=1}^h c_i$, can be used to construct an ancestral coloring of the perfect binary tree $T(h)$. For example, it is easy to verify that there are no ancestral colorings of $T(2)$ using $\vec c=[1,5]$ (1 Red, 5 Greens), and no ancestral colorings of $T(3)$ using $\vec c = [2,3,9]$ (2 Reds, 3 Greens, 9 Blues).
It turns out that there exists a very neat characterization of \textit{all} color sequences of dimension $h$ for that an ancestral coloring of $T(h)$ exists. We refer to them as $h$-\textit{feasible color sequences} (see Definition~\ref{def:feasibility}). 

\begin{definition}[Feasible color sequence]
\label{def:feasibility}

A (sorted) color sequence $\vec c$ of dimension $h$ is called $h$-\textit{feasible} if it satisfies the following two conditions:
\begin{itemize}
    \item (C1) $\sum_{i=1}^\ell c_i \geq \sum_{i=1}^\ell 2^i$, for every $1\leq \ell \leq h$, and 
    \item (C2) $\sum_{i=1}^h c_i = \sum_{i=1}^h 2^i = 2^{h+1}-2$.
\end{itemize}
Condition (C1) means that Colors $1,2,\ldots,\ell$ are sufficient in numbers to color all nodes in Layers $1,2,\ldots,\ell$ of the perfect binary tree $T(h)$ (Layer $i$ has $2^i$ nodes). Condition (C2) states that the total size of $\vec c$ is equal to the number of nodes in $T(h)$. 
\end{definition}

The biggest challenge in designing CSA is to maintain feasible color sequences at \textit{every} step of the algorithm.

\begin{example}
The following color sequences for the trees $T(1), T(2), T(3)$ (see Fig.~\ref{fig:toy}) are feasible: $[2]$, $[3,3]$, and $[4,5,5]$. The sequences $[2,3]$ and $[3,4,6,17]$ are not feasible: $2+3 < 6=2+2^2$, $3+4+6 < 14 = 2+2^2+2^3$. Clearly, color sequences of the forms $[1,\ldots]$, $[2,2,\ldots,]$ or $[2,3,\ldots]$, or $[3,4,6,\ldots]$ violate (C1) and hence are not feasible. 
\end{example}

\begin{definition}
The perfect binary $T(h)$ is said to be \textit{ancestral $\vec c$-colorable}, where $c$ is a color sequence, if there exists an \textit{ancestral coloring} of $T(h)$ in which precisely $c_i$ nodes are assigned Color $i$, for all $i=1,\ldots,h$. Such a coloring is called an \textit{ancestral $\vec c$-coloring} of $T(h)$.
\end{definition}

Lemma~\ref{lem:path} states that every ancestral coloring for $T(h)$ requires at least $h$ colors. Missing proofs are in Appendix~\ref{app:main_proofs}. 


\begin{lemma}
\label{lem:path}
If the perfect binary tree $T(h)$ is ancestral $\vec c$-colorable, where $\vec c=[c_1,\ldots,c_{h'}]$, then $h' \geq h$.
Moreover, if $h'=h$ then all $h$ colors must show up on nodes along any root-to-leaf path (except the root, which is colorless).
Equivalently, nodes having the same color $i \in \{1,2,\ldots,h\}$ must collectively belong to $2^h$ different root-to-leaf paths.
\end{lemma}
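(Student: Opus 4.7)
The plan is to observe that in $T(h)$, every root-to-leaf path (excluding the root) consists of exactly $h$ nodes, each of which is an ancestor of all nodes below it on that path. Since the Ancestral Property forces any two nodes in an ancestor-descendant relationship to receive different colors, restricting an ancestral $\vec{c}$-coloring to a single such path produces $h$ pairwise distinctly-colored nodes. Hence the palette must contain at least $h$ colors, i.e.\ $h' \geq h$.

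For the second assertion, I would argue directly. Suppose $h' = h$, and fix any root-to-leaf path $P$. By the previous paragraph the restriction of the coloring to $P$ uses $h$ distinct colors from the set $\{1,2,\ldots,h\}$, but since $|P| = h$ and the palette has exactly $h$ elements, every color in $\{1,\ldots,h\}$ must appear on $P$, exactly once.

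For the ``equivalent'' reformulation, I would fix a color $i \in \{1,\ldots,h\}$ and let $S_i$ be the set of root-to-leaf paths of $T(h)$ that pass through some node of color $i$. By the second assertion, $S_i$ coincides with the set of all root-to-leaf paths, so $|S_i| = 2^h$. Conversely, two nodes of color $i$ cannot lie on the same root-to-leaf path (they would then be in an ancestor-descendant relation, violating the Ancestral Property), so the $2^h$ paths are accounted for without double-counting — exactly the claim that the color-$i$ nodes ``collectively belong to $2^h$ different root-to-leaf paths.''

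There is no real obstacle here; the lemma is essentially a direct consequence of the Ancestral Property together with the structural fact that a root-to-leaf path in $T(h)$ has $h$ non-root nodes. The only point requiring a bit of care is the equivalence in the last sentence of the statement, where one should make explicit both that \emph{every} root-to-leaf path contains a color-$i$ node (from the second assertion) and that \emph{at most one} color-$i$ node appears on any such path (from the Ancestral Property), in order to conclude that the color-$i$ nodes cover precisely $2^h$ distinct paths.
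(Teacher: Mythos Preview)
Your proposal is correct and follows essentially the same approach as the paper's own proof: both rely on the fact that the $h$ non-root nodes along any root-to-leaf path are pairwise in an ancestor-descendant relation and hence must receive distinct colors, yielding $h' \geq h$, with the remaining claims following by pigeonhole and the count of $2^h$ root-to-leaf paths. If anything, your argument is more explicit than the paper's, which simply says ``other conclusions follow trivially.''
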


Theorem~\ref{thm:main} characterizes \textit{all} color sequences of dimension $h$ that can be used to construct an ancestral coloring of $T(h)$. Note that the balanced color sequence that corresponds to a balanced ancestral coloring is only a special case among all such sequences. On the other hand, even when starting with a balanced color sequence, once the algorithm reaches lower layers, it still has to deal with imbalanced sequences. As far as we know, there is no existing algorithm that can find a balanced ancestral coloring efficiently.

\begin{theorem}[Ancestral-Coloring Theorem for Perfect Binary Trees]
\label{thm:main}
For every $h\geq 1$ and every color sequence $c$ of dimension $h$, the perfect binary tree $T(h)$ is ancestral $\vec c$-colorable if and only if $\vec c$ is $h$-feasible.
\end{theorem}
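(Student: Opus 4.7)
The plan is to prove both directions by induction on the tree height $h$. The common accounting ingredient is the weight identity $\sum_{v \in C_i} 2^{-d(v)} = 1$, which every color class of an ancestral $\vec c$-coloring satisfies because each root-to-leaf path meets $C_i$ exactly once (Lemma~\ref{lem:path}). This identity forces $|C_i| \ge 2$, and forces $|C_i| \ge 4$ whenever $C_i$ contains no depth-$1$ node.

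For necessity, given an ancestral $\vec c$-coloring of $T(h)$ I would look at the two root-children $u_L, u_R$. Either they carry the same color $j$, in which case the weight identity pins $C_j = \{u_L, u_R\}$ so $c_j = 2$ and every other color has size $\ge 4$; or they carry distinct colors $p, q$, in which case $c_p, c_q \ge 3$ and all other colors have size $\ge 4$. In both subcases, restricting the color classes to the two subtrees $T_L, T_R \cong T(h-1)$ yields ancestral colorings whose color sequences $\vec a, \vec b$ are $(h-1)$-feasible by the inductive hypothesis. Writing $c_i = x_i + y_i$ with $x_i = |C_i \cap T_L|$ and $y_i = |C_i \cap T_R|$, a short case check shows that $(x_i)_{i=1}^h$ sorted is $(1, a_1, \ldots, a_{h-1})$ in both subcases (the ``$1$'' is $u_L$'s contribution to its own color), and analogously $(y_i)$ sorted is $(1, b_1, \ldots, b_{h-1})$. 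Since the sum of any $\ell$ of the $x_i$'s is at least the sum of the $\ell$ smallest, choosing the indices of the $\ell$ smallest $c_i$ and invoking $(h-1)$-feasibility of $\vec a, \vec b$ at prefix $\ell - 1$ gives
\[
\sum_{i=1}^\ell c_i \ge (1 + 2^\ell - 2) + (1 + 2^\ell - 2) = 2^{\ell+1} - 2,
\]
which is (C1); (C2) is immediate from the node count.

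For sufficiency I would construct the coloring recursively. If $c_1 = 2$, color both root-children with color $1$ and split each $c_{i+1}$ as $a_i + b_i$ with $a_i, b_i \ge 2$, so that the sequences $\vec a, \vec b$ (after sorting) are each $(h-1)$-feasible. Feasibility of $\vec c$ at index $\ell + 1$ gives $\sum_{i=2}^{\ell+1} c_i \ge 2(2^{\ell+1} - 2)$, which is exactly twice the target for $\vec a$ and $\vec b$ at prefix $\ell$; distributing the odd $c_{i+1}$'s alternately between ceiling and floor keeps $\bigl|\sum_{k \le m}(a_k - b_k)\bigr| \le 1$ while preserving sortedness, so both partial sums clear $2^{\ell+1} - 2$. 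The inductive hypothesis then colors each subtree. If $c_1 \ge 3$, assign distinct colors $1, 2$ to $u_L, u_R$ and run the same scheme on $(c_1 - 1, c_2 - 1, c_3, \ldots, c_h)$, with the extra constraint that color $1$ is forbidden in $T_L$ and color $2$ in $T_R$.

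The main obstacle lies on the sufficiency side: the split must be simultaneously balanced in total, sortable into two non-decreasing sequences, and feasible at every prefix on both sides. A naive half-half split can break totality when many $c_i$ are odd, and a greedy ``ceilings left, floors right'' split can break prefix-feasibility on the right; the alternating strategy reconciles both, but its correctness requires a careful prefix-sum bookkeeping. The $c_1 \ge 3$ branch is further complicated by the forbidden-color constraints, which produce an asymmetric top-level split and force a short case analysis on the sizes of $c_p - 1$ and $c_q - 1$ relative to the remaining $c_i$.
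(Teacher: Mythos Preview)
Your necessity argument is correct and takes a different route from the paper. The paper argues directly: each color class $C_i$ has weight $\sum_{v\in C_i}2^{h-d(v)}=2^h$ by Lemma~\ref{lem:path}, so colors $1,\ldots,\ell$ together have weight $\ell\cdot 2^h$; but any set of fewer than $\sum_{i=1}^\ell 2^i$ nodes, even placed optimally in the top $\ell$ layers, has weight strictly below $\ell\cdot 2^h$, a contradiction. Your inductive approach instead restricts to the two subtrees, applies the $(h-1)$ case to $\vec a,\vec b$, and bounds $\sum_{i=1}^\ell c_i$ from below via the rearrangement $\sum_{i\in I}x_i\ge 1+\sum_{k=1}^{\ell-1}a_k$. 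Both are valid; the paper's is a two-line counting argument, while yours mirrors the constructive direction and makes the recursive structure explicit.

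For sufficiency you follow essentially the paper's Color-Splitting Algorithm, and you correctly locate the crux. But your claim that the alternating floor/ceiling split ``preserv[es] sortedness'' is false. Take $c_1=2$ and $c_2=c_3=5$: the paper's rule gives $a_2=2,\ b_2=3$, then (since $S_2(a)<S_2(b)$) $a_3=3,\ b_3=2$, so $\vec b=(3,2,\ldots)$ is not sorted. Any run of equal odd $c_i$'s scrambles the halves this way. Since (C1) is a condition on prefix sums of the \emph{sorted} sequence, the balance $\bigl|\sum_{k\le m}(a_k-b_k)\bigr|\le 1$ does not transfer directly. The paper handles this by partitioning $\vec c$ into runs of equal values: at the right endpoint of each run the sorted and unsorted prefix sums agree, so (C1) holds there, and an interpolation lemma (their Lemma~\ref{lem:aux}) pushes (C1) to the interior indices. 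In the $c_1>2$ branch the special entries $a_2=c_2-1$, $a_3=\lceil(c_3+c_1-c_2)/2\rceil$, $b_2=c_1-1$, $b_3=c_2-c_1+\lfloor(c_3+c_1-c_2)/2\rfloor$ can land anywhere in the sorted order, forcing a further case analysis on their positions relative to the runs (Lemmas~\ref{lem:c1=2}--\ref{lem:c1>2} and the auxiliary Lemma~\ref{lem:AB}). The ``careful prefix-sum bookkeeping'' you gesture at is precisely this, and it is the bulk of the paper's technical work; your sketch does not yet supply it.
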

\begin{proof}
The Color-Splitting Algorithm can be used to show that if $\vec c$ is $h$-feasible then $T(h)$ is ancestral $\vec c$-colorable. For the necessary condition, we show that if $T(h)$ is ancestral $\vec c$-colorable then $\vec c$ must be $h$-feasible. Indeed, for each $1\leq \ell\leq h$, in an ancestral $\vec c$-coloring of $T(h)$, the nodes having the same color~$i$, $1\leq i\leq \ell$, should collectively belong to $2^h$ different root-to-leaf paths in the tree according to Lemma~\ref{lem:path}. Note that each node in Layer $i$, $i = 1,2\ldots,h$, belongs to $2^{h-i}$ different paths. Hence, collectively, nodes in Layers $1,2,\ldots,\ell$ belong to $\sum_{i=1}^\ell 2^i\times 2^{h-i}=\ell2^h$ root-to-leaf paths. We can see that each path is counted $\ell$ times in this calculation. 
Note that each node in Layer $i$ belongs to strictly more paths than each node in Layer $j$ if $i < j$. Thus, if (C1) is violated, i.e., $\sum_{i=1}^\ell c_i < \sum_{i=1}^\ell 2^i$, which implies that the number of nodes having colors $1,2\ldots,\ell$ is smaller than the total number of nodes in Layers $1,2,\ldots,\ell$, then the total number of paths (each can be counted more than once) that nodes having colors $1,2,\ldots,\ell$ belong to is strictly smaller than $\ell2^h$. As a consequence, there exists a color $i \in \{1,2,\ldots,\ell\}$ such that nodes having this color collectively belong to fewer than $2^h$ paths, contradicting Lemma~\ref{lem:path}.
\end{proof}

\begin{corollary}
\label{cr:balanced}
A balanced ancestral coloring exists for the perfect binary tree $T(h)$ for every $h\geq 1$.
\end{corollary}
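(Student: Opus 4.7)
The plan is to invoke Theorem~\ref{thm:main}: since it characterizes ancestral $\vec{c}$-colorability of $T(h)$ by the $h$-feasibility of $\vec{c}$, it suffices to exhibit one balanced color sequence of dimension $h$ and total size $S := 2^{h+1}-2$ that is $h$-feasible; the Color-Splitting Algorithm then delivers the coloring. I would construct $\vec{c}^*$ explicitly: write $S = qh + r$ with $q = \lfloor S/h \rfloor$ and $0 \leq r < h$, and set $c_i^* = q$ for $1 \leq i \leq h-r$ and $c_i^* = q+1$ for $h-r < i \leq h$. This sequence is sorted non-decreasingly, has entries differing by at most one, and totals $S$, so it is balanced and satisfies condition (C2) of Definition~\ref{def:feasibility} by construction.

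The main task is to verify (C1), i.e., $\sum_{i=1}^\ell c_i^* \geq 2^{\ell+1}-2$ for every $1 \leq \ell \leq h$. The case $\ell = h$ is exactly (C2). For $\ell < h$, since $c_i^* \geq q$ for all $i$, the partial sum is at least $\ell q$, so it suffices to show
\[
\ell \left\lfloor \frac{2^{h+1}-2}{h}\right\rfloor \;\geq\; 2^{\ell+1}-2,
\]
which, setting $g(x) := (2^{x+1}-2)/x$, amounts to $\lfloor g(h) \rfloor \geq g(\ell)$. The key lemma I would establish is $g(\ell+1) - g(\ell) \geq 1$ for every integer $\ell \geq 1$, via the direct identity
\[
g(\ell+1) - g(\ell) \;=\; \frac{(\ell-1)\, 2^{\ell+1} + 2}{\ell(\ell+1)},
\]
which for $\ell \geq 2$ reduces (after factoring $\ell(\ell+1)-2 = (\ell-1)(\ell+2)$) to the elementary bound $2^{\ell+1} \geq \ell+2$, and holds with equality at $\ell=1$. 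Telescoping yields $g(h) - g(\ell) \geq h-\ell \geq 1$ whenever $h > \ell$, and hence $\lfloor g(h) \rfloor > g(h) - 1 \geq g(\ell)$, as required.

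I expect the only mildly delicate step to be the floor-function bookkeeping in the last display; once the consecutive-gap lemma for $g$ is in hand, the corollary follows by a direct substitution into Theorem~\ref{thm:main}. As a sanity check, the construction reproduces the balanced sequences $[2]$, $[3,3]$, and $[4,5,5]$ for $h = 1, 2, 3$, matching the colorings depicted in Fig.~\ref{fig:toy}.
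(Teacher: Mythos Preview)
Your argument is correct. Both you and the paper reduce the corollary to Theorem~\ref{thm:main} by showing that the balanced sequence $\vec c^*$ is $h$-feasible; the only difference is how (C1) is verified. The paper defers to Corollary~\ref{cr:balance_conf}, whose proof is a one-line application of Lemma~\ref{lem:aux} with $m=0$: that interpolation lemma says that if a sorted sequence whose entries differ by at most one satisfies the partial-sum bound at the two endpoints (trivial at $m=0$, and exactly (C2) at $\ell=h$), then it satisfies it at every index in between. You instead give a direct elementary argument via the growth estimate $g(\ell+1)-g(\ell)\geq 1$ for $g(x)=(2^{x+1}-2)/x$, which is self-contained and avoids invoking the interpolation lemma altogether. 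The paper's route is more economical in context, since Lemma~\ref{lem:aux} is needed anyway for the correctness of the Color-Splitting Algorithm; your route has the advantage of being independent of that machinery and of giving the stronger quantitative statement that the gap in (C1) grows at least linearly in $h-\ell$.
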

\begin{proof}
This follows directly from Theorem~\ref{thm:main}, noting that a \textit{balanced} color sequence of dimension $h$ is also $h$-feasible due to Corollary~\ref{cr:balance_conf} (see Appendix~\ref{app:main_proofs}). 
\end{proof}

We now formally describe the Color-Splitting Algorithm. The algorithm starts at the root $R$ of $T(h)$ with an $h$-\textit{feasible} color sequence $c$ (see \textbf{ColorSplitting$(h,\vec c)$}) and then colors the two children $A$ and $B$ of the root as follows: these nodes receive the same Color 1 if $c_1=2$ or Color 1 and Color 2 if $c_1 > 2$. Next, the algorithm splits the remaining colors in $\vec c$ (whose total size has already been reduced by two) into two $(h-1)$-feasible color sequences $\vec a$ and $\vec b$, which are subsequently used for the two subtrees $T(h-1)$ rooted at $A$ and $B$ (see \textbf{ColorSplittingRecursive$(R,h,\vec c)$}). Note that the splitting rule (see \textbf{FeasibleSplit$(h,\vec c)$}) ensures that if Color $i$ is used for a node then it will \textit{not} be used in the subtree rooted at that node, hence guaranteeing the Ancestral Property.  
We prove in Section~\ref{subsec:correctness} and Appendix~\ref{app:main_proofs} that it is always possible to split an $h$-feasible sequence into two new $(h-1)$-feasible sequences, which guarantees a successful termination if the input of the CSA is an $h$-feasible color sequence.
The biggest hurdle in the proof stems from the fact that after being split, the two color sequences are \textit{sorted} before use, which makes the feasibility analysis rather involved.
We overcome this obstacle by introducing a partition technique in which the elements of the color sequence $\vec c$ are partitioned into groups of elements of equal values (called \textit{runs}) and showing that the feasibility conditions hold first for the end-points and then for all middle-points of the runs. 

\begin{example}
We illustrate the Color-Splitting Algorithm when $h = 4$ in Fig.~\ref{fig:split4}. The algorithm starts with a $4$-feasible sequence $\vec c=[3,6,8,13]$, corresponding to 3 Reds, 6 Greens, 8 Blues, and 13 Purples. The root node 1 is colorless. As $2<c_1=3<c_2=8$, CSA colors 2 with Red, 3 with Green, and splits $\vec c$ into $\vec a = [3\text{B}, 5\text{G}, 6\text{P}]$ and $\vec b = [2\text{R}, 5\text{B}, 7\text{P}]$, which are both $3$-feasible and will be used to color the subtrees rooted at 2 and 3, respectively. Note that $\vec a$ has no Red and $\vec b$ has no Green, which enforces the Ancestral Property for Node 2 and Node 3. 
The remaining nodes are colored in a similar manner.
\end{example}

\vspace{-10pt}
\begin{figure}[htb!]
\centering
\includegraphics[scale=0.62]{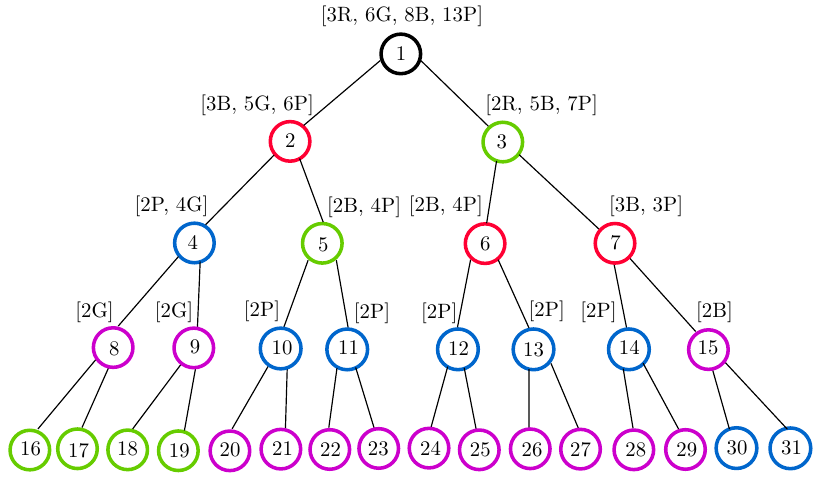}
\caption{An illustration of the Color-Splitting Algorithm being applied to $T(4)$ and the initial (imbalanced) color sequence $\vec c=[3, 6, 8, 13]$. The feasible color sequences used at different nodes (regarded as root nodes of subtrees) are also given. }
\label{fig:split4}
\end{figure}
\vspace{-5pt}

\vspace{-10pt}
\subsection{A Proof of Correctness}
\label{subsec:correctness}
\vspace{-5pt}

We establish the correctness of 
Algorithm~\ref{algo:CSA} in Lemma~\ref{lem:correctness}. The missing proofs can be found in Appendix~\ref{app:main_proofs}. 

\begin{lemma}
\label{lem:if_terminate_then_success}
The Color-Splitting Algorithm, if it terminates successfully, will generate an ancestral coloring. 
\end{lemma}


\begin{lemma}[Correctness of Color-Splitting Algorithm]
\label{lem:correctness}
If the initial input color sequence $\vec c$ is $h$-feasible then the Color-Splitting Algorithm terminates successfully and generates an ancestral $\vec c$-coloring for $T(h)$. Its time complexity is $O\big(2^{h+1}\log h\big)$, almost linear in the number of tree nodes.
\end{lemma}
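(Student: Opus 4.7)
The plan is to prove the lemma in two parts: correctness (termination with a valid coloring) and time complexity. The Ancestral Property of the output coloring already follows from Lemma~\ref{lem:ap}, so correctness reduces to showing that the recursion never gets ``stuck,'' i.e., that \textbf{FeasibleSplit} always returns two $(h-1)$-feasible sequences when fed an $h$-feasible one. I would establish this by strong induction on $h$.

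For the base case $h=1$, an $h$-feasible sequence must be $\vec c=[2]$, and CSA simply assigns Color~1 to both children of the root with no further recursion, trivially succeeding. For the inductive step, assuming the claim holds for all heights smaller than $h$, it suffices to show that \textbf{FeasibleSplit}$(h,\vec c)$ returns two $(h-1)$-feasible sequences $\vec a$ and $\vec b$. Condition (C2)---that each output has total size $2^{h}-2$---follows from direct bookkeeping in the two cases: two units are ``spent'' coloring the children of the current root, and the remaining total $\sum_{i=1}^h c_i-2 = 2^{h+1}-4$ is split into two halves of size $2^{h}-2$ each, by design of the arithmetic in both branches of the procedure.

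The main obstacle is verifying (C1) for $\vec a$ and $\vec b$ \emph{after} the final sorting step. Before sorting, the split is engineered so that the coordinate-wise partial sums $S_i(a)$ and $S_i(b)$ differ by at most one at every index $i$, which, combined with $h$-feasibility of $\vec c$, implies that the \emph{unsorted} sequences already clear the feasibility thresholds $\sum_{j=2}^{i}2^{j-1}$ for $T(h-1)$. Sorting, however, can only decrease prefix sums, so the inequalities must be re-checked on the sorted sequences. Here I would use the run-partition technique foreshadowed in the text preceding the lemma: group equal values of $\vec c$ into maximal ``runs,'' verify the target prefix-sum inequalities first at the end-points of each run (where the sorted and pre-sorted sequences agree up to permutation within the run, so no information is lost), and then interpolate across each run using the fact that consecutive entries within a run differ by at most one. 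I expect this sorted-sequence check---and in particular the delicate asymmetric seeding of $\vec a,\vec b$ from $(c_1,c_2)$ in Case~2 when $c_1>2$, where the second-smallest colour is torn unevenly between the two subtrees---to be the technically hardest part of the correctness argument.

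For the time complexity, the algorithm visits each of the $2^{h+1}-1$ nodes of $T(h)$ exactly once. At a node of depth $d$, \textbf{FeasibleSplit} performs $O(h-d)$ arithmetic and comparison operations to produce $\vec a$ and $\vec b$, followed by sorting two sequences of length $h-d-1$ in time $O((h-d)\log(h-d))$. Summing over all nodes yields
\[
\sum_{d=0}^{h-1} 2^d \cdot O\bigl((h-d)\log(h-d)\bigr) \;\leq\; O(\log h)\cdot \sum_{d=0}^{h-1} 2^d(h-d),
\]
and a routine evaluation (substituting $k=h-d$ and using $\sum_{k\geq 1} k/2^k = 2$) gives $\sum_{d=0}^{h-1} 2^d(h-d)=O(2^{h+1})$. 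Hence the total running time is $O(2^{h+1}\log h)$, which is $O(n\log\log n)$ in the number of tree nodes $n=2^{h+1}-2$, as claimed.
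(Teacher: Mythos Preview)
Your proposal is correct and follows essentially the same approach as the paper: both reduce correctness to Lemma~\ref{lem:ap} plus the claim that \textbf{FeasibleSplit} maps $h$-feasible sequences to $(h-1)$-feasible ones (which the paper isolates as Lemmas~\ref{lem:c1=2} and~\ref{lem:c1>2}, proved via exactly the run-partition technique and Lemma~\ref{lem:aux} that you outline), and both obtain the $O(2^{h+1}\log h)$ bound by accounting for the $O((h-d)\log(h-d))$ sorting cost at each depth-$d$ node. The only cosmetic difference is that the paper phrases the complexity analysis as a recurrence $C(n)=2C(n/2)+\alpha h\log h$ solved by backward substitution, whereas you sum directly over layers; the computations are equivalent.
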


\floatname{algorithm}{Algorithm}
\begin{algorithm}[htb!]
\caption{\textbf{ColorSplitting}($h,\vec c$)}
\begin{algorithmic}
    \STATE \textcolor{gray}{\scriptsize {// The algorithm finds an ancestral $\vec c$-coloring of $T(h)$, where $\vec c$ is $h$-feasible} }
	\STATE Set $R := 1$;\quad \textcolor{gray}{\scriptsize {// the root of $T(h)$ is 1, which requires no color}}
	\STATE ColorSplittingRecursive$(R, h, \vec c)$;
\end{algorithmic}
\label{algo:CSA}
\end{algorithm}
\vspace{-20pt}

\floatname{algorithm}{Procedure}
\begin{algorithm}[!htb]
\renewcommand{\thealgorithm}{}
\caption{\textbf{ColorSplittingRecursive}$(R, h, \vec c)$}
\begin{algorithmic}[1]
    \STATE \textcolor{gray}{\scriptsize {// The procedure colors the children of the root $R$ of the current subtree $T(h)$ of height $h$ and creates feasible color sequences for its left/right subtrees.}} 
    \IF{$h \geq 1$} 
        \STATE $A := 2R$; $B := 2R+1$; \textcolor{gray}{\scriptsize {// left and right child of $R$}}
        \IF{$c_1=2$}
            \STATE Assign Color 1 to both $A$ and $B$;
            \ELSE 
                \STATE Assign Color 1 to $A$ and Color 2 to $B$;
        \ENDIF
        \IF{$h \geq 2$}
            \STATE Let $\vec a$, $\vec b$ be the output of \textbf{FeasibleSplit}$(h,\vec c)$;
            \STATE \textbf{ColorSplittingRecursive}$(A, h-1, \vec a)$; 
            \STATE \textbf{ColorSplittingRecursive}$(B, h-1, \vec b)$;
        \ENDIF
    \ENDIF
\end{algorithmic}
\end{algorithm}
\vspace{-15pt}

\begin{algorithm}[H]
\renewcommand{\thealgorithm}{}
\caption{\textbf{FeasibleSplit}($h, \vec c$)}
\begin{algorithmic}[1]
    \STATE \textcolor{gray}{\scriptsize {// This algorithm splits a (sorted) $h$-feasible sequence into two (sorted) $(h-1)$-feasible ones, which will be used for coloring the subtrees ($h\geq 2$)}}
    \IF[\textbf{Case 1:} $c_1=2$, note that $c_1 \geq 2$ as $\vec c$ is feasible]{$c_1=2$}
        \STATE Set $a_2\hspace{-2pt}:=\hspace{-2pt}\lfloor\hspace{-1pt} c_2/2\hspace{-1pt} \rfloor$;\hspace{-1pt} $b_2\hspace{-2pt}:=\hspace{-2pt}\lceil\hspace{-1pt} c_2/2 \hspace{-1pt}\rceil$;\hspace{-1pt} $S_2(a) \hspace{-2pt}:=\hspace{-2pt} a_2$;\hspace{-1pt} $S_2(b) \hspace{-2pt}:=\hspace{-2pt} b_2$;
        \FOR{$i=3$ to $h$}
            \IF{$S_{i-1}(a) < S_{i-1}(b)$}
                \STATE Set $a_i:=\lceil c_i/2 \rceil$ and $b_i:=\lfloor c_i/2 \rfloor$;
            \ELSE
                \STATE Set $a_i:=\lfloor c_i/2 \rfloor$ and $b_i:=\lceil c_i/2 \rceil$;
            \ENDIF
            \STATE Update $S_i(a)\hspace{-2pt} := \hspace{-2pt}S_{i-1}(a)\hspace{-2pt}+\hspace{-2pt}a_i$;\ $S_i(b) \hspace{-2pt}:=\hspace{-2pt} S_{i-1}(b)\hspace{-2pt}+\hspace{-2pt}b_i$;
        \ENDFOR
    \ELSE[\textbf{Case 2:} $c_1>2$] 
        \STATE Set $a_2\hspace{-2pt}:=\hspace{-2pt}c_2\hspace{-2pt}-\hspace{-2pt}1$;\hspace{-1pt} $b_2\hspace{-2pt}:=\hspace{-2pt} c_1\hspace{-2pt}-\hspace{-2pt}1$;\hspace{-1pt} \textcolor{gray}{\scriptsize {//$b_2$ now refers to the 1st color in $\vec b$}}
        \IF{$h\geq 3$}
            \STATE Set $a_3\hspace{-2pt}:=\hspace{-2pt}\left\lceil \hspace{-2pt}\frac{c_3+c_1-c_2}{2} \hspace{-2pt}\right\rceil$ and $b_3\hspace{-2pt}:=\hspace{-2pt} c_2\hspace{-2pt}-\hspace{-2pt}c_1 \hspace{-2pt}+\hspace{-2pt} \left\lfloor\hspace{-2pt} \frac{c_3+c_1-c_2}{2} \hspace{-2pt}\right\rfloor$;
            \STATE Set $S_3(a) := a_2+a_3$ and $S_3(b) := b_2+b_3$; 
            \FOR{$i=4$ to $h$}
                \IF{$S_{i-1}(a) < S_{i-1}(b)$}
                    \STATE Set $a_i:=\lceil c_i/2 \rceil$ and $b_i:=\lfloor c_i/2 \rfloor$;
                \ELSE
                    \STATE Set $a_i:=\lfloor c_i/2 \rfloor$ and $b_i:=\lceil c_i/2 \rceil$;
                \ENDIF
                \STATE Update $S_i(a) \hspace{-2pt}:=\hspace{-2pt} S_{i-1}(a)\hspace{-2pt}+\hspace{-2pt}a_i$;\ $S_i(b) \hspace{-2pt}:=\hspace{-2pt} S_{i-1}(b)\hspace{-2pt}+\hspace{-2pt}b_i$;
            \ENDFOR
        \ENDIF
    \ENDIF
    \STATE Sort $\vec a = [a_2,a_3,\ldots,a_h]$ and $\vec b = [b_2,b_3,\ldots,b_h]$ in non-decreasing order;
    \RETURN $\vec a$ and $\vec b$;
\end{algorithmic}
\end{algorithm}
\vspace{-10pt}

\vspace{-5pt}

\subsection{Solving the Sub-Index Problem in Time $\O(h^3)$}
\label{subsec:indexing}
\vspace{-5pt}

As discussed in Section~\ref{sec:OurProposalRelatedWorks}, after partitioning a (swapped) Merkle tree into $h$ color classes, in order for the client to make PIR queries to the sub-databases (corresponding to color classes), it must know the sub-index $\jkl$ of the node $\kl$ in the color class $C_{\il}$, for all $\ell=1,2,\ldots,h$ (see Algorithm~\ref{algo:coloring-based-PIR}). 
Trivial solutions including the client storing all $C_i$'s or regenerating $C_i$ itself by running the CSA on its own all require a space or time complexity in $\Omega(N)$. For trees of height $h=30$ or more, these solutions would demand a prohibitively large computational overhead or else Gigabytes of indexing data being sent to the client, rendering the whole retrieval scheme impractical. Fortunately, the way our divide-and-conquer algorithm (CSA) colors the tree also provides an efficient and neat solution for the sub-index problem. 
We describe our proposed solution below. 



\floatname{algorithm}{Algorithm}
\setcounter{algorithm}{2}
\begin{algorithm}[htb]
    \begin{algorithmic}[1] 
    \STATE{\textbf{Input}: Tree height $h$, leaf index\hspace{-2pt} $j\hspace{-2pt} \in\hspace{-2pt} \{2^h\hspace{-2pt},\ldots,2^{h+1}\hspace{-2pt}-\hspace{-2pt}1\hspace{-2pt}\}$, $h$~node indices $k_1,\ldots,k_h$, and an $h$-feasible color sequence $\vec c = (c_1,\ldots,c_h)$;}
    \vspace{2pt}
    \hrule
    \vspace{2pt}
    \STATE Initialize an array $\texttt{idx}[h]$ to hold the output sub-indices of nodes $k_1\hspace{-1pt},\ldots,k_h$\hspace{-1pt} in their corresponding color classes;
    \STATE Initialize an array $\texttt{count}[h] := [0,0,\ldots,0]$; 
    \STATE $R:=1$;
    \FOR{$\ell=1$ to $h-1$}
        \STATE Let $\sbl$ be the sibling node of $\kl$; \textcolor{gray}{\scriptsize {// $\sbl \hspace{-2pt}:=\hspace{-2pt} \kl\hspace{-2pt}-\hspace{-2pt}1$ or $\kl\hspace{-2pt}+\hspace{-2pt}1$}}
        \STATE Assign Color $\il$ to $\kl$ and Color $\ilp$ to $\sbl$; \textcolor{gray}{\scriptsize {// Following lines 3-7 in \textbf{ColorSplittingRecursive}, note that it is possible that $\il=\ilp$}}
        \STATE $\texttt{is\_left} := \text{True}$ if $\kl$ is the \textit{left} child of $R$, and False otherwise; \textcolor{gray}{\scriptsize {// if $\kl=2R$ then it is the left child of $R$}}
        \STATE Let $\vec a$ and $\vec b$ be the output color sequences of \textbf{FeasibleSplit}$(h-\ell+1,\vec c)$; \textcolor{gray}{\scriptsize {// time complexity $\O(h)$}}
        \STATE $\vec c := \vec a$ if $\texttt{is\_left}=\text{True}$, and $\vec c := \vec b$ otherwise;
        \STATE \textbf{UpdateCount}$(\texttt{count}, \il,\ilp, \texttt{is\_left}, \vec a, \vec b)$;\hspace{-1pt} \textcolor{gray}{\scriptsize {// $\O(h^2\hspace{-1pt})$}}
        \STATE $\texttt{idx}[\ell] := \texttt{count}[\il]+1$; \textcolor{gray}{\scriptsize {// the sub-index of $\kl$ in $C_i$}}
        \STATE $R := \kl$; \textcolor{gray}{\scriptsize {// move down to the child node $\kl$}}    
    \ENDFOR
    \STATE Let $i_h$ be the only color left in $\vec c$; \textcolor{gray}{\scriptsize {// $\vec c = [2]$, e.g. two greens}}
    \STATE Update $\texttt{count}[i_h] := \texttt{count}[i_h]+1$ if $k_h$ is the \textit{right} child of $k_{h-1}$; \textcolor{gray}{\scriptsize {// if it is the left child, do nothing}}
    \STATE $\texttt{idx}[h] := \texttt{count}[i_h]+1$; \textcolor{gray}{\scriptsize {// the sub-index of $k_h$ in $C_{i_h}$}}
    \RETURN $\texttt{idx}$;
    \end{algorithmic}
\caption{\textbf{TreePIR-Indexing}}
\label{algo:indexing}
\end{algorithm}
\vspace{-15pt}

\floatname{algorithm}{Procedure}
\renewcommand{\thealgorithm}{}
\begin{algorithm}[htb]
    \begin{algorithmic}[1] 
    \FOR{$i=1$ to $h$}
        \IF{\texttt{is\_left} = False AND $\vec b[i] \neq 0$}
            \STATE Find the number $l_i$ of Color $i$ given to $\vec a$; \textcolor{gray}{\scriptsize {// $\O(h)$}}
            \STATE $\texttt{count}[i] := \texttt{count}[i]+l_i$; \textcolor{gray}{\scriptsize {// Gaining $l_i$ nodes of Color $i$ on the left side of $\kl$ (excluding the sibling)}}
            \IF{$i=\ilp$} 
                \STATE $\texttt{count}[i] := \texttt{count}[i]+1$; \textcolor{gray}{\scriptsize {// Gaining one node of Color $i$ (due to the sibling receiving colore $i$) on the left side of $\kl$}}
            \ENDIF
        \ENDIF  
    \ENDFOR
    \end{algorithmic}
\caption{\textbf{UpdateCount}$(\texttt{count}, \il,\ilp, \texttt{is\_left}, \vec a, \vec b))$}
\end{algorithm}

The main idea of Algorithm~\ref{algo:indexing} is to apply a modified \textit{non-recursive} version of the Color-Splitting Algorithm (Algorithm~\ref{algo:CSA}) \textit{from the root to the leaf} $j$ only, while using an array of size $h$ to keep track of the number of nodes with color $i$ on the \textit{left} of the current node $R=1,k_1,k_2,\ldots,k_h$. 
Note that due to the Ancestral Property, nodes belonging to the same color class are not ancestor-descendant of each other. Therefore, the \textit{left-right} relationship is well defined: for any two nodes $u$ and $v$ having the same color, let $w$ be their common ancestor, then $w \neq u$, $w \neq v$, and $u$ and $v$ must belong to different subtrees rooted at the left and the right children of $w$; suppose that $u$ belongs to the left and $v$ belongs to the right subtrees, respectively, then $u$ is said to be on the left of $v$, and $v$ is said to be on the right of $u$. The key observation is that if a node $k$ has color $i$ and there are $j_k-1$ nodes of color $i$ on its left in the tree, then $k$ has index $j_k$ in the color class $C_i$, assuming that nodes in $C_i$ are listed in the left-to-right order, i.e., left nodes appear first. This left-right order (instead of the natural top-down, left-right order) is crucial for our indexing to work.

\begin{figure}[htb!]
    \centering
    \includegraphics[scale=0.8]{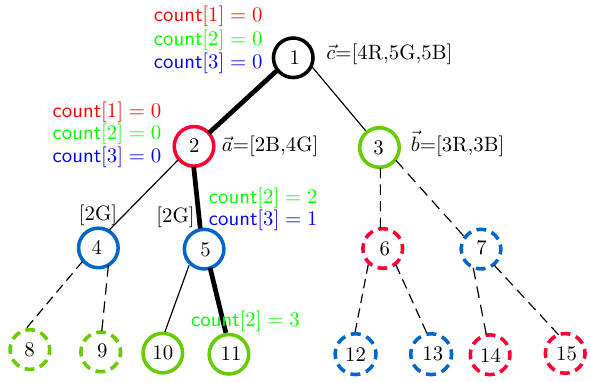}
    \caption{A demonstration of how Algorithm~\ref{algo:indexing} finds the sub-indices of all nodes along the root-to-leaf-11 path in their corresponding color classes. The algorithm performs the color-splitting algorithm (Algorithm~\ref{algo:CSA}) only on the nodes along the path and their siblings. Other nodes (dashed) are ignored. When the current node $k$ receives color $i$, it has sub-index $j_k=\texttt{count}[i]+1$ in the color class $C_i$. Thus, node 2 is the first node among the reds, node 5 the second among the blues, and node 11 the forth among the greens.} 
    \label{fig:indexing_example}
\end{figure}
\vspace{-10pt}


\begin{example}
\label{ex:indexing}
Consider the tree of height $h=3$ in Fig.~\ref{fig:indexing_example}, in which the path 1-2-5-11 is considered. Let colors $i=1,2,3$ denote Red, Green, and Blue, respectively. As the algorithm colors node 2 red and node 3 green, and gives three reds and three blues to the right branch, at node 2 (treated as the current node), the $\texttt{count}$ array is updated as follows: 
\begin{itemize}
    \item $\texttt{count}[1]\hspace{-2pt}=\hspace{-2pt}0$ (unchanged).
    As node 2 receives color $1$ (Red), it has sub-index $j_2 \hspace{-2pt}=\hspace{-2pt} \texttt{count}[1]\hspace{-2pt}+\hspace{-2pt}1\hspace{-2pt}=\hspace{-2pt}1$ in $C_1$.
    \item $\texttt{count}[2]=0$ (unchanged).
    \item $\texttt{count}[3]=0$ (unchanged).
\end{itemize}
The algorithm continues in a similar manner. Once it reaches the leaf 11, it has found all the sub-indices of nodes 2, 5, 11 in the corresponding color classes: $j_2\hspace{-2pt}=\hspace{-2pt}1$, $j_5\hspace{-2pt}=\hspace{-2pt}2$, and $j_{11}\hspace{-2pt}=\hspace{-2pt}4$.  
This is correct given that the color classes are arranged according to the \textit{left-right order}, with $C_1=\{2,6,14,15\}$, $C_2=\{8,9,10,11,3\}$, and $C_3=\{4,5,12,13,7\}$.
\end{example}

\begin{theorem}
\label{thm:indexing}
Algorithm~\ref{algo:indexing} returns the correct sub-indices of nodes along a root-to-leaf path in their corresponding color classes. Moreover, the algorithm has worst-case time complexity $\O(h^3)$ for a perfect binary tree of height $h$.  
\end{theorem}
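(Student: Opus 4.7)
The plan is to establish correctness by induction along the root-to-leaf path, then bound the running time by counting the work done at each of the $h$ path nodes. The key invariant I would maintain is the following: just before Algorithm~\ref{algo:indexing} processes the path node $k_\ell$, the entry $\texttt{count}[i]$ equals the number of tree nodes of color $i$ lying strictly to the left of $k_\ell$, where ``to the left of'' is the order described in Section~\ref{subsec:indexing}. This order is well-defined for same-colored nodes precisely because of the Ancestral Property: for any two nodes $u, v$ sharing a color, they cannot be in ancestor-descendant relation, so if $w$ denotes their least common ancestor, then $u$ and $v$ lie in disjoint subtrees of $w$, and ``left'' is determined by which child of $w$ contains which.

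To prove the invariant, I would induct on $\ell$. The base case $\ell = 0$ is immediate since $\texttt{count}$ is initialized to zero and no node lies to the left of the root. For the inductive step, the algorithm calls \textbf{FeasibleSplit} on the current color sequence $\vec c$ to determine the colors of the two children $A, B$ of $k_{\ell-1}$ and to produce the subsequences $\vec a, \vec b$ that would color the descendants of $A$ and $B$; because \textbf{FeasibleSplit} is deterministic in $\vec c$, these decisions coincide with the ones the full CSA would make globally. If the path continues to the right child $B = k_\ell$, then every color-$i$ node in the subtree rooted at $A$ becomes newly ``to the left'' of $k_\ell$, and the count of such nodes equals the corresponding entry of $\vec a$, plus one if $A$ itself is assigned color $i$; adding this quantity to $\texttt{count}[i]$ restores the invariant. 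If the path continues to the left child $A = k_\ell$, no new node enters the ``to the left'' region, so $\texttt{count}$ is unchanged. The correctness of the returned sub-index then follows at once: when $k_\ell$ itself receives color $i$, the Ancestral Property removes color $i$ from every subsequent $\vec c$ along the path, so $\texttt{count}[i]$ is already the final left-count, and the sub-index in the left-to-right ordering of $C_i$ is $j_{k_\ell} = \texttt{count}[i] + 1$.

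For the complexity, the algorithm visits $h$ nodes along the path; at each such node it performs one call to \textbf{FeasibleSplit} on a sequence of length at most $h$, which costs $\O(h\log h)$ including the final sort, plus an $\O(h)$ pass to fold $\vec a$ into $\texttt{count}$. Summing gives $\O(h^2 \log h) \subseteq \O(h^3)$ time and $\O(h)$ working space for $\vec c$ and $\texttt{count}$, well within the stated bounds. The main subtlety I expect is ensuring the left-to-right order used by Algorithm~\ref{algo:indexing} matches the order in which CSA enumerates each $C_i$ at setup time; this reduces to observing that both phases execute the same deterministic splitting rule on the same sequence $\vec c$, so the ``$\vec a$-subtree precedes $\vec b$-subtree'' convention needs to be fixed only once and is then respected automatically along any path.
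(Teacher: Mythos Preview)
Your proposal is correct and follows essentially the same approach as the paper. The paper's own proof is only a brief sketch (``follows directly from the discussion above''), invoking the same left-to-right invariant on $\texttt{count}$ and the Ancestral Property; your version simply makes the inductive argument and the two cases (path goes left vs.\ right) explicit, and your complexity bound $\O(h^2\log h)\subseteq\O(h^3)$ is even slightly sharper than the paper's accounting (which charges $\O(h^2)$ per call to \textbf{UpdateCount}).
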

\begin{proof}
At the beginning, at the root of the tree, $\texttt{count}[i]=0$ for all $i=1,2,\ldots,h$. 
As the algorithm proceeds to color the left and right children of the root and the colors are distributed to the left and right subtrees, the array $\texttt{count}$ is updated accordingly, reflecting the number of nodes of color $i$ on the left of the current node in consideration. Therefore, when a node $k$ receives its color $i$,  
its sub-index in $C_i$ is $j_k=\texttt{count}[i]+1$.  
The statement on the complexity follows from the description of the algorithm. The key reason for this low complexity is that the algorithm only colors $h$ nodes along the root-to-leaf path together with their siblings.
\end{proof}

\section{Experiments and Evaluations}
\label{sec:ExpandEval}
\vspace{-5pt}

We now describe the implementation of TreePIR and compare its performance with the state-of-the-art Probabilistic Batch Code (PBC)~\cite{angel2018}. 
Both PBC and TreePIR, which are different types of batch code, can be combined with an underlying PIR scheme to form a batch-PIR. Thus, we evaluate their performance when combined with two well-known PIR schemes, namely SealPIR, which was originally used for PBC in~\cite{angel2018}, and 
Spiral~\cite{menon2022}. Additionally, we also compare the efficiency of TreePIR versus PBC when embedded in Vectorized Batch-PIR (VBPIR)~\cite{mughees2022vectorized}. 
Note that PIRANA~\cite{Pirana2024} code is not available. Hence, we can only evaluate TreePIR+PIRANA theoretically (see Section~\ref{subsec:comparisons}).

As discussed in Section~\ref{subsec:ProblemDescription}, the performance of a batch-code-based batch-PIR depends on the number of sub-databases and their sizes, and on the performance of the underlying PIR scheme. Moreover, it also depends on the dimension $d$ that PIR sub-databases are represented, e.g., $d=2$ for SealPIR, $d=2, 3$ for VBPIR, and $d=4$ for Spiral. Theoretically, as TreePIR uses $1.5\times$ fewer sub-databases of size $2\times$ smaller compared to PBC, theoretically, TreePIR uses $3\times$ less storage, with $\sqrt[d]{2}\times$ faster max server computation, $1.5\sqrt[d]{2}\times$ faster total server computation, and $1.5 \times$ faster client query-generation time (except for VBPIR, the speedup for client query-generation time is $\sqrt[d]{2}\times$). The client answer-extraction times are similar for both as dummy responses are ignored in PBC. These theoretical gains were also reflected in the experiments.

TreePIR has significantly faster setup and indexing thanks to its efficient coloring and indexing algorithms on trees. In particular, for trees with $2^{10}$-$2^{24}$ leaves, TreePIR's setup and indexing are $8$-$60\times$ and $19$-$160\times$ faster than PBC's, respectively. TreePIR still works well beyond that range, requiring $180$ seconds to setup a tree of $2^{30}$ leaves, and only $0.7$ milliseconds to index in a tree of $2^{36}$ leaves.




\subsection{Experiment Setup}
We ran our experiments on a laptop (Intel® Core™ i9-13900H and 32GiB of system memory) in the Ubuntu 22.04 LTS environment. 
For each tree size, we ran the batch-PIR protocols for 10 random Merkle proofs and recorded the averages. 
To build Merkle trees of $2^{10}$-$2^{20}$ leaves, we fetched $2^{20}$ entries from Google's Xenon2024~\cite{MerkleTown}, 
each of which comprises of an entry number, a timestamp, and a certificate, and applied SHA-256 on the certificates to produce tree leaves. 
Consequently, each tree node has size 32 bytes. 
To evaluate TreePIR's performance for larger trees ($n = 2^{22},\ldots,2^{36})$, we use random hashes to avoid excessive hashing overheads. We did not consider PBC beyond $h = 24$ as its index became too large (see Table~\ref{table:mapsize}).
We used the existing C++ implementations of PBC~\cite{VBPIR_repository} by Mughees-Ren~\cite{mughees2022vectorized} (after fixing some minor errors), Spiral~\cite{Spiral_repository}, and VBPIR~\cite{VBPIR_repository}. 
All the codes for our experiments are available online at https://github.com/newPIR/TreePIR. 

        


\subsection{Evaluations}
\vspace{-5pt}

\textbf{Setup Computation Time.} 
PBC's setup employs $w = 3$ independent hash functions to allocate each tree node to three among $1.5h$ sub-databases. It also generates the index $\mathrm{map}_{\mathrm{PBC}}$, which will be sent to the client. In its setup phase, TreePIR colors the swapped Merkle tree and creates $h$ balanced sub-databases. 
While both PBC and TreePIR have setup times asymptotically (almost) linear in $2^h$, PBC's hashings and index generation are much slower than TreePIR's tree coloring.
As shown in Table~\ref{table:partitiontime}, TreePIR's setup is $8$-$60\times$ faster than PBC's for $10\leq h\leq 24$. 

\begin{table}[htb!]
    \centering
    \setlength{\tabcolsep}{2.8pt}
    \caption{A comparison of the \textit{setup} time between TreePIR and PBC. TreePIR's setup is $8$-$60\times$ faster for trees of $2^{10}$-$2^{24}$ leaves.}
    \begin{tabular}{p{2cm} c c c c c c c c }
        \hline \cline{1-9}
        $h$ & $10$ & $12$ & $14$ & $16$ & $18$ & $20$ & $22$ & $24$\\
        \hline \cline{1-9}
        PBC (ms) & $3.4$ & $8.9$ & $53.1$ & $406$ & $2132$ & $9259$ & $37591$ & $159814$\\
        
        \textbf{TreePIR} (ms) & $0.4$ & $2.4$ & $7.6$ & $39.1$ & $56.1$ & $179$ & $600$ & $2700$\\
        \hline \cline{1-9}
        $h$ & $26$ &  &  & $28$ &  & $29$ &  & $30$\\
        \hline \cline{1-9}
        \textbf{TreePIR} (sec)  & $9.6$ &  &  & $37.3$ & & $77.1$ & & $179.4$ \\
        \hline \cline{1-9}
    \end{tabular}
    \vspace{5pt}
    
    \label{table:partitiontime}
\end{table}

\textbf{Setup Communication Cost.} To perform PIR queries on sub-databases, the client must know the positions of the (Merkle proof) nodes within the sub-databases. To~that~end, PBC server generates an index/map ($\mathrm{map}_{\mathrm{PBC}}$), which must be downloaded  by the client.  
As shown in Table~\ref{table:mapsize}, the PBC index reaches $294$ MB for a tree of $2^{20}$ leaves and $4.7$ GB for the tree of $2^{24}$ leaves, which is impractical. By contrast, TreePIR indexing is on-the-fly and requires no maps. 

\begin{table}[h]
    \centering
    \setlength{\tabcolsep}{4pt}
    \caption{PBC client must download/store a large index that grows linearly with the tree size, whereas TreePIR requires no index.}
    \begin{tabular}{p{2.6cm} c c c c c c c c }
         \hline \cline{1-9}       
        $h$ & $10$ & $12$ & $14$ & $16$ & $18$ & $20$ & $22$ & $24$ \\
        \hline \cline{1-9}
        $\mathrm{map}_{\mathrm{PBC}}$ (MB) & $0.3$ & $1.2$ & $4.6$ & $18$ & $73$ & $294$ & $1175$ & $4703$\\
        \hline \cline{1-9}
    \end{tabular}
    \vspace{5pt}
    
    \label{table:mapsize}
\end{table}

\textbf{Client Indexing Time} is the total time the client finds the sub-indices of all $h$ nodes from a Merkle proof within the $h$ sub-databases.
Even when excluding the time spent on downloading the large index and loading it into RAM, PBC client's indexing time (using Cuckoo hashing) is still around $19$-$160\times$ slower than TreePIR's efficient indexing. 
TreePIR's indexing still works for even larger trees ($h \geq 26$) where PBC's index already becomes to large to handle.

\begin{table}[h]
    \centering
    \setlength{\tabcolsep}{3.4pt}
    \caption{A comparison of the \textit{indexing} times of TreePIR and PBC. Despite ignoring the download time of its (large) index, PBC's indexing is still $19$-$160\times$ slower than TreePIR's indexing.} 
    \begin{tabular}{p{2.2cm} c c c c c c c c }
        \hline \cline{1-9}
        $h$ & $10$ & $12$ & $14$ & $16$ & $18$ & $20$ & $22$ & $24$\\
        \hline \cline{1-9}
        Indexing PBC (ms) & $4$ & $4$ & $4$ & $4$ & $5$ & $11$ & $22$ & $74$\\
        
        \textbf{TreePIR} (ms) & $0.21$ & $0.24$ & $0.25$ & $0.32$ & $0.34$ & $0.38$ & $0.41$ & $0.46$\\
        \hline
        \cline{1-9}
        $h$ & $26$ & $28$ & $30$ & $32$ & $34$ & $36$ \\
        \cline{1-7} \cline{1-7}
        \textbf{TreePIR} (ms) & $0.47$ & $0.48$ & $0.51$ & $0.52$ & $0.61$ & $0.69$ \\
        \cline{1-7} \cline{1-7}
    \end{tabular}
    \vspace{5pt}
    
    \label{table:indexingtime}
\end{table}

\textbf{Client PIR Computation Time} includes \textit{query-generation} and \textit{answer-extraction} times. TreePIR uses $h$ sub-databases, hence requiring only $h$ queries and responses, $1.5\times$ fewer than PBC.
As seen in Table~\ref{table:PIRClientQueryTime}, SealPIR+TreePIR and Spiral+TreePIR have query-generation times about $1.5\times$ lower than those of SealPIR+PBC and Spiral+PBC, reflecting the theoretical gap.
The client query-generation time in VBPIR+TreePIR was around $\sqrt[d]{2} \times$ lower than VBPIR+PBC, with $d=2$ or $3$. 

\begin{table}[h]
    \centering
    \setlength{\tabcolsep}{6.8pt}
    \caption{The client \textit{query-generation} time of TreePIR and PBC when combined with SealPIR, Spiral, and VBPIR. }
    \begin{tabular}{p{2.9cm} c c c c c c}
        \hline \cline{1-7}
        $h$ & $10$ & $12$ & $14$ & $16$ & $18$ & $20$ \\
        \hline \cline{1-7}
        SealPIR+PBC (ms) & $21$ & $24$ & $28$ & $31$ & $37$ & $40$ \\
        
        \textbf{SealPIR+TreePIR} (ms) & $13$ & $16$ & $18$ & $21$ & $24$ & $27$ \\
        \hline \cline{1-7}
        Spiral+PBC (ms) & $33$ & $39$ & $46$ & $54$ & $62$ & $66$ \\
        
        \textbf{Spiral+TreePIR} (ms) & $20$ & $24$ & $29$ & $33$ & $37$ & $41$ \\
        \hline \cline{1-7}
        VBPIR+PBC (ms) & $5.1$ & $5.1$ & $7.0$ & $7.0$ & $7.3$ & $7.5$ \\
        
        \textbf{VBPIR+TreePIR} (ms) & $4.1$  & $4.1$ & $4.3$ & $6.2$ & $6.4$ & $6.6$ \\
        \hline \cline{1-7}
    \end{tabular}
    \label{table:PIRClientQueryTime}
\end{table}

The client extraction time (Table~\ref{table:PIRClientExtractTime}) are the same for SealPIR+PBC and Spiral+PBC, and for Spiral+PBC and Spiral+TreePIR, as clients process only $h$ responses for both (PBC client ignores the $0.5h$ dummy responses). The same goes for VBPIR+PBC and VBPIR+TreePIR, except when $h=14$ where one uses $d=2$ and the other uses $d=3$. 

\begin{table}[h]
    \centering
    \setlength{\tabcolsep}{5pt}
    \caption{The client \textit{answer-extraction} time of TreePIR and PBC when combined with SealPIR, Spiral, and VBPIR are similar.}
    \begin{tabular}{p{2.9cm} c c c c c c}
        \hline \cline{1-7}
        $h$ & $10$ & $12$ & $14$ & $16$ & $18$ & $20$ \\
        \hline \cline{1-7}
        SealPIR+PBC (ms) & $12.4$ & $14.9$ & $17.3$ & $19.4$ & $22.4$ & $24.6$ \\
        
        \textbf{SealPIR+TreePIR} (ms) & $12.6$ & $15.0$ & $17.2$ & $19.4$ & $22.2$ & $24.3$ \\
        \hline \cline{1-7}
        Spiral+PBC (ms) & $6.7$ & $7.9$ & $9.4$ & $10.4$ & $10.9$ & $12.2$ \\
        
        \textbf{Spiral+TreePIR} (ms) & $5.8$ & $7.2$ & $8.4$ & $9.6$ & $10.5$ & $12.2$ \\
        \hline \cline{1-7}
        VBPIR+PBC (ms) & $1.3$ & $1.3$ & $0.7$ & $0.7$ & $0.7$ & $0.7$ \\
        
        \textbf{VBPIR+TreePIR} (ms) & $1.3$ & $1.3$ & $1.3$ & $0.7$ & $0.7$ & $0.7$ \\
        \hline \cline{1-7}
    \end{tabular}
    \vspace{5pt}
    \label{table:PIRClientExtractTime}
    \vspace{-10pt}
\end{table}

\textbf{Server(s) Storage Cost.} PBC uses $1.5\times$ more sub-databases of size $2\times$ larger than TreePIR's. As the result, the total storage of PBC is $3\times$ larger than that of TreePIR, as also reflected in Table~\ref{table:dbsize}. 

\begin{table}[h]
    \centering
    \setlength{\tabcolsep}{3pt}
    \caption{A comparison of individual sub-database sizes, $s_{\mathrm{PBC}}$ and $s_{\mathrm{TreePIR}}$, and the total storage (all sub-databases), $S_{\mathrm{PBC}}$ and $S_{\mathrm{TreePIR}}$, for PBC and TreePIR, respectively. TreePIR requires $3\times$ smaller total storage and $2\times$ smaller sub-databases, as expected.}
    \begin{tabular}{p{2cm} c c c c c c c c}
        \hline \cline{1-9}
        $h$ & $10$ & $12$ & $14$ & $16$ & $18$ & $20$ & $22$ & $24$\\
        \hline \cline{1-9}
        $s_{\mathrm{PBC}}$ (MB) & $0.014$ & $0.05$ & $0.15$ & $0.53$ & $1.9$ & $6.7$ & $24.5$ & $89.6$\\
        
        $s_{\mathbf{TreePIR}}$ (MB) & $0.007$ & $0.02$ & $0.08$ & $0.26$ & $0.9$ & $3.4$ & $12.2$ & $44.7$\\
        \hline \cline{1-9}
        $S_{\mathrm{PBC}}$ (MB) & $0.21$ & $0.83$ & $3.21$ & $12.8$ & $51$ & $202$ & $807$ & $3227$\\
        
        $S_{\mathbf{TreePIR}}$ (MB) & $0.07$ & $0.26$ & $1.05$ & $4.2$ & $17$ & $67$ & $268$ & $1074$\\
        \hline \cline{1-9}
    \end{tabular}
    \vspace{5pt}
    \label{table:dbsize}
\end{table}
\vspace{-5pt}

\textbf{PIR Server Computation Time.} 
We record in Table~\ref{table:parallelPIRServersMax} the \textit{maximum} time the server took 
to generate a response for a sub-database. 
On the other hand, Table~\ref{table:PIRServerTotal} shows the time the server took to produce \textit{all} responses. 
The maximum server computation time is a relevant metric when the parallel mode is considered (one server/thread per sub-database), while the total server computation time is relevant when the sequential mode is considered (a single server/thread handles all sub-databases sequentially). 
The running times reported in Tables~\ref{table:parallelPIRServersMax} and~\ref{table:PIRServerTotal} reflect the theoretical speedups for TreePIR predicted by the theory: $\sqrt[d]{2}\times$ for maximum and $1.5\sqrt[d]{2}\times$ for total server computation times, respectively. 

\begin{table}[h]
    \centering
    \setlength{\tabcolsep}{6pt}
    \caption{Theoretically, TreePIR's max server computation time is $\sqrt[d]{2} \times$ faster than PBC (this metric is irrelevant to VBPIR, which runs in sequential mode)]. This is reflected correctly in the table with $d = 2$ for SealPIR and $d=4$ for Spiral.}
    \begin{tabular}{p{2.8cm} c c c c c c}
        \hline \cline{1-7}
        $h$ & $10$ & $12$ & $14$ & $16$ & $18$ & $20$ \\
        \hline \cline{1-7}
        SealPIR+PBC (ms) & $6.0$ & $6.2$ & $12.4$ & $21.3$ & $60$ & $107$ \\
        
        \textbf{SealPIR+TreePIR} (ms) & $5.8$ & $5.8$ & $9.1$ & $18.9$ & $36$ & $76$ \\
        \hline \cline{1-7}
        Spriral+PBC (ms) & $33$ & $34$ & $34$ & $34$ & $35$ & $39$ \\
        
        \textbf{Spiral+TreePIR} (ms) & $30$ & $31$ & $31$ & $31$ & $32$ & $33$ \\
        \hline \cline{1-7}
    \end{tabular}
    \vspace{5pt}
    \label{table:parallelPIRServersMax}
\end{table}

\begin{table}[htb!]
    \centering
    \setlength{\tabcolsep}{5pt}
    \caption{TreePIR's total server computation time is $1.5$-$2\times$ faster than PBC for larger trees. Theoretically, it is $1.5 \sqrt[d]{2} \times$ faster.}
    \begin{tabular}{p{2.8cm} c c c c c c}
        \hline \cline{1-7}
        $h$ & $10$ & $12$ & $14$ & $16$ & $18$ & $20$ \\
        \hline \cline{1-7}
        SealPIR+PBC (ms) & $69$ & $99$ & $235$ & $486$ & $1185$ & $2916$ \\
        
        \textbf{SealPIR+TreePIR} (ms) & $47$ & $55$ & $104$ & $233$ & $531$ & $1250$ \\
        \hline \cline{1-7}
        Spiral+PBC (ms) & $501$ & $613$ & $700$ & $805$ & $916$ & $1151$ \\
        
        \textbf{Spiral+TreePIR} (ms) & $309$ & $373$ & $429$ & $507$ & $561$ & $663$ \\
        \hline \cline{1-7}
        VBPIR+PBC (ms) & $399$ & $405$ & $586$ & $969$ & $2357$ & $7481$ \\
        
        \textbf{VBPIR+TreePIR} (ms) & $396$ & $397$ & $415$ & $588$ & $1372$ & $3871$ \\
        \hline \cline{1-7}
    \end{tabular}
    \vspace{5pt}
    \label{table:PIRServerTotal}
    \vspace{-10pt}
\end{table}

\textbf{PIR Communication Cost} measures the total amount of data transmitted over the network between the client and PIR server(s), that is, the total size of PIR queries and responses. Note that there are $h$ sub-databases in TreePIR and $1.5h$ sub-databases in PBC. This is reflected in Figure~\ref{fig:PIRCom}: SealPIR+PBC and Spiral+PBC communication costs are roughly $1.5\times$ higher than that of SealPIR+TreePIR and Spiral+PBC, respectively. 
VBPIR packs several queries and responses into single ciphertexts, hence the communication costs are similar when combined with PBC and TreePIR.
\begin{figure}[htb!]
\centering
\includegraphics[scale=0.37]{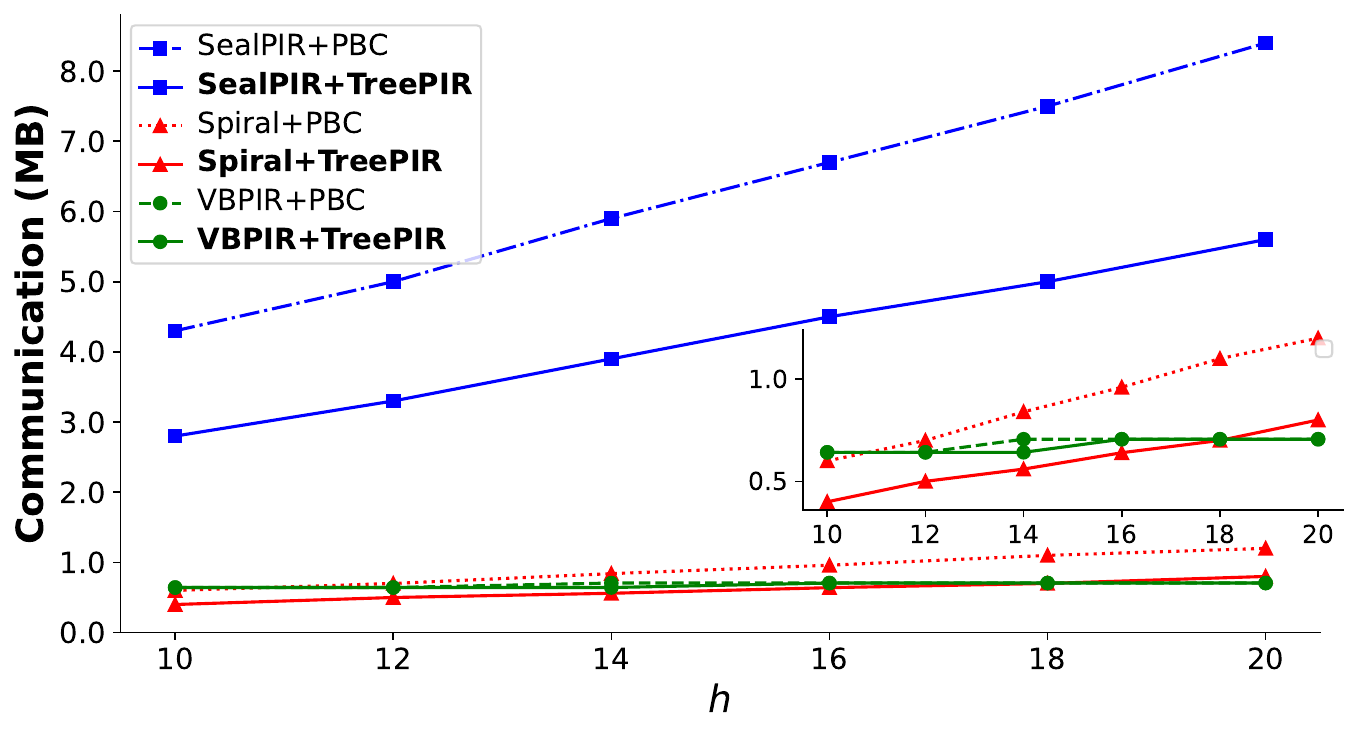}
\vspace{-10pt}
\caption{TreePIR's communication cost is about $1.5\times$ lower than PBC's as expected for most combinations (except VBPIR).}
\label{fig:PIRCom}
\end{figure}
\vspace{-10pt}


\section{Conclusions}
\label{sec:Conclusion}
\vspace{-5pt}

We consider in this work the problem of private retrieval of Merkle proofs in a Merkle tree, which has direct applications in various systems that provide data verifiability feature such as Google's Certificate Transparency, Amazon DynamoDB, and blockchains. By exploiting a unique feature of Merkle proofs, we propose an efficient retrieval scheme based on the novel concept of ancestral coloring of trees, achieving an \textit{optimal} storage overhead and much lower computational and communication complexities compared to existing schemes. 
In particular, we develop a very fast indexing algorithm that only incurs \textit{polylog} space and time complexities, which can output the required indices for a perfect Merkle tree of 64 \textit{billion} leaves in one \textit{microsecond}. By contrast, most prior works require \textit{linear} space or time for indexing.
Significant open problems include an extension of our results to $q$-ary trees (e.g. Verkle trees), which are becoming more and more popular in blockchains, and tackling sparse Merkle trees or growing Merkle trees for dynamic databases. 


\bibliographystyle{IEEEtran}
\bibliography{ParallelPrivateRetrievalMerkleProofs}

\appendices

\section{Graphs and Trees}
\label{app:graph_tree}
\vspace{-5pt}

An (undirected) \textit{graph} $G\hspace{-2pt}=\hspace{-2pt}(V,E)$ consists of a set of vertices $V$ and a set of undirected edges $E$. A \textit{path} from a vertex $v_0$ to a vertex $v_m$ in a graph $G$ is a sequence of alternating vertices and edges $\big(v_0,(v_0,v_1),(v_1,v_2),\ldots,(v_{m-1},v_m),v_m\big)$, so that no vertex appears twice. Such a path is said to have \textit{length} $m$ (there are $m$ edges in it). A graph is \textit{connected} if for every pair of vertices $u \neq v$, there is a path from $u$ to $v$. A \textit{cycle} is defined the same as a path except that the first and the last vertices are identical.
A \textit{tree} is a connected graph without any cycle.
We also refer to the vertices in a tree as its \textit{nodes}. 

A \textit{rooted} tree is a tree with a designated node referred to as its \textit{root}. 
Every node along the (unique) path from a node $v$ to the root is an \textit{ancestor} of $v$. The \textit{parent} of $v$ is the first vertex after $v$ encountered along the path from $v$ to the root. If $u$ is an ancestor of $v$ then $v$ is a \textit{descendant} of $u$. If $u$ is the parent of $v$ then $v$ is a \textit{child} of $u$. A \textit{leaf} of a tree is a node with no children. A \textit{binary tree} is a rooted tree in which every node has at most two children.
The \textit{depth} of a node $v$ in a rooted tree is 
the length of the path from the root to $v$. The \textit{height} of a rooted tree is defined as the maximum depth of a leaf. A \textit{perfect} binary tree is a binary tree in which every non-leaf node has two children and all the leaves are at the same depth. A perfect binary tree of height $h$ has $n=2^h$ leaves and $2^{h+1}-1$ nodes.

Let $[h]$ denote the set $\{1,2,\ldots,h\}$. A (node) \textit{coloring} of a tree $T=(V,E)$ with $h$ colors is a map $\phi \colon V \to [h]$ that assigns nodes to colors. The set of all tree nodes having color $i$ is called a \textit{color class}, denoted $C_i$, for $i\in [h]$.



A Merkle tree~\cite{Merkle1988} is a well-known data structure represented by a binary tree whose nodes store the cryptographic hashes (e.g. SHA-256~\cite{NISTHashFunction}) of the concatenation of the contents of their child nodes. The leaf nodes of the tree store the hashes of the data items of a database.  
In the example given in Fig.~\ref{fig:merkle_swapped}, the leaves $x_8,x_9,\ldots,x_{15}$ are hashes of eight data items $(T_i)_{i=1}^8$, i.e. $x_{i+7}=\H(T_i)$, $i\in[8]$, where $\H(\cdot)$ denotes a cryptographic hash function.
As a cryptographic hash function is collision-resistant, i.e., given $H_X = \H(X)$, it is computationally hard to find $Y \neq X$ satisfying $\H(Y)=H_X$, no change in the transactions can be made without changing the Merkle root. Thus, once the Merkle root is published, no one can modify any transaction while keeping the same root hash. 

The binary tree structure of the Merkle tree allows an efficient inclusion test: a client with a transaction, e.g., $T_3$, can verify that this transaction is indeed included in the Merkle tree with the \textit{published} root, e.g. $x_1$, by downloading the corresponding \textit{Merkle proof} $(x_{11}, x_4, x_3)$. 
A Merkle proof consists of $h=\lceil\log(n)\rceil$ hashes (excluding the root), where $h$ is the height and $n$ is the number of leaves. In this example, upon receiving a proof $\pi = (x'_{11}, x'_4, x'_3)$, the client, who has $T_3$, computes $x_{10}\gets \H(T_3)$, $x'_5 \gets \H(x_{10}||x'_{11})$, then $x'_2 \gets \H(x'_4||x'_5)$, then $x'_1\gets \H(x'_2||x'_3)$, and verifies $x'_1 \overset{?}{=} x_1$.  
The test is successful if the last equality holds.

\vspace{-5pt}
\section{Proof of Theorem~\ref{thm:main}}
\label{app:proof_thm_main}
\vspace{-5pt}

The last statement is straightforward by the definition of a balanced ancestral coloring and the description of Algorithm~\ref{algo:coloring-based-PIR}. 
We prove the first statement below.

In Algorithm~\ref{algo:coloring-based-PIR}, the nodes in the input Merkle tree are first swapped with their siblings to generate the swapped Merkle tree. The Merkle proofs in the original tree correspond precisely to the root-to-leaf paths in the swapped tree. 
An ancestral coloring ensures that $h$ nodes along every root-to-leaf path (excluding the colorless root) have different colors. Equivalently, every root-to-leaf path has identical color pattern $[1,1,\ldots,1]$, i.e. each color appears exactly once. 
This means that to retrieve a Merkle proof, the client sends exactly one PIR query to each sub-database. Hence, no information regarding the topology of the nodes in the Merkle proof is leaked as the result of the tree partition to any server.
Thus, the privacy of the retrieval of the Merkle proof reduces to the privacy of each individual PIR scheme.

More formally, in the language of batch-PIR, as the client sends $h$ independent PIR queries $\bq_1,\ldots,\bq_h$ to $h$ disjoint sub-databases indexed by $C_1,\ldots,C_h$ (as the color classes do not overlap), a probabilistic polynomial time (PPT) adversary can distinguish between queries $\bq(B)$ for $B = \{k_1,\ldots,k_h\}$ and queries $\bq(B')$ for $B'=\{k'_1,\ldots,k'_h\}$ only with negligible probability. More specifically, for every collection of $h+1$ PPT algorithms $\cA,\cA_1,\ldots,\cA_h$,
\vspace{-5pt}
\[
\begin{split}
&\big|\Pr[\bq(B) \rand \Q(1^\lambda,n,B) \colon \cA(1^\lambda,n,h,\bq(B))=1]\\ &-\Pr[\bq(B')\hspace{-2pt} \rand \hspace{-2pt} \Q(1^\lambda,n,B') \colon \cA(1^\lambda,n,h,\bq(B'))\hspace{-2pt}=\hspace{-2pt}1]\big| \\
&\leq\hspace{-2pt} \sum_{i=1}^h\hspace{-2pt}\bigg(\hspace{-2pt} \big|\hspace{-2pt}\Pr[\bq(\{k_i\})\hspace{-2pt} \rand\hspace{-2pt} \Q(\hspace{-2pt}1^\lambda\hspace{-2pt},n,\hspace{-2pt}\{k_i\}) \colon\hspace{-2pt} \cA_i(1^\lambda,n,\hspace{-2pt}h,\hspace{-2pt}\bq(\hspace{-2pt}\{k_i\}\hspace{-2pt}))\hspace{-2pt}=\hspace{-2pt}1]\\
&-\Pr[\bq(\{k'_i\})\hspace{-2pt} \rand \hspace{-2pt} \Q(1^\lambda\hspace{-2pt},n,\{k'_i\}) \colon \cA_i(1^\lambda,n,h,\bq(\{k'_i\}))\hspace{-2pt}=\hspace{-2pt}1]\big|\hspace{-2pt}\bigg)\\
&\in\negl(\lambda),
\end{split}
\]
as each term in the last sum belongs to $\negl(\lambda)$ due to the privacy of the underlying PIR (retrieving a single element), and there are $h\in \poly(\lambda)$ terms only. \qed

\section{Proofs for Section~\ref{sec:Algorithm}}
\label{app:main_proofs}

\begin{proof}[Proof of Lemma~\ref{lem:path}]
A root-to-leaf path contains exactly $h$ nodes except the root. Since these nodes are all ancestors and descendants of each other, they should have different colors. Thus, $h' \geq h$. As~$T(h)$ has precisely $2^h$ root-to-leaf paths, other conclusions follow trivially.
\end{proof}

\begin{proof}[Proof of Corollary~\ref{cr:balanced}]
This follows directly from Theorem~\ref{thm:main}, noting that a \textit{balanced} color sequence of dimension $h$ is also $h$-feasible due to Corollary~\ref{cr:balance_conf}. 
\end{proof}

\begin{proof}[Proof of Lemma~\ref{lem:if_terminate_then_success}]
As the algorithm proceeds recursively, it suffices to show that at each step, after the algorithm colors the left and right children $A$ and $B$ of the root node $R$, it will never use the color assigned to $A$ for any descendant of $A$ nor use the color assigned to $B$ for any descendant of $B$. Indeed, according to the coloring rule in \textbf{ColorSplittingRecursive$(R,h,\vec c)$} and \textbf{FeasibleSplit$(h,\vec c)$}, if $c$ is the color sequence available at the root $R$ and $c_1=2$, then after allocating Color 1 to both $A$ and $B$, there will be no more Color $1$ to assign to any node in both subtrees rooted at $A$ and $B$, and hence, our conclusion holds. Otherwise, if $2 < c_1 \leq c_2$, then  $A$ is assigned Color $1$ and according to \textbf{FeasibleSplit$(h,\vec c)$}, the color sequence $\vec a = [a_2=c_2-1,a_3,\ldots,a_h]$ used to color the descendants of $A$ no longer has Color 1. The same argument works for $B$ and its descendants. In other words, in this case, Color 1 is used exclusively for $A$ and descendants of $B$ while Color 2 is used exclusively for $B$ and descendants of $A$, which will ensure the Ancestral Property for both $A$ and $B$.  
\end{proof}

\begin{proof}[Proof of Lemma~\ref{lem:correctness}]
Note that because the Color-Splitting Algorithm starts off with the desirable color sequence $\vec c$, if it terminates successfully then the output coloring, according to Lemma~\ref{lem:if_terminate_then_success}, will be the desirable ancestral $\vec c$-coloring. Therefore, our remaining task is to show that the Color-Splitting Algorithm always terminates successfully. Both  \textbf{ColorSplittingRecursive$(R,h,\vec c)$} and \textbf{FeasibleSplit$(h,\vec c)$} work if $c_1 \geq 2$ when $h\geq 1$, i.e., when the current root node still has children below. This condition holds trivially in the beginning because the original color sequence is feasible. To guarantee that $c_1 \geq 2$ in all subsequent algorithm calls, we need to prove that starting from an $h$-feasible color sequence $\vec c$ with $h\geq 2$, \textbf{FeasibleSplit$(h,\vec c)$} always produces two $(h-1)$-feasible color sequences $\vec a$ and $\vec b$ for the two subtrees. This is the most technical and lengthy part of the proof of correctness of CSA and will be settled separately in Lemma~\ref{lem:c1=2} and Lemma~\ref{lem:c1>2}. 

In the remainder of the proof of this lemma, we analyze the time complexity of CSA. Let $C(n)$, $n = 2^h$, be the number of basic operations (e.g., assignments) required by the recursive procedure \textbf{ColorSplittingRecursive$(R,h,\vec c)$}. From its description, the following recurrence relation holds
\[
C(n) = 
\begin{cases}
2C(n/2) + \alpha h\log(h), & \text{ if } n \geq 4,\\
\beta, & \text{ if } n = 2,
\end{cases}
\]
where $\alpha$ and $\beta$ are positive integer constants and $\alpha h\log h$ is the running time of \textbf{FeasibleSplit$(h,\vec c)$} (dominated by the time required for sorting $\vec a$ and $\vec b$), noting that $h = \log_2 n$. Note that the master theorem is not applicable for this form of $C(n)$. We can apply the backward substitution method (or an induction proof) to determine $C(n)$ as follows.
\[
\begin{split}
C(n) &= 2C(n/2) + \alpha h\log h\\
&= 2\big(2C(n/2^2) + \alpha (h-1)\log(h-1)\big) + \alpha h\log h\\
&= 2^2 C(n/2^2) + \alpha\big(2(h-1)\log(h-1) + h\log h\big)\\
&=\cdots \\
&=2^k C(n/2^k) + \alpha \sum_{i=0}^{k-1}2^i(h-i)\log(h-i),
\end{split}
\]
for every $1\leq k\leq h-1$. Substituting $k = h-1$ in the above equality and noting that $C(2) = \beta$, we obtain

\begin{multline*}
C(n) = \beta 2^{h-1} + \alpha \sum_{i=0}^{h-1}2^i(h-i)\log(h-i) \\
\leq \beta 2^{h-1} + \alpha\bigg(h\sum_{i=0}^{h-1}2^i-\sum_{i=0}^{h-1}i2^i\bigg)\log h \\
= \beta 2^{h-1} + \alpha\big(h(2^h-1) - ((h-2)2^h+2)\big)\log h \\
= \beta 2^{h-1} + \alpha(2^{h+1}-h+2)\log h \in \mathcal{O}(2^{h+1}\log h) \\ 
= \mathcal{O}(n\log\log n).
\end{multline*}

Therefore, as claimed, the Color-Splitting Algorithm has a running time $O\big(2^{h+1}\log h\big)$.
\end{proof}

To complete the proof of correctness for CSA, it remains to settle Lemmas~\ref{lem:c1=2} and~\ref{lem:c1>2}, which state that 
the procedure \textbf{FeasibleSplit$(h,\vec c)$} produces two $(h-1)$-feasible color sequences $\vec a$ and $\vec b$ from a $h$-feasible color sequence $\vec c$. To this end, we first establish
Lemma~\ref{lem:aux}, which provides a crucial step in the proofs of Lemmas~\ref{lem:c1=2} and~\ref{lem:c1>2}. 
In essence, it establishes that if Condition (C1) in the feasibility definition (see~Definition~\ref{def:feasibility}) is satisfied at the two indices $m$ and $\ell$ of a non-decreasing sequence $a_1,\ldots,a_m,\ldots, a_{\ell}$, where $m < \ell$, and moreover, the elements $a_{m+1}, a_{m+2},\ldots, a_{\ell}$ differ from each other by at most one, then (C1) also holds at every \textit{middle} index $p$ $(m < p < \ell)$. This simplifies significantly the proof that the two color sequences $\vec a$ and $\vec b$ generated from an $h$-feasible color sequence $\vec c$ in the Color-Splitting Algorithm also satisfy (C1) (replacing $h$ by $h-1$), and hence, are $(h-1)$-feasible.   

\begin{lemma}
\label{lem:aux}
Suppose $0 \leq m < \ell$ and the sorted sequence of integers $(a_1,\ldots,a_\ell)$,
\[2 \leq a_1 \leq \ldots\leq a_m \leq a_{m+1}\leq \cdots \leq a_\ell,\] 
satisfies the following properties, in which $S_x \triangleq \sum_{i=1}^x a_i$,
\begin{itemize}
    \item (P1) $S_m \geq \sum_{i = 1}^m 2^i$ (trivially holds if $m=0$ since both sides of the inequality will be zero), 
    \item (P2) $S_\ell \geq \sum_{i = 1}^\ell 2^i$,
    \item (P3) $a_{m+1}=\cdots=a_{m+k}=\lfloor \frac{c}{2} \rfloor$, $a_{m+k+1}=\cdots=a_\ell=\lceil \frac{c}{2} \rceil$, for some $0\leq k \leq \ell-m$ and $c \geq 4$.
\end{itemize}
Then $S_p \geq \sum_{i=1}^p 2^i$ for every $p$ satisfying $m < p < \ell$.
\end{lemma}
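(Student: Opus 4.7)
The plan is to recast the conclusion as a discrete concavity statement about the feasibility deficit
\[
f(p) \;\triangleq\; S_p - (2^{p+1}-2), \qquad m \leq p \leq \ell.
\]
Property (P1) says $f(m) \geq 0$ and (P2) says $f(\ell) \geq 0$, so it is enough to establish that $f$ is \emph{discretely concave} on the integer interval $[m,\ell]$: the standard chord-over-graph inequality will then sandwich every intermediate value $f(p)$ above a non-negative convex combination of the two endpoint values, yielding $f(p) \geq 0$, which is exactly $S_p \geq \sum_{i=1}^p 2^i$.

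The concavity itself reduces to a one-line check on the forward differences. Observe that $\Delta f(p) \triangleq f(p+1)-f(p) = a_{p+1} - 2^{p+1}$ for $m \leq p \leq \ell-1$. By (P3), the tail $a_{m+1},\ldots,a_\ell$ takes at most two consecutive integer values $\lfloor c/2\rfloor$ and $\lceil c/2 \rceil$ in non-decreasing order, so $a_{p+2}-a_{p+1} \in \{0,1\}$ whenever $p+2 \leq \ell$. Therefore
\[
\Delta f(p+1)-\Delta f(p) \;=\; (a_{p+2}-a_{p+1}) - 2^{p+1} \;\leq\; 1 - 2^{p+1} \;\leq\; -1 \;<\; 0,
\]
so $\Delta f$ is strictly decreasing on $\{m,\ldots,\ell-1\}$.

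A short telescoping argument, comparing the average of the earlier (larger) differences against the average of the later (smaller) ones, then produces the chord inequality
\[
f(p) \;\geq\; \frac{\ell-p}{\ell-m}\,f(m) + \frac{p-m}{\ell-m}\,f(\ell) \;\geq\; 0
\]
for every $m \leq p \leq \ell$, which closes the proof. The only point requiring care is the degenerate case $k=0$ or $k=\ell-m$ (one of the two runs in (P3) is empty), but then the tail is simply constant and the bound $a_{p+2}-a_{p+1}\leq 1$ holds trivially, so no separate treatment is needed. The remaining ambient hypotheses $a_1 \geq 2$ and $c \geq 4$ do not enter the argument at all; they merely safeguard the calling context in which the lemma is invoked.
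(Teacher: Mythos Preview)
Your argument is correct, and it is genuinely different from the paper's proof. The paper splits into two cases according to the parity of $c$: when $c$ is even the tail $a_{m+1},\dots,a_\ell$ is constant, and the authors write $S_m=\sum_{i=1}^p 2^i-\delta$, extract from (P2) a lower bound on that constant, and then reduce the claim to an explicit inequality between sums of powers of two; when $c$ is odd they first prove the bound at the break index $m+k$ and bootstrap back to the even case on each half. Your approach sidesteps all of this by noting that the deficit $f(p)=S_p-(2^{p+1}-2)$ has non-increasing forward differences on $[m,\ell]$ (since the tail increments by at most $1$ while $2^{p+1}$ at least doubles), so $f$ is discretely concave and the chord inequality immediately gives $f(p)\geq \frac{\ell-p}{\ell-m}f(m)+\frac{p-m}{\ell-m}f(\ell)\geq 0$. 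This eliminates the parity case split and the explicit power-of-two arithmetic entirely, and as you observe it shows that the hypotheses $a_1\geq 2$ and $c\geq 4$ are not actually needed for the conclusion. The paper's route has the minor advantage of being self-contained at the level of elementary inequalities, but yours is shorter, more robust, and exposes the real reason the lemma holds.
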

\begin{proof}
See Appendix~\ref{app:aux}.
\end{proof}

As a corollary of Lemma~\ref{lem:aux}, a balanced color sequence is feasible.
Note that even if only balanced colorings are needed, CSA still needs to handle unbalanced sequences when coloring the subtrees.

\begin{corollary}[Balanced color sequence]
\label{cr:balance_conf}
For $h \geq 1$, set $u = (2^{h+1}-2) \pmod h$ and $\vec c^* = [c^*_1,c^*_2,\ldots,c^*_h]$, where $c^*_i=\left\lfloor \frac{2^{h+1}-2}{h} \right\rfloor$ if $1 \leq i \leq h-u$ and $c^*_i=\left\lceil \frac{2^{h+1}-2}{h} \right\rceil$ for $h-u+1 \leq i \leq h$. Then $\vec c^*$, referred to as the (sorted) balanced sequence, is $h$-feasible. 	
\end{corollary}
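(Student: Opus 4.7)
The plan is to deduce Corollary~\ref{cr:balance_conf} as a short application of Lemma~\ref{lem:aux}. I will verify that the balanced sequence satisfies (C2) by direct arithmetic and then apply Lemma~\ref{lem:aux} with $m = 0$, $\ell = h$, and $a_i = c^*_i$ to obtain (C1) at all intermediate indices.

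First I would check (C2). By the definition of $u$, we have $(h-u)\lfloor (2^{h+1}-2)/h \rfloor + u \lceil (2^{h+1}-2)/h \rceil = 2^{h+1}-2$, which is exactly $\sum_{i=1}^h c^*_i$. So (C2) holds and, in particular, $S_h = \sum_{i=1}^h 2^i$, which will serve as hypothesis (P2) of Lemma~\ref{lem:aux} (with equality, which suffices).

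Next I would match the balanced sequence to the hypotheses of Lemma~\ref{lem:aux}. Set $q := \lfloor (2^{h+1}-2)/h \rfloor$. When $u = 0$ take $c := 2q$, so that $\lfloor c/2 \rfloor = \lceil c/2 \rceil = q$ and all $c^*_i$ equal $q$. When $u > 0$ take $c := 2q+1$, so that $\lfloor c/2 \rfloor = q$ and $\lceil c/2 \rceil = q+1$, matching $c^*_1 = \cdots = c^*_{h-u} = q$ and $c^*_{h-u+1} = \cdots = c^*_h = q+1$. In both cases (P3) is satisfied with $k = h-u$. The side condition $c \geq 4$ reduces to $q \geq 2$, i.e. $(2^{h+1}-2)/h \geq 2$, which is equivalent to $2^h \geq h+1$ and is easily checked by induction for all $h \geq 1$; the same inequality also gives $a_1 = c^*_1 = q \geq 2$, as required by the lemma. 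Hypothesis (P1) is trivial since $m=0$.

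Applying Lemma~\ref{lem:aux} then yields $S_p \geq \sum_{i=1}^p 2^i$ for every $0 < p < h$, while the case $p = h$ is the equality in (C2). Hence (C1) holds for all $1 \leq \ell \leq h$, and $\vec c^*$ is $h$-feasible. The only slightly delicate bookkeeping is confirming $q \geq 2$ and handling the tiny base case $h=1$ (where $\vec c^* = [2]$ is feasible by inspection); beyond this the corollary is an immediate specialization of Lemma~\ref{lem:aux}, so I do not anticipate a genuine obstacle.
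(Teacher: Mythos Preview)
Your proposal is correct and matches the paper's own proof, which simply says the corollary follows from Lemma~\ref{lem:aux} by setting $m=0$. You have fleshed out the verification of (P1)--(P3), the side condition $c\ge 4$, and (C2) that the paper leaves implicit, but the approach is identical.
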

\begin{proof}
The proof follows by setting $m = 0$ in  Lemma~\ref{lem:aux}. 
\end{proof}

Lemma~\ref{lem:c1=2} and Lemma~\ref{lem:c1>2} establish that as long as $\vec c$ is an $h$-feasible color sequence, \textbf{FeasibleSplit}$(h,\vec c)$ will produce two $(h-1)$-feasible color sequences that can be used in the subsequent calls of \textbf{ColorSplittingRecursive()}.
The two lemmas settle the case $c_1=2$ and $c_1>2$, respectively, both of which heavily rely on Lemma~\ref{lem:aux}.
It is almost obvious that (C2) holds for the two sequences $\vec a$~and~$\vec b$. The condition (C1) is harder to tackle.
The main observation that helps simplify the proof is that although the two new color sequences $\vec a$ and $\vec b$ get their elements sorted, most elements $a_i$ and $b_i$ are moved around not arbitrarily but locally within a group of indices where $c_i$'s are the same - called a \textit{run}, and by establishing the condition (C1) of the feasibility for $\vec a$ and $\vec b$ at the end-points of the runs (the largest indices), one can use Lemma~\ref{lem:aux} to also establish (C1) for $\vec a$ and $\vec b$ at the middle-points of the runs, thus proving (C1) at every index. Further details can be found in the proofs of the lemmas.

\begin{lemma}
\label{lem:c1=2}
Suppose $c$ is an $h$-feasible color sequence, where $2 = c_1 \leq c_2 \leq \cdots \leq c_h$. Let $\vec a$ and $\vec b$ be two sequences of dimension $h-1$ obtained from $\vec c$ as in \textbf{FeasibleSplit$(h,\vec c)$} Case~1, before sorted, i.e., $a_2:=\lfloor c_2/2\rfloor$ and $b_2:=\lceil c_2/2\rceil$, and for $i = 3,4,\ldots,h$,
\[
\begin{cases}
a_i:=\lceil c_i/2 \rceil \text{ and } b_i:=\lfloor c_i/2 \rfloor, &\text{ if } \sum_{j=2}^{i-1}a_j < \sum_{j=2}^{i-1}b_j,\\
a_i:=\lfloor c_i/2 \rfloor \text{ and } b_i:=\lceil c_i/2 \rceil, &\text{ otherwise}.
\end{cases}
\]
Then $\vec a$ and $\vec b$, after being sorted, are $(h-1)$-feasible color sequences. 
We assume that $h\geq 2$.
\end{lemma}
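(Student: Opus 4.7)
The proof has two obligations: the total-size condition (C2), which is almost immediate, and condition (C1) for the sorted sequences, which is the substantive task. (C2) follows at once from $a_i+b_i=c_i$ for all $i\ge 2$, giving $\sum_{i=2}^h a_i=\sum_{i=2}^h b_i=\tfrac12(\sum_{i=1}^h c_i-2)=2^h-2$, the non-root node count of $T(h-1)$. Moreover, $c_1=2$ combined with the $h$-feasibility inequality $c_1+c_2\ge 6$ forces $c_2\ge 4$, so every $a_i,b_i\ge\lfloor c_2/2\rfloor\ge 2$ and the sorted sequences are genuine color sequences.

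The plan for (C1) is a \emph{run decomposition}. I partition the indices $\{2,\ldots,h\}$ into maximal runs of constant $c_i$-value; inside a run of common value $c\ge 4$, each $a_i$ (resp.\ $b_i$) equals $\lfloor c/2\rfloor$ or $\lceil c/2\rceil$, and the alternation rule in \textbf{FeasibleSplit} preserves the invariant $|S_j(\vec a)-S_j(\vec b)|\le 1$ by a direct induction on $j$ (the tie-breaking branch $S_{i-1}(\vec a)=S_{i-1}(\vec b)$ simply flips which side takes the larger ceiling). Combining this invariant with $S_j(\vec a)+S_j(\vec b)=\sum_{i=2}^j c_i$ yields $S_j(\vec a)\ge\lfloor\tfrac12\sum_{i=2}^j c_i\rfloor$ for every $j$, and the $h$-feasibility of $\vec c$ at index $j$ then gives $S_j(\vec a)\ge\lfloor\tfrac12(2^{j+1}-4)\rfloor=2^j-2$. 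After the index shift from $\vec c$ (dimension $h$) to $\vec a$ (dimension $h-1$), this is precisely (C1) for the \emph{unsorted} $\vec a$; the symmetric bound handles $\vec b$.

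To pass to the sorted sequence, I will show that at the right endpoint $j$ of each run the sorted prefix sum $S_j(\text{sorted }\vec a)$ coincides with $S_j(\vec a)$, so that the inequality just derived survives sorting at run endpoints. For an index $p$ lying strictly between two consecutive run endpoints $m<\ell$, Lemma~\ref{lem:aux} then applies to the sorted sequence: (P1) and (P2) are the endpoint bounds just established (with (P1) vacuous for $m$ preceding the first run), and (P3) holds because the sorted segment from position $m{+}1$ to $\ell$ consists of the $\lfloor c/2\rfloor$'s of that run followed by its $\lceil c/2\rceil$'s, with $c\ge 4$. The lemma delivers $S_p\ge 2^p-2$ for every intermediate $p$, completing the verification of (C1).

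The main obstacle will be the endpoint-identity claim when two adjacent $c$-runs share a half-value --- this occurs whenever $c'=c+1$, since $\lceil c/2\rceil=\lfloor c'/2\rfloor$ --- so that a maximal block of equal values in sorted $\vec a$ may straddle a $c$-run boundary and the sorted prefix at a $c$-run endpoint need not coincide elementwise with the unsorted prefix there. The resolution is to carry out the endpoint comparison at maximal constant-value blocks of sorted $\vec a$ instead: because every index absorbed from a neighboring run contributes an identical per-element value, the crucial inequality $S_{\text{block end}}(\text{sorted }\vec a)\ge\lfloor\tfrac12\sum_{i\le j}c_i\rfloor$ still follows from the invariant on unsorted partial sums. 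With this refinement, Lemma~\ref{lem:aux} applies to the intermediate indices verbatim, and the remaining details are routine.
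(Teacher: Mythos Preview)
Your proposal is correct and matches the paper's approach: establish (C1) for the unsorted sequences via the balance invariant $|S_j(\vec a)-S_j(\vec b)|\le 1$, observe that sorted and unsorted prefix sums coincide at run endpoints, and invoke Lemma~\ref{lem:aux} inside each run. The ``obstacle'' you raise is not real: since $\lceil c/2\rceil \le \lfloor c'/2\rfloor$ whenever $c<c'$ are integers, every element from an earlier run is $\le$ every element from a later run, so any cross-run ties involve \emph{equal} values---hence both the endpoint prefix-sum identity and hypothesis (P3) of Lemma~\ref{lem:aux} on each run interval hold for the sorted sequence without the refinement you propose.
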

\begin{proof}
See Appendix~\ref{app:c1=2}.
\end{proof}

\begin{lemma}
\label{lem:c1>2}
Suppose $c$ is an $h$-feasible color sequence, where $2 < c_1 \leq c_2 \leq \cdots \leq c_h$. Let $\vec a$ and $\vec b$ be two sequences of dimension $h-1$ obtained from $\vec c$ as in \textbf{FeasibleSplit$(h,\vec c)$} Case~2, before sorted, i.e., $a_2:=c_2-1$ and $b_2:= c_1-1$, and if $h\geq 3$ then
\[
a_3:=\left\lceil \frac{c_3+c_1-c_2}{2} \right\rceil, \quad b_3:= c_2-c_1 + \left\lfloor \frac{c_3+c_1-c_2}{2} \right\rfloor,
\]
and for $i=4,5,\ldots,h$, 
\[
\begin{cases}
a_i:=\lceil c_i/2 \rceil \text{ and } b_i:=\lfloor c_i/2 \rfloor, &\text{ if } \sum_{j=2}^{i-1}a_j < \sum_{j=2}^{i-1}b_j,\\
a_i:=\lfloor c_i/2 \rfloor \text{ and } b_i:=\lceil c_i/2 \rceil, &\text{ otherwise}.
\end{cases}
\]
Then $\vec a$ and $\vec b$, after being sorted, are $(h-1)$-feasible color sequences. Note that while $a_i$ $(2\leq i\leq h)$ and $b_i$ $(3\leq i\leq h)$ correspond to Color $i$, $b_2$ actually corresponds to Color~$1$. We assume that $h\geq 2$.
\end{lemma}
\begin{proof}
See Appendix~\ref{app:c1>2}.
\end{proof}

\section{Proof of Lemma~\ref{lem:aux}}
\label{app:aux}

\begin{proof}
We consider two cases based on the parity of $c$. 

\textbf{Case 1: $c$ is even}. In this case, from (P3) we have $a_{m+1}=\cdots=a_\ell=a$, where $a \triangleq c/2$. 
As $p > m$, we have $S_p = S_m + \sum_{i=m+1}^p a_i$.
If $S_m \geq \sum_{i=1}^p 2^i$ then the conclusion trivial holds. Otherwise, let $S_m = \sum_{i=1}^p 2^i - \delta$, for some $\delta > 0$. Then (P1) implies that 
\begin{equation}
\label{eq:delta}
0<\delta = \sum_{i=1}^p 2^i - S_m \leq \sum_{i=1}^p 2^i - \sum_{i = 1}^m 2^i = \sum_{i=m+1}^p 2^i.
\end{equation}
Moreover, from (P2) we have
\[
S_\ell = S_m + (\ell-m)a = \bigg(\sum_{i=1}^p 2^i-\delta\bigg) + (\ell-m)a \geq \sum_{i = 1}^\ell 2^i,
\]
which implies that 
\[
a \geq \frac{1}{\ell-m} \bigg( \sum_{i=p+1}^\ell 2^i + \delta \bigg).
\]
Therefore, 

\begin{multline*}
S_p = S_m + (p-m)a = \bigg(\sum_{i=1}^p 2^i - \delta\bigg) + (p-m)a \\ 
\geq \bigg(\sum_{i=1}^p 2^i - \delta\bigg) + \frac{p-m}{\ell-m}\bigg( \sum_{i=p+1}^\ell 2^i + \delta \bigg).
\end{multline*}

Hence, in order to show that $S_p \geq \sum_{i=1}^p 2^i$, it suffices to demonstrate that 
\[
(p-m)\sum_{i=p+1}^\ell 2^i \geq (\ell-p)\delta.
\]
Making use of \eqref{eq:delta}, we need to show that the following inequality holds
\[
(p-m)\sum_{i=p+1}^\ell 2^i \geq (\ell-p)\sum_{i=m+1}^p 2^i,
\]
which is equivalent to
\[
\frac{\sum_{i=p+1}^\ell 2^i}{\sum_{i=m+1}^p 2^i} 
\geq \frac{\ell-p}{p-m}
\Longleftrightarrow \frac{2^{p-m}(2^{\ell-p}-1)}{2^{p-m}-1}
\geq \frac{\ell-p}{p-m}.
\]
The last inequality holds because 
$\frac{2^{p-m}}{2^{p-m}-1} > 1 \geq \frac{1}{p-m}$
and $2^{\ell-p}-1 \geq \ell-p$ for all $0 \leq m < p < \ell$. Here we use the fact that $2^x-1-x\geq 0$ for all $x \geq 1$.
This establishes Case 1.

\textbf{Case 2: $c$ is odd}. In this case, from (P3) we have $a_{m+1}=\cdots=a_{m+k}=a$ and $a_{m+k+1}=\cdots=a_\ell=a+1$, where $a \triangleq \lfloor c/2 \rfloor$. We claim that if the inequality
\begin{equation}
\label{eq:middle}
S_{m+k} \geq \sum_{i=1}^{m+k}2^i
\end{equation}
holds then we can reduce Case 2 to Case 1. Indeed, suppose that \eqref{eq:middle} holds. Then by replacing $\ell$ by $m+k$ and applying Case 1, we deduce that $S_p \geq \sum_{i=1}^p 2^i$ for every $p$ satisfying $m < p < m+k$. Similarly, by replacing $m$ by $m+k$ and applying Case 1, the inequality $S_p \geq \sum_{i=1}^p 2^i$ holds for every $p$ satisfying $m+k < p < \ell$. Thus, in the remainder of the proof, we aim to show that \eqref{eq:middle} is correct.

To simplify the notation, set $p \triangleq m+k$. If $p=m$ or $p=\ell$ then (P1) and (P2) imply \eqref{eq:middle} trivially. Thus, we assume that $m < p < \ell$. Since $S_p = S_m + \sum_{i=m+1}^p a_i$, if $S_m \geq \sum_{i=1}^p 2^i$ then the conclusion trivial holds. Therefore, let $S_m = \sum_{i=1}^p 2^i - \delta$, for some $\delta > 0$. Similar to Case~1, (P1) implies that 
\begin{equation}
\label{eq:delta2}
0<\delta = \sum_{i=1}^p 2^i - S_m \leq \sum_{i=1}^p 2^i - \sum_{i = 1}^m 2^i = \sum_{i=m+1}^p 2^i.
\end{equation}
Moreover, from (P2) we have
\begin{multline*}
S_\ell = S_m + (\ell-m)a + (\ell-p) \\
= \bigg(\sum_{i=1}^p 2^i-\delta\bigg) + (\ell-m)a + (\ell-p) \geq \sum_{i = 1}^\ell 2^i,
\end{multline*}
which implies that 
\[
a \geq \frac{1}{\ell-m} \bigg( \sum_{i=p+1}^\ell 2^i + \delta - (\ell-p) \bigg).
\]
Therefore, 

\begin{multline*}
S_p = S_m + (p-m)a = \bigg(\sum_{i=1}^p 2^i - \delta\bigg) + (p-m)a \\
\geq \bigg(\sum_{i=1}^p 2^i - \delta\bigg) + \frac{p-m}{\ell-m}\bigg( \sum_{i=p+1}^\ell 2^i + \delta - (\ell-p)\bigg).
\end{multline*}

Hence, in order to show that $S_p \geq \sum_{i=1}^p 2^i$, it suffices to demonstrate that 
\[
(p-m)\bigg(\sum_{i=p+1}^\ell 2^i - (\ell-p)\bigg) \geq (\ell-p)\delta.
\]
Making use of \eqref{eq:delta2}, we need to show that the following inequality holds
\[
(p-m)\bigg(\sum_{i=p+1}^\ell 2^i - (\ell-p)\bigg) \geq (\ell-p)\sum_{i=m+1}^p 2^i,
\]
which is equivalent to
\begin{equation}
\label{eq:last}
\begin{split}
&\frac{\sum_{i=p+1}^\ell 2^i - (\ell-p)}{\sum_{i=m+1}^p 2^i}
\geq \frac{\ell-p}{p-m} \\
&\Longleftrightarrow \frac{2^{p-m}(2^{\ell-p}-1)- (\ell-p)/2^{m+1}}{2^{p-m}-1} \geq \frac{\ell-p}{p-m}.
\end{split}
\end{equation}

If $\ell-p=1$ then the last inequality of \eqref{eq:last} becomes 
\[
\frac{2^{p-m}-1/2^{m+1}}{2^{p-m}-1} \geq \frac{1}{p-m},
\]
which is correct because $1/2^{m+1} < 1$ and $p - m \geq 1$. 
If $\ell-p\geq 2$ then the last inequality of \eqref{eq:last} can be rewritten as
\[
\frac{2^{p-m}}{2^{p-m}-1} \bigg(2^{\ell-p}-1-\frac{\ell-p}{2^{p+1}}\bigg) 
\geq \frac{\ell-p}{p-m},
\]
which is correct because $\frac{2^{p-m}}{2^{p-m}-1} > 1 \geq \frac{1}{p-m}$ for $p > m$ and 
\[
2^{\ell-p}-1-\frac{\ell-p}{2^{p+1}} \geq 2^{\ell-p}-1-\frac{\ell-p}{4} \geq \ell-p,
\]
noting that $p \geq 1$ and that $2^x-1-\frac{5}{4}x > 0$ for every $x \geq 2$. 
This establishes Case~2.
\end{proof}

\section{Proof of Lemma~\ref{lem:c1=2}}
\label{app:c1=2}

\begin{proof}
To make the proof more readable, let $\vec a'=[a'_2,\ldots,a'_h]$ and $\vec b'=[b'_2,\ldots,b'_h]$ be obtained from $\vec a$ and $\vec b$ after their elements are sorted in non-decreasing order. According to Definition~\ref{def:feasibility}, the goal is to show that (C1) and (C2) hold for $\vec a'$ and $\vec b'$, while replacing $h$ by $h-1$. Because the sums $S_i(a)\triangleq\sum_{j=2}^i a_i$ and $S_i(b)\triangleq\sum_{j=2}^i b_i$ always differ from each other by at most one for every $2\leq i\leq h$, it is clear that at the end when $i = h$, they should be the same and equal to half of $S_h(c) \triangleq \sum_{j=2}^h c_j = \sum_{j=2}^h 2^j = 2\big(\sum_{j=1}^{h-1}2^j\big)$. Therefore, (C2) holds for $\vec a$ and $\vec b$ and hence, for $\vec a'$ and $\vec b'$ as well. It remains to show that (C1) holds for these two color sequences, that is, to prove that $\sum_{i=2}^{\ell} a'_i \geq \sum_{i=1}^{\ell-1} 2^i$ and $\sum_{i=2}^{\ell} b'_i \geq \sum_{i=1}^{\ell-1} 2^i$ for every $2 \leq \ell \leq h$ (note that $\vec a'$ and $\vec b'$ both start from index 2). 

Since $a_i$ and $b_i$ are assigned either $\lfloor c_i/2 \rfloor$ or $\lceil c_i/2 \rceil$ in somewhat an alternating manner, which keeps the sums $S_i(a)\triangleq\sum_{j=2}^i a_i$ and $S_i(b)\triangleq\sum_{j=2}^i b_i$ differ from each other by at most one, it is obvious that each sum will be approximately half of $\sum_{j=2}^i c_j$ (rounded up or down), which is greater than or equal to $\sum_{j=2}^i 2^j = 2\big( \sum_{j=1}^{i-1} 2^j \big)$. Therefore, (C1) holds for $\vec a$ and $\vec b$. However, the trouble is that this may no longer be true after sorting, in which smaller values are shifted to the front. We show below that (C1) still holds for $\vec a'$ and $\vec b'$ using Lemma~\ref{lem:aux}.

As $\vec c$ is a sorted sequence, we can partition $c_2,c_3,\ldots, c_h$ into $k$ different \textit{runs} where within each run all $c_i$'s are equal,
\begin{multline}
    \label{eq:run}
c_2 = \cdots = c_{i_1} < c_{i_1+1} = \cdots = \\
c_{i_2} < \cdots < c_{i_{k-1}+1}=\cdots =c_{i_k} \equiv c_h.
\end{multline}
For $r = 1,2,\ldots,k$, let $R_r \triangleq [i_{r-1}+1,i_r]$, where $i_0\triangleq 1$. Then \eqref{eq:run} means that for each $r\in [k]$, $c_i$'s are the same for all $i \in R_r$ and moreover, $c_i < c_{i'}$ if $i \in R_r$, $i' \in R_{r'}$, and $r < r'$. In order to show that (C1) holds for $\vec a'$ and $\vec b'$, our strategy is to first prove that (C1) holds for $\vec a'$ and $\vec b'$ at the \textit{end-points} $\ell=i_r$ of the runs $R_r$, $r \in [k]$, and then employ Lemma~\ref{lem:aux} to conclude that (C1) also holds for these color sequences at all the \textit{middle-points} of the runs. 

Since $\lfloor c_i/2 \rfloor\leq a_i \leq \lceil c_i/2 \rceil$ for every $2 \leq i \leq h$, it is clear that $a_i \leq a_{i'}$ if $i \in R_r$, $i' \in R_{r'}$, and $r < r'$. Therefore, $\vec a'$ can be obtained from $\vec a$ by sorting its elements \textit{locally} within each run. As a consequence, $\sum_{i \in R_r}a'_i = \sum_{i\in R_r} a_i$ for every $r \in [k]$, which implies that
\begin{equation}
\label{eq:Sira}
S_{i_r}(a')\triangleq \sum_{i = 2}^{i_r} a'_i = \sum_{i = 2}^{i_r} a_i \geq \sum_{i=1}^{i_r-1}2^i,
\end{equation}
where the last inequality comes from the fact that (C1) holds for $\vec a$ and $\vec b$ (as shown earlier).
The inequality \eqref{eq:Sira} implies that (C1) holds for $\vec a'$ at the end-points $i_r$, $r \in [k]$, of the runs. By Lemma~\ref{lem:aux}, we deduce that (C1) also holds for $\vec a'$ at every middle index $p \in R_r$ for all $r \in [k]$. Hence, (C1) holds for $\vec a'$, and similarly, for $\vec b'$. This completes the proof.     
\end{proof}

\section{Proof of Lemma~\ref{lem:c1>2}}
\label{app:c1>2}

Thanks to the trick of arranging $c_i$'s into \textit{runs} of elements of the same values and Lemma~\ref{lem:aux}, a proof for Lemma~\ref{lem:c1>2}, although still very lengthy, is manageable.
Before proving Lemma~\ref{lem:c1>2}, we need another auxiliary result.

\begin{lemma}
\label{lem:AB}
Let $h \geq 4$ and $\vec c$, $\vec a$, and $\vec b$ be defined as in Lemma~\ref{lem:c1>2}. 
As $\vec c$ is a sorted sequence, we can partition its elements from $c_4$ to $c_h$ into $k$ different \textit{runs} where within each run all $c_i$'s are equal,
\begin{multline*}
c_4 = \cdots = c_{i_1} < c_{i_1+1} = \cdots \\
= c_{i_2} < \cdots < c_{i_{k-1}+1}=\cdots =c_{i_k} \equiv c_h.
\end{multline*}
For any $1\leq r \leq k$, let $A_r\triangleq \sum_{j=4}^{i_r}a_j$, $B_r\triangleq \sum_{j=4}^{i_r}b_j$, and $C_r\triangleq \sum_{j=4}^{i_r}c_j$. Then $\min\{A_r,B_r\} \geq \left\lfloor \frac{C_r}{2} \right\rfloor - 1$.
\end{lemma}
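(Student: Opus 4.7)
The plan is to derive the bound by tracking the difference between the partial sums of $\vec a$ and $\vec b$. The key starting observation is that for every $j\geq 4$, the algorithm assigns $\{a_j,b_j\}=\{\lfloor c_j/2\rfloor,\lceil c_j/2\rceil\}$, so $a_j+b_j=c_j$. Summing from $j=4$ to $j=i_r$ gives $A_r+B_r=C_r$, which reduces the lemma to showing that $|A_r-B_r|\leq 2$: combined with $A_r+B_r=C_r$, a short parity case analysis then yields $\min\{A_r,B_r\}\geq\lfloor C_r/2\rfloor-1$ (the worst case being $C_r$ even and $A_r-B_r=-2$, giving $\min=C_r/2-1$).

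To bound $|A_r-B_r|$, I would first establish the invariant $|S_i(a)-S_i(b)|\leq 1$ for every $3\leq i\leq h$, where $S_i(a)=\sum_{j=2}^{i}a_j$ and similarly for $b$. For the base case $i=3$, a direct substitution of the explicit formulas for $a_2,b_2,a_3,b_3$ from Case~2 of \textbf{FeasibleSplit} collapses all the $c_1,c_2$ terms and leaves
\[
S_3(a)-S_3(b)=\Bigl\lceil\tfrac{c_3+c_1-c_2}{2}\Bigr\rceil-\Bigl\lfloor\tfrac{c_3+c_1-c_2}{2}\Bigr\rfloor\in\{0,1\}.
\]
For the inductive step, observe that whenever $S_{i-1}(a)<S_{i-1}(b)$ the algorithm feeds the larger half $\lceil c_i/2\rceil$ to $\vec a$ (and the smaller to $\vec b$), and otherwise does the opposite; a split on the parity of $c_i$ and on the sign of $S_{i-1}(a)-S_{i-1}(b)\in\{-1,0,1\}$ confirms that the difference stays in $\{-1,0,1\}$.

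With the invariant in hand, I would write
\[
A_r-B_r=\bigl(S_{i_r}(a)-S_{i_r}(b)\bigr)-\bigl(S_3(a)-S_3(b)\bigr),
\]
where the first parenthesis lies in $\{-1,0,1\}$ by the invariant and the second in $\{0,1\}$ by the base-case calculation. Hence $A_r-B_r\in\{-2,-1,0,1\}$, and combining with $A_r+B_r=C_r$ gives the claim.

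The only subtlety is the asymmetric interval $\{-2,-1,0,1\}$, which is what forces the $-1$ correction in the stated bound; it arises because the anchor difference $S_3(a)-S_3(b)$ need not be zero but can equal $+1$, while the post-index-$3$ balancing rule produces a symmetric window of size $2$. Note that the run decomposition does not enter the proof of Lemma~\ref{lem:AB} itself; runs only become relevant when Lemma~\ref{lem:AB} is invoked inside the proof of Lemma~\ref{lem:c1>2} to propagate endpoint bounds to interior indices via Lemma~\ref{lem:aux}.
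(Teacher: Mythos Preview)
Your proposal is correct and follows essentially the same route as the paper: both compute $S_3(a)-S_3(b)=\lceil\tfrac{c_3+c_1-c_2}{2}\rceil-\lfloor\tfrac{c_3+c_1-c_2}{2}\rfloor\in\{0,1\}$, invoke the balancing invariant $|S_i(a)-S_i(b)|\leq 1$ for $i\geq 3$, deduce $|A_r-B_r|\leq 2$, and combine with $A_r+B_r=C_r$. Your write-up is simply more explicit (and your sharper observation that $A_r-B_r\in\{-2,-1,0,1\}$ is a nice refinement, though the paper's coarser bound already suffices).
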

\begin{proof}
First, we have $|(a_2+a_3)-(b_2+b_3)| = |\lceil\frac{c_3+c_1-c_2}{2} \rceil - \lfloor \frac{c_3+c_1-c_2}{2} \rfloor| \leq 1$. Moreover, the way that $a_i$ and $b_i$ are defined for $i\geq 4$ in Lemma~\ref{lem:c1>2} guarantees that 
\[
|(a_2+a_3+A_r)-(b_2+b_3+B_r)| \leq 1,
\]
which implies that $|A_r-B_r| \leq 2$.
Furthermore, as $a_i+b_i = c_i$ for all $i \geq 4$, we have 
\[
A_r+B_r = \sum_{j=4}^{i_r}(a_j+b_j) = \sum_{j=4}^{i_r}c_j = C_r.
\]
Thus, $\min\{A_r,B_r\} \geq \left\lfloor \frac{C_r}{2} \right\rfloor - 1$. 
\end{proof}

\begin{proof}[Proof of Lemma~\ref{lem:c1>2}]
We use a similar approach to that of Lemma~\ref{lem:c1=2}. However, the proof is more involved because we now must take into account the relative positions of $a_2$, $a_3$, $b_2$, and $b_3$ within each sequence after being sorted, and must treat $\vec a$ and $\vec b$ separately. The case with $h=2$ or $3$ is easy to verify. We assume $h\geq 4$ for the rest of the proof.

Let $\vec a'=[a'_2,\ldots,a'_h]$ and $\vec b'=[b'_2,\ldots,b'_h]$ be obtained from $\vec a$ and $\vec b$ after sorting. According to Definition~\ref{def:feasibility}, the goal is to show that (C1) and (C2) hold for these two sequences while replacing $h$ by $h-1$. The proof that (C2) holds for $\vec a'$ and $\vec b'$ is identical to that of Lemma~\ref{lem:c1=2}, implied by the following facts: first, $\sum_{i=2}^h a'_i=\sum_{i=2}^h a_i$ and $\sum_{i=2}^h b'_i=\sum_{i=2}^h b_i$, and second, due to the definitions of $a_i$ and $b_i$, the two sums $\sum_{i=2}^h a_i$ and $\sum_{i=2}^h b_i$ are exactly the same and equal to $\frac{1}{2}\big(\sum_{j=1}^h c_i - 2\big) = \frac{1}{2}\sum_{i=2}^h 2^i = \sum_{i=1}^{h-1}2^i$.  
It remains to show that (C1) holds for $\vec a'$ and $\vec b'$. 

As $\vec c$ is sorted, we partition its elements from $c_4$ to $c_h$ into $k$ different \textit{runs} where within each run all $c_i$'s are equal,
\begin{multline}
    \label{eq:run2}
c_4 = \cdots = c_{i_1} < c_{i_1+1} = \cdots = c_{i_2} < \cdots \\
< c_{i_{k-1}+1}=\cdots =c_{i_k} \equiv c_h.
\end{multline}
For $r = 1,2,\ldots,k$, define the $r$-th run as $R_r \triangleq [i_{r-1}+1,i_r]$, where $i_0\triangleq 3$. Then \eqref{eq:run2} means that for each $r\in [k]$, $c_j$'s are the same for all $j \in R_r$ and moreover, $c_j < c_{j'}$ if $j \in R_r$, $j' \in R_{r'}$, and $r < r'$. As $\vec c$ is $h$-feasible, it satisfies (C1), 
\[
    c_1+c_2+c_3+\sum_{j=4}^{i_1}c_j+\cdots+\sum_{j=i_{r-1}+1}^{i_r}c_j = \sum_{j=1}^{i_r}c_j \geq \sum_{j=1}^{i_r}2^j.
\]
Equivalently, setting $C_r \triangleq \sum_{j=4}^{i_1}c_j+\cdots+\sum_{j=i_{r-1}+1}^{i_r}c_j$, we have 
\begin{equation}
    \label{eq:Ca}
    c_1+c_2+c_3+C_r\geq \sum_{j=1}^{i_r}2^j.
\end{equation}
We will make extensive use of this inequality later.

In order to show that (C1) holds for $\vec a'$ and $\vec b'$, our strategy is to first prove that (C1) holds for these sequences at the \textit{end-points} $\ell=i_r$ of the runs $R_r$, $r \in [k]$, and then employ Lemma~\ref{lem:aux} to conclude that (C1) also holds for these sequences at all the \textit{middle-points} of the runs. 
We also need to demonstrate that (C1) holds for $\vec a'$ and $\vec b'$ at the indices of $a_2$, $a_3$, $b_2$, and $b_3$ within these sorted sequences.
Notice the differences with the proof of Lemma~\ref{lem:c1=2}: first, we consider the runs from $c_4$ instead of $c_2$, and second, we must take into account the positions of $a_2,a_3,b_2,b_3$ relative to the runs within the sorted sequence $\vec a'$ and $\vec b'$.

Since $\lfloor c_i/2 \rfloor\leq a_i \leq \lceil c_i/2 \rceil$ for every $2 \leq i \leq \ell$, it is clear that $a_i \leq a_{i'}$ if $i \in R_r$, $i' \in R_{r'}$, and $r < r'$. Therefore, $\vec a'$ can be obtained from $\vec a$ by sorting its elements $a_4,\ldots,a_h$ \textit{locally} within each run and then inserting $a_2$ and $a_3$ into their correct positions (to make $\vec a'$ non-decreasing). The same conclusion holds for $\vec b'$. 

Note also that $a_2$ and $a_3$ are inserted between runs (unless they are the first or the last element in $\vec a'$) and do not belong to any run. The same statement holds for $b_2$ and $b_3$.
 
\textbf{First, we show that $\vec a'$ satisfies (C1)}. We divide the proof into two cases depending on whether $a_2 \leq a_3$ or not.

    \textbf{(Case a1) $a_2 \leq a_3$}. The sorted sequence $\vec a'$ has the following format in which within each run $R_r$ are the elements $a_j$, $j \in R_r$, ordered so that those $a_j = \left\lfloor \frac{c_j}{2} \right\rfloor$ precede those $a_j = \left\lceil \frac{c_j}{2} \right\rceil$. Note that $c_j$'s are equal for all $j\in R_r$. 
    \[
    \underbracket{\hspace{20pt}\text{runs}\hspace{20pt}}_{}\ a_2\ \underbracket{\hspace{20pt}\text{runs}\hspace{20pt}}_{}\ a_3\ \underbracket{\hspace{20pt}\text{runs}\hspace{20pt}}_{}
    \]
    Note that it is possible that there are no runs before $a_2$, or between $a_2$ and $a_3$, or after $a_3$.
    
    \textit{In the first sub-case}, the index of interest $\ell=i_r$ is smaller than the index of $a_2$ in~$\vec a'$. 
    In order to show that (C1) holds for $\vec a'$ at $\ell=i_r$, we prove that
    \begin{equation}
    \label{eq:A}
    A_r \triangleq \sum_{j=4}^{i_r}a_j \geq \sum_{j=1}^{i_r-3} 2^j,
    \end{equation}
    noting that in this sub-case the set $\{a_j \colon 4 \leq j \leq i_r\}$ corresponds precisely to the set of the first $i_r-3$ elements in $\vec a'$.
    We now demonstrate that \eqref{eq:A} can be implied from \eqref{eq:Ca}. Indeed, since $\vec c$ is non-decreasing, from \eqref{eq:Ca} we deduce that 
    \[
    4C_r \geq c_1+c_2+c_3+C_r\geq \sum_{j=1}^{i_r}2^j.
    \]
    Combining this with Lemma~\ref{lem:AB}, we obtain the desired inequality \eqref{eq:A} as follows.
    \begin{multline*}
    A_r \geq \left\lfloor \frac{C_r}{2} \right\rfloor - 1
    \geq \left\lfloor\frac{1}{8}\sum_{j=1}^{i_r}2^j\right\rfloor-1 \\
    =\left\lfloor\sum_{j=1}^{i_r-3}2^j + \frac{14}{8}\right\rfloor-1 = \sum_{j=1}^{i_r-3}2^j.
    \end{multline*}
    
    \textit{In the second sub-case}, the index of interest $\ell$ is greater than or equal to the index of $a_2$ but smaller than that of $a_3$ in~$\vec a'$. In other words, either $\ell=i_r$ is greater than the index of $a_2$ or $\ell$ is precisely the index of~$a_2$ in $\vec a'$. If the latter occurs, let $i_r$ be the end-point of the run preceding $a_2$ in $\vec a'$. To prove that (C1) holds for $\vec a'$ at $\ell$, in both cases we aim to show that
    \begin{equation}
    \label{eq:Aa2}
    a_2+A_r \triangleq a_2+\sum_{j=4}^{i_r}a_j \geq \sum_{j=1}^{i_r-2} 2^j,
    \end{equation}
    noting that $\{a_2\}\cup \{a_j\colon 4 \leq j \leq i_r\}$ forms the set of the first $i_r-2$ elements in $\vec a'$.
    We demonstrate below that \eqref{eq:Aa2} is implied by \eqref{eq:Ca}. \textit{First}, since $\vec c$ is non-decreasing, from \eqref{eq:Ca} we deduce that 
    \[
    (c_1+c_2)+2C_r \geq (c_1+c_2)+(c_3+C_r) \geq \sum_{j=1}^{i_r}2^j,
    \]
    which implies that 
    \begin{equation}
    \label{eq:case_a1_2}
    \frac{c_1+c_2}{4} + \frac{C_r}{2} \geq \frac{1}{4}\sum_{j=1}^{i_r}2^j = \sum_{j=1}^{i_r-2}2^j + \frac{3}{2}.
    \end{equation}
    \textit{Next}, since $c_2\geq c_1 \geq 3$, we have
    \[
    a_2 = c_2 - 1 > \frac{c_2}{2} = \frac{2c_2}{4} \geq \frac{c_1+c_2}{4}.
    \]
    \textit{Then}, by combining this with Lemma~\ref{lem:AB} and \eqref{eq:case_a1_2}, we obtain the following inequality
    \begin{multline*}
    a_2+A_r > \frac{c_1+c_2}{4} + \bigg(\left\lfloor \frac{C_r}{2} \right\rfloor - 1\bigg) 
    \geq \frac{c_1+c_2}{4} + \frac{C_r-1}{2} - 1 \\
    = \frac{c_1+c_2}{4} + \frac{C_r}{2} - \frac{3}{2}
    \geq \sum_{j=1}^{i_r-2}2^j.
    \end{multline*}
    Thus, \eqref{eq:Aa2} follows.
    
    \textit{In the third sub-case}, the index of interest $\ell$ is greater than or equal to the index of~$a_3$. In other words, either $\ell=i_r$ is greater than the index of $a_3$ or $\ell$ is precisely the index of~$a_3$ in $\vec a'$. If the latter occurs, let $i_r$ be the end-point of the run preceding $a_3$ in $\vec a'$. To prove that (C1) holds for $\vec a'$ at $\ell$, in both cases we aim to show that
    \begin{equation}
    \label{eq:Aa2a3}
    a_2+a_3+\sum_{j=4}^{i_r}a_j \geq \sum_{j=1}^{i_r-1} 2^j,
    \end{equation}
    noting that in this sub-case $\{a_2,a_3\}\cup \{a_j\colon 4 \leq j \leq i_r\}$ forms the set of the first $i_r-1$ elements in $\vec a'$.
    This turns out to be the easiest sub-case. With the presence of both $a_2$ and $a_3$ in the sum on the left-hand side of \eqref{eq:Aa2a3}, according to the way $a_i$ and $b_i$ are selected in Lemma~\ref{lem:c1>2}, we have
    \[
    \left|\left(a_2+a_3+\sum_{j=4}^{i_r}a_j\right)-\left(b_2+b_3+\sum_{j=4}^{i_r}b_j\right)\right| \leq 1,
    \]
    and since $a_2+a_3+b_2+b_3=c_1+c_2+c_3-2$ and $a_j+b_j=c_j$ for $j\geq 4$, we also have
    \begin{multline*}
    \left(a_2+a_3+\sum_{j=4}^{i_r}a_j\right)+\left(b_2+b_3+\sum_{j=4}^{i_r}b_j\right) \\
    = \sum_{j=1}^{i_r}c_j-2
    \geq \sum_{j=2}^{i_r}2^j
    = 2\sum_{j=1}^{i_r-1} 2^j,
    \end{multline*}
    which, together, imply \eqref{eq:Aa2a3}.
    \textbf{(Case a2) $a_2 > a_3$}. The sorted sequence $\vec a'$ has the following format.
    \[
    \underbracket{\hspace{20pt}\text{runs}\hspace{20pt}}_{}\ a_3\ \underbracket{\hspace{20pt}\text{runs}\hspace{20pt}}_{}\ a_2\ \underbracket{\hspace{20pt}\text{runs}\hspace{20pt}}_{}
    \]
    Note that it is possible that there are no runs before $a_3$, or between $a_3$ and $a_2$, or after $a_2$.
    
    Again, due to Lemma~\ref{lem:aux}, we only need to demonstrate that (C1) holds for $\vec a'$ at the end-point~$i_r$ of each run $R_r$, $1\leq r \leq k$, \textit{and} at $a_2$ and $a_3$. 
    
    \textit{In the first sub-case}, $\ell=i_r$ is smaller than the index of $a_3$ in~$\vec a'$. As $a_2$ and $a_3$ are not involved, the same proof as in the first-subcase of Case a1 applies. 
    
    \textit{In the second sub-case}, the index of interest $\ell$ is greater than or equal to the index of $a_3$ but smaller than that of $a_2$ in~$\vec a'$. In other words, either $\ell=i_r$ is greater than the index of~$a_3$ and smaller than that of $a_2$, or $\ell$ is precisely the index of $a_3$ in $\vec a'$. If the latter occurs, let~$i_r$ be the end-point of the run preceding $a_3$ in $\vec a'$.  To prove that (C1) holds for $\vec a'$ at $\ell$, we aim to show
    \begin{equation}
    \label{eq:Aa3}
    a_3+A_r \triangleq a_3+\sum_{j=4}^{i_r}a_j \geq \sum_{j=1}^{i_r-2} 2^j,
    \end{equation}
    noting that in this sub-case $\{a_3\}\cup \{a_j\colon 4 \leq j \leq i_r\}$ forms the set of the first $i_r-2$ elements in $\vec a'$.
    We demonstrate below that \eqref{eq:Aa3} can be implied from \eqref{eq:Ca} in both cases when $i_r = 4$ and $i_r > 4$. First, assume that $i_r = 4$, i.e, $r=1$ and the run $R_1$ consists of only one index $4$. Now, \eqref{eq:Ca} can be written as
    \[
    c_1+c_2+c_3+c_4 \geq \sum_{j=1}^4 2^j = 30,
    \]
    and what we need to prove is
    \[
    a_3 + \left\lfloor \frac{c_4}{2} \right\rfloor \geq \sum_{j=1}^{2} 2^j = 6,
    \]
    noting that the element in $\vec a'$ corresponding to $c_4$ is either $\lfloor c_4/2 \rfloor$ or $\lceil c_4/2 \rceil$.
    Equivalently, plugging in the formula for $a_3$, what we aim to show is
    \begin{equation}
    \label{eq:c1234}
    \left\lceil \frac{c_3+c_1-c_2}{2} \right\rceil + \left\lfloor \frac{c_4}{2} \right\rfloor \geq 6.
    \end{equation}
    This inequality is correct because 
    \[
    \left\lceil \frac{c_3+c_1-c_2}{2} \right\rceil + \left\lfloor \frac{c_4}{2} \right\rfloor
    \geq \left\lceil \frac{c_1}{2} \right\rceil + \left\lfloor \frac{c_4}{2} \right\rfloor
    \geq 6,
    \]
    where the first inequality holds because $c_3 \geq c_2$ and the second inequality holds because $c_1\geq 3$ and $c_4\geq 8$, given that $c_4\geq c_3 \geq c_2 \geq c_1$ and $c_1+c_2+c_3+c_4\geq 30$. 
    To complete this sub-case, we assume that $i_r > 4$. In this scenario, $C_r$ in \eqref{eq:Ca} has at least two terms $c_j$'s, which are all greater than or equal to $c_2$ and $c_3$, and hence, $C_r \geq c_2+c_3$. Therefore, using Lemma~\ref{lem:AB}, we have 
    \begin{multline*}
    a_3+A_r = \left\lceil \frac{c_3+c_1-c_2}{2} \right\rceil + A_r 
    \geq \left\lceil \frac{c_1}{2} \right\rceil + \left(\left\lfloor \frac{C_r}{2} \right\rfloor - 1\right) \\
    > \frac{c_1}{4}+\left( \frac{C_r-1}{2}-1 \right)
    = \frac{1}{4}(c_1+2C_r)-\frac{3}{2} \\
    \geq \frac{1}{4}(c_1+c_2+c_3+C_r)-\frac{3}{2}
    \geq \frac{1}{4}\sum_{j=1}^{i_r}2^j-\frac{3}{2} =\sum_{j=1}^{i_r-2}2^j.
    \end{multline*}
    
    Thus, \eqref{eq:Aa3} follows. 
    
    \textit{In the third sub-case}, assume that the index of interest $\ell$ is greater than or equal to the index of $a_2$. As both $a_2$ and $a_3$ are involved, the proof goes exactly the same way as in the third sub-case of Case 1a. Thus, we have shown that $\vec a'$ also satisfies (C1) and is $(h-1)$-feasible.


\textbf{Next, we show that $\vec b'$ satisfies (C1)}. The proof is very similar to that for $\vec a'$. We divide the proof into two cases depending on whether $b_2 \leq b_3$ or not.

    \textbf{(Case b1) $b_2 \leq b_3$}. The sorted sequence $\vec b'$ has the following format in which within each run $R_r$ are the elements $b_j$, $j \in R_r$, ordered so that those $b_j = \left\lfloor \frac{c_j}{2} \right\rfloor$ precede those $b_j = \left\lceil \frac{c_j}{2} \right\rceil$. Note that $c_j$'s are equal for all $j\in R_r$. 
    \[
    \underbracket{\hspace{20pt}\text{runs}\hspace{20pt}}_{}\ b_2\ \underbracket{\hspace{20pt}\text{runs}\hspace{20pt}}_{}\ b_3\ \underbracket{\hspace{20pt}\text{runs}\hspace{20pt}}_{}
    \]
    It is possible that there are no runs before $b_2$, or between $b_2$ and $b_3$, or after $b_3$.
    
    The proof for the first and the third sub-cases are exactly the same as for $\vec a'$. We only need to consider the second sub-case, in which the index $\ell$ is greater than or equal to the index of $b_2$ but smaller than that of $b_3$ in~$\vec b'$. In other words, either $\ell=i_r$ is greater than the index of $b_2$ or $\ell$ is precisely the index of $b_2$ in $\vec a'$. If the latter occurs, let $i_r$ be the end-point of the run preceding $b_2$ in $\vec b'$. To prove that (C1) holds for $\vec b'$ at $\ell$, we aim to show that
    \begin{equation}
    \label{eq:Bb2}
    b_2+B_r \triangleq b_2+\sum_{j=4}^{i_r}b_j \geq \sum_{j=1}^{i_r-2} 2^j,
    \end{equation}
    noting that in this sub-case $\{b_2\}\cup \{b_j\colon 4 \leq j \leq i_r\}$ forms the set of the first $i_r-2$ elements in $\vec b'$.
    We demonstrate below that \eqref{eq:Bb2} can be implied from \eqref{eq:Ca} in both cases when $i_r = 4$ and $i_r > 4$. First, assume that $i_r = 4$, i.e, $r=1$ and the run $R_1$ consists of only one index $4$. Now, \eqref{eq:Ca} can be written as\vspace{-5pt}
    \[
    c_1+c_2+c_3+c_4 \geq \sum_{j=1}^4 2^j = 30,
    \]
    and what we need to prove is\vspace{-10pt}
    \[
    b_1 + \left\lfloor \frac{c_4}{2} \right\rfloor \geq \sum_{j=1}^{2} 2^j = 6,
    \]
    noting that the element in $\vec b'$ corresponding to $c_4$ is either $\lfloor c_4/2 \rfloor$ or $\lceil c_4/2 \rceil$.
    Equivalently, plugging in the formula for $b_2$, what we aim to show is
    \[
    (c_1-1) + \left\lfloor \frac{c_4}{2} \right\rfloor \geq 6,
    \]
    which is correct because $c_1\geq 3$ and $c_4 \geq 8$, given that $c_4\geq c_3 \geq c_2 \geq c_1$ and $c_4+c_3+c_2+c_1~\geq~30$. 
    To complete this sub-case, we assume that $i_r > 4$. In this scenario, $C_r$ in \eqref{eq:Ca} has at least two terms $c_j$'s, which are all greater than or equal to $c_2$ and $c_3$, and hence, $C_r \geq c_2+c_3$. Therefore, using Lemma~\ref{lem:AB}, we have 
    \begin{multline*}
    b_2+B_r = (c_1-1) + B_r 
    > \frac{c_1}{4} + \left(\left\lfloor \frac{C_r}{2} \right\rfloor - 1\right) \\
    \geq \frac{c_1}{4}+\left( \frac{C_r-1}{2}-1 \right)
    = \frac{c_1}{4}+\left( \frac{C_r}{2}-\frac{3}{2} \right) \\
    = \frac{1}{4}(c_1+2C_r)-\frac{3}{2} \geq \frac{1}{4}(c_1+c_2+c_3+C_r)-\frac{3}{2} \\
    \geq \frac{1}{4}\sum_{j=1}^{i_r}2^j-\frac{3}{2}
    = \sum_{j=1}^{i_r-2}2^j.
    \end{multline*}
    Thus, \eqref{eq:Bb2} follows. 
    \textbf{(Case b2) $b_2 > b_3$}. The sorted sequence $\vec b'$ has the following format.
    \[
    \underbracket{\hspace{20pt}\text{runs}\hspace{20pt}}_{}\ b_3\ \underbracket{\hspace{20pt}\text{runs}\hspace{20pt}}_{}\ b_2\ \underbracket{\hspace{20pt}\text{runs}\hspace{20pt}}_{}
    \]
    Note that it is possible that there are no runs before $b_3$, or between $b_3$ and $b_2$, or after $b_2$.
    
    Due to Lemma~\ref{lem:aux}, we only need to demonstrate that (C1) holds for $\vec b'$ at the end-point~$i_r$ of each run $R_r$, $1\leq r \leq k$, \textit{and} at $b_2$ and $b_3$. Similar to Case b1, we only need to investigate the second sub-case when the index of interest $\ell$ is greater than or equal to the index of $b_3$ but smaller than that of $b_2$ in~$\vec b'$. In other words, either $\ell=i_r$ is greater than the index of $b_3$ and smaller than that of $b_2$, or $\ell$ is precisely the index of $b_3$ in $\vec b'$. If the latter occurs, let $i_r$ be the end-point of the run preceding $b_3$ in $\vec b'$. To prove that (C1) holds for $\vec b'$ at $\ell$, we aim to show that
    \begin{equation}
    \label{eq:Bb3}
    b_3+B_r \triangleq b_3+\sum_{j=4}^{i_r}b_j \geq \sum_{j=1}^{i_r-2} 2^j,
    \end{equation}
    noting that in this sub-case $\{b_3\}\cup \{b_j\colon 4 \leq j \leq i_r\}$ forms the set of the first $i_r-2$ elements in $\vec b'$.
    We demonstrate below that \eqref{eq:Bb3} can be implied from \eqref{eq:Ca} in both cases when $i_r = 4$ and $i_r > 4$. First, assume that $i_r = 4$, i.e, $r=1$ and the run $R_1$ consists of only one index $4$. Now, \eqref{eq:Ca} can be written as
    \[
    c_1+c_2+c_3+c_4 \geq \sum_{j=1}^4 2^j = 30,
    \]
    and what we need to prove is
    \[
    b_3 + \left\lfloor \frac{c_4}{2} \right\rfloor \geq \sum_{j=1}^{2} 2^j = 6,
    \]
    noting that the element in $\vec b'$ corresponding to $c_4$ is either $\lfloor c_4/2 \rfloor$ or $\lceil c_4/2 \rceil$.
    Equivalently, plugging in the formula for $b_3$, what we aim to show is
    \[
    c_2-c_1+\left\lfloor \frac{c_3+c_1-c_2}{2} \right\rfloor + \left\lfloor \frac{c_4}{2} \right\rfloor \geq 6.
    \]
    This inequality is correct because 
    \begin{multline*}
    \quad c_2-c_1+\left\lfloor \frac{c_3+c_1-c_2}{2} \right\rfloor + \left\lfloor \frac{c_4}{2} \right\rfloor \\
    \geq c_2-c_1 + \frac{c_3+c_1-c_2-1}{2} + \frac{c_4-1}{2} \\
    \geq \frac{c_3+c_4}{2}-1 > 6,
    \end{multline*}
    where the last inequality holds because $c_3+c_4 \geq 16$, given that $c_4\geq c_3\geq c_2\geq c_1$ and $c_1+c_2+c_3+c_4\geq 30$. 
    To complete this sub-case, we assume that $i_r > 4$. In this scenario, $C_r$ in \eqref{eq:Ca} has at least two terms $c_j$'s, which are all greater than or equal to $c_1$ and $c_2$, and hence, $C_r \geq c_1+c_2$. Therefore, using Lemma~\ref{lem:AB} and \eqref{eq:Ca}, we have 
    \begin{multline*}
    b_3+B_r = \left(c_2-c_1+\left\lfloor \frac{c_3+c_1-c_2}{2} \right\rfloor\right) + B_r \\
    \geq \frac{c_2+c_3-c_1-1}{2} + \left(\left\lfloor \frac{C_r}{2} \right\rfloor - 1\right) \\
    \geq \frac{c_3-1}{2}+\left( \frac{C_r-1}{2}-1 \right)
    > \frac{c_3}{4} + \frac{C_r}{2}-\frac{3}{2} \\
    \geq \frac{1}{4}(c_3+2C_r)-\frac{3}{2} \geq \frac{1}{4}(c_1+c_2+c_3+C_r)-\frac{3}{2} \\
    \geq \frac{1}{4}\sum_{j=1}^{i_r}2^j-\frac{3}{2} 
    =\sum_{j=1}^{i_r-2}2^j.
    \end{multline*}
    Hence, \eqref{eq:Bb3} follows. Thus, we have shown that $\vec b'$ also satisfies (C1) and is $(h-1)$-feasible.
\end{proof}

\section{Connection to Combinatorial Batch Codes}
\label{app:batch}

Balanced ancestral colorings of perfect binary trees brings in two new dimensions to batch codes~\cite{ishai2004}: \textit{patterned} batch retrieval (instead of arbitrary batch retrieval as often considered in the literature) and \textit{balanced} storage capacity across servers (and generalized to \textit{heterogeneous} storage capacity).

\begin{figure}[htb!]
\centering
\includegraphics[scale=0.25]{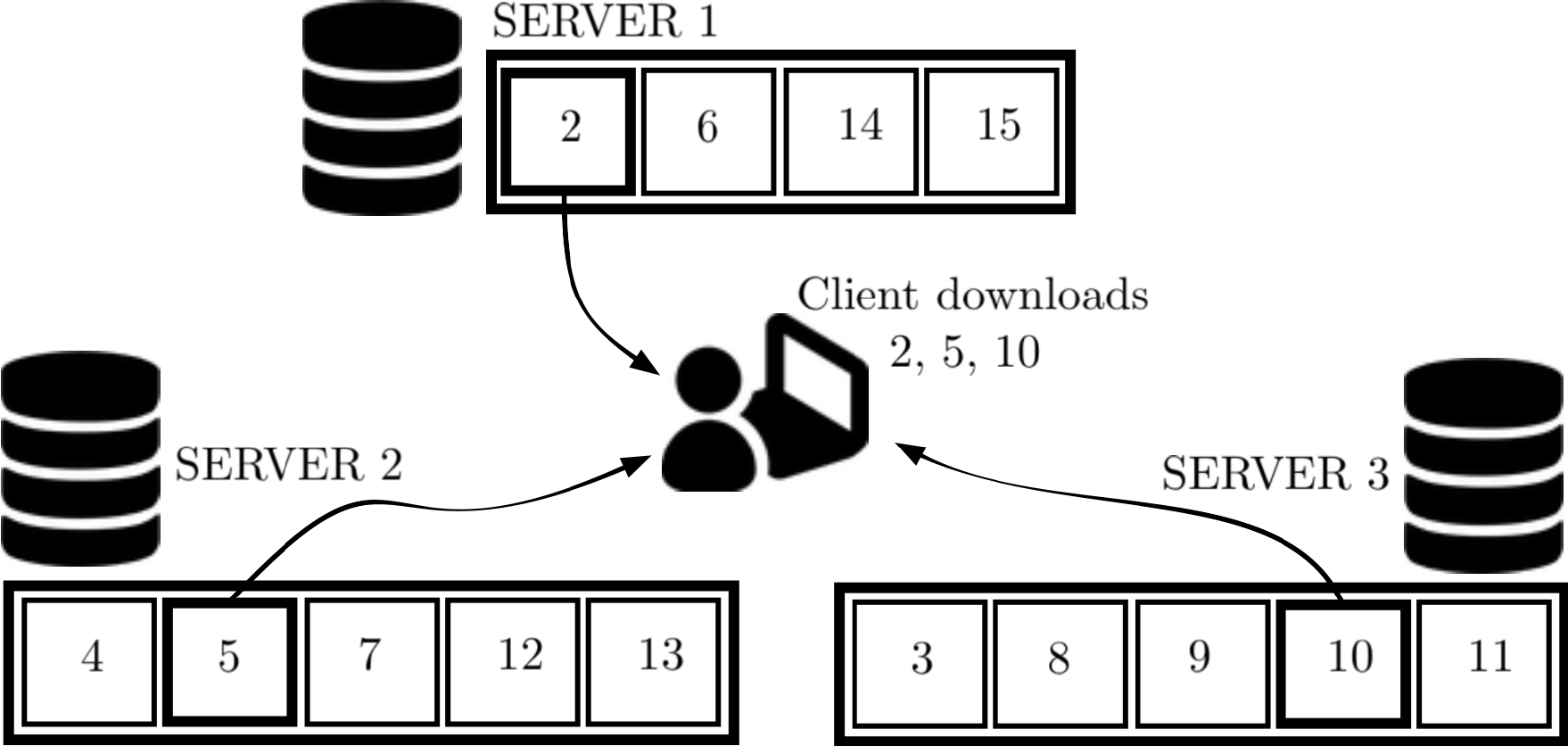}
\caption{An illustration of a \textit{balanced} combinatorial patterned-batch code with a \textit{minimum} total storage overhead (no redundancy), which is spread out evenly among three servers. The set of nodes at each server corresponds to tree nodes in a color class of a balanced ancestral coloring of $T(3)$ as given in Fig.~\ref{fig:toy}. A client only needs to download one item from each server for any batch request following a root-to-leaf-path pattern in $\mathcal{P}=\{\{2,4,8\}$, $\{2,4,9\}$, $\{2,5,10\}$, $\{2,5,11\}$, $\{3,6,12\}$, $\{3,6,13\}$, $\{3,7,14\}$,  $\{3,7,15\}\}$.}
\label{fig:batchcode}
\end{figure}

Given a set of $N$ distinct nodes, an $(N, S, h, m)$ \textit{combinatorial batch code} (CBC) provides a way to assign $S$ copies of nodes to $m$ different servers so that to retrieve a set of $h$ distinct nodes, a client can download at most one item from each server~\cite{ishai2004}. The goal is to minimize the total storage capacity required $S$. For instance, labelling $N = 7$ nodes from 1 to 7, and use $m=5$ servers, an $(N=7, S=15, h = 5, m = 5)$ CBC allocates to five servers the sets $\{1,6,7\}, \{2,6,7\},\{3,6,7\},\{4,6,7\},\{5,6,7\}$ (Paterson \textit{et al.,}~\cite{Paterson2009}). One can verify that an arbitrary set of $h=5$ nodes can be retrieved by collecting one item from each set. Here, the storage requirement is $S = 15 > N = 7$. It has been proved (see, e.g., \cite{Paterson2009}) that when $h = m$, the minimum possible $N$ is in the order of $\Theta(Nh)$, which implies that the average replication factor across $N$ nodes is $\Theta(h)$.

It turns out that $S=N$ (equivalently, replication factor one or \textit{no replication}) is achievable if the $h$ retrieved nodes are \textit{not} arbitrary and follow a special \textit{pattern}. More specifically, if the $N$ nodes can be organised as nodes in a perfect binary tree $T(h)$ except the root and only sets of nodes along root-to-leaf paths (ignoring the root) are to be retrieved, then each item only needs to be stored in one server and requires no replication across different servers, which implies that $S = N$. A trivial solution is to use layer-based sets, i.e., Server $i$ stores nodes in Layer $i$ of the tree, $i = 1,2,\ldots,h$. If we also want to balance the storage across the servers, that is, to make sure that the number of nodes assigned to every server differs from each other by at most one, then the layer-based solution no longer works and a balanced ancestral coloring of the tree nodes (except the root) will be required (see~Fig.~\ref{fig:batchcode} for an example). We refer to this kind of code as \textit{balanced combinatorial patterned-batch codes}. The balance requirement can also be generalized to the \textit{heterogeneous} setting in which different servers may have different storage capacities. It is an intriguing open problem to discover other patterns (apart from the root-to-leaf paths considered in this paper) so that (balanced or heterogeneous) combinatorial patterned-batch codes with optimal total storage capacity $S = N$ exist.

\vspace{-5pt}
\section{Connection to Equitable Colorings.}
\label{app:equitable}

By finding balanced ancestral colorings of perfect binary trees, we are also able to determine the \textit{equitable chromatic number} of a new family of graphs. 

\begin{figure}[htb!]
\centering
\includegraphics[scale=0.4]{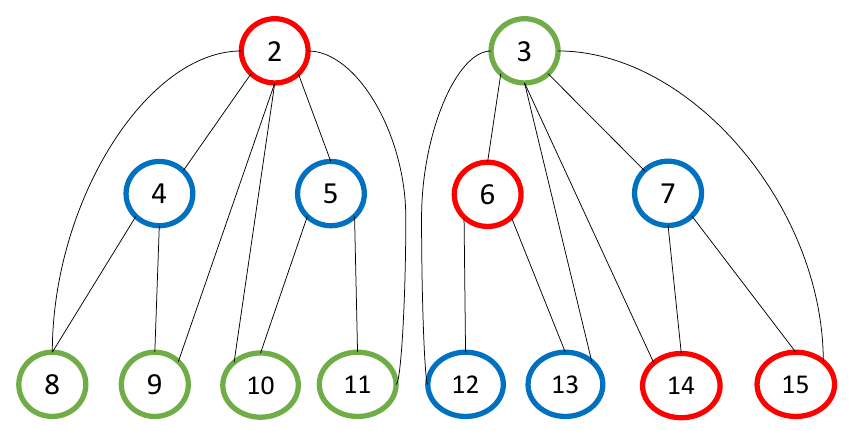}
\caption{An equitable coloring of $T'(3)$ using three colors (Red, Green, and Blue). The graph $T'(3)$ is a \textit{disconnected} graph obtained from $T(3)$ by first removing its root and then by adding edges between every node and all of its descendants.}
\label{fig:equitable}
\end{figure}

An undirected graph $G$ (connected or not) is said to be equitable $k$-colorable if its vertices can be colored with $k$ colors such that no adjacent vertices have the same color and moreover, the sizes of the color classes differ by at most one~\cite{Meyer1973}. The \textit{equitable chromatic number} $\chi_{=}(G)$ is the smallest integer $k$ such that $G$ is equitable $k$-colorable. There has been extensive research in the literature on the equitable coloring of graphs (see, for instance, \cite{Lih2013}, for a survey), noticeably, on the Equitable Coloring Conjecture, which states that for a connected graph $G$ that is neither a complete graph nor an odd cycle, then $\chi_=(G) \leq \Delta(G)$, where $\Delta(G)$ is the maximum degree of a vertex in $G$. Special families of graphs that support the Equitable Coloring Conjecture include trees, bipartite graphs, planar and outerplanar graphs, among others.

The existence of balanced ancestral colorings of the perfect binary tree $T(h)$ established in this work immediately implies that $\chi_=(T'(h)) \leq h$, where $T'(h)$ is a disconnected graph obtained from $T(h)$ by first removing its root and then by adding edges between every node $u$ and all of its descendants (see Fig.~\ref{fig:equitable} for a toy example of $T'(3)$). Note that $T'(h)$ is neither a tree, a bipartite graph, a planar graph, nor an outerplanar graph for $h \geq 5$ (because it contains the complete graph $K_h$ as a subgraph). 
On top of that, it is not even connected and so the Equitable Coloring Conjecture doesn't apply (and even if it applied, the maximum vertex degree is $\Delta(T'(h)) = 2^h-2$, which is very far from $h$ and therefore would give a very loose bound).
Furthermore, as $h$ nodes along a path from node 2 (or 3) to a leaf are ancestors or descendants of each other (hence forming a complete subgraph $K_h$), they should all belong to different color classes, which implies that $\chi_=(T'(h)) \geq h$. Thus, $\chi_=(T'(h)) = h$. Moreover, a balanced ancestral coloring of $T(h)$ provides an equitable $h$-coloring of $T'(h)$. To the best of our knowledge, the equitable chromatic number of this family of graphs  has never been discovered before.

We continue the discussion of the connections with the theory of majorization, e.g., on vertex coloring of claw-free graphs~\cite{deWerra1985} and edge coloring~\cite{FolkFulkerson1969} as below.

\section{Connection to the Theory of Majorization}
\label{app:majorization}

We discuss below the connection of some of our results to the theory of majorization~\cite{Marshall2011}.

\begin{definition}[Majorization]
    For two vectors $\vec{x}=[x_1,\ldots,x_h]$ and $\vec{y}=[y_1,\ldots,y_h]$ in $\mathbb{R}^h$ with $x_1\geq\ldots\geq x_h$ and $y_1\geq\ldots\geq y_h$, $\vec{y}$ majorizes $\vec{x}$, denoted by $\vec{x}\prec \vec{y}$, if the following conditions hold.
    \begin{itemize}
      \item $\sum_{i=1}^\ell y_i \geq \sum_{i=1}^\ell x_i$, for every $1\leq \ell \leq h$, and
      \item $\sum_{i=1}^h y_i = \sum_{i=1}^h x_i$.
    \end{itemize}
    Equivalently, when $\vec{x}$ and $\vec{y}$ are sorted in non-decreasing order, i.e., $x_1\leq\ldots\leq x_h$ and $y_1\leq\ldots\leq y_h$, $\vec{x}\prec \vec{y}$ if the following conditions hold.
    \begin{itemize}
      \item $\sum_{i=1}^\ell y_i \leq \sum_{i=1}^\ell x_i$, for every $1\leq \ell \leq h$, and
      \item $\sum_{i=1}^h y_i = \sum_{i=1}^h x_i$.
    \end{itemize}
\end{definition}

Now, let $\vec c_h =[2,2^2,\ldots, 2^h]$. Clearly, $\vec{c}$ is $h$-feasible (according to Definition~\ref{def:feasibility}) if and only if all elements of $\vec c$ are integers and $\vec c\prec \vec c_h$. From the perspective of majorization, we now reexamine some of our results and make connections with existing related works. 

\textit{First}, Theorem~\ref{thm:main} states that $T(h)$ is ancestral $\vec c$-colorable if and only if $\vec c \prec \vec c_h$. 
Note that it is trivial that $T(h)$ is ancestral $\vec c_h$-colorable (one color for each layer). Equivalently, $T'(h)$, which is obtained from $T(h)$ by adding edges between each node and all of its descendants, is $\vec c_h$-colorable (i.e., it can be colored using $c_i$ Color $i$ so that adjacent vertices have different colors).
A result similar to Theorem~\ref{thm:main} was proved by Folkman and Fulkerson~\cite{FolkFulkerson1969} but for the \textit{edge coloring problem} for a general graph, which states that if a graph $G$ is $\vec c$-colorable and $\vec{c'} \prec \vec c$ then $G$ is also $\vec{c'}$-colorable. However, the technique used in~\cite{FolkFulkerson1969} for edge coloring doesn't work in the context of vertex coloring, which turns out to be much more complicated\footnote{The key idea in~\cite[Thm.~4.2]{FolkFulkerson1969} is to generate an edge colorings with $\vec{c'} \prec \vec c$ by starting from a $\vec c$-coloring and repeatedly applying minor changes to the coloring: an odd-length path consisting of edges in alternating colors (e.g., Red-Blue-Red-Blue-Red) is first identified, and then the colors of these edges are flipped (Blue-Red-Blue-Red-Blue) to increase the number of edges of one color (e.g., Blue) while decreasing that of the other color (e.g., Red) by one. A similar approach for vertex color is to first identify a ``family'' in the tree in which the two children nodes have the same color (e.g., Blue) that is different from the parent node (e.g., Red), and then flip the color between the parent and the children. However,  the existence of such a family is not guaranteed in general. By contrast, in the edge coloring problem, an odd-length path of alternating colors always exists (by examining the connected components of the graph with edges having these two colors).}. 
The same conclusion for vertex coloring, as far as we know, only holds for \textit{claw-free} graphs (see de Werra~\cite{deWerra1985}, and also~\cite[Theorem 4]{Lih2013}). Recall that a claw-free graph is a graph that doesn't have $K_{1,3}$ as an induced subgraph.
However, our graph $T'(h)$ contains a claw when $h\geq 3$, e.g., the subgraph induced by Nodes 2, 4, 10, and 11 (see Fig.~\ref{fig:equitable}). Therefore, this result doesn't cover our case.

\textit{Second}, in the language of majorization, the Color-Splitting Algorithm starts from an integer vector $\vec c$ of dimension $h$ that is majorized by $\vec c_h$ and creates two new integer vectors, both of dimension $h-1$ and majorized by $\vec{c}_{h-1}$. Applying the algorithm recursively produces $2^{h-1}$ sequences of vectors (of decreasing dimensions) where the $i$th element in each sequence is majorized by $\vec{c}_{h-i+1}$.

\begin{figure}[htb!]
    \centering
    \includegraphics[scale=0.39]{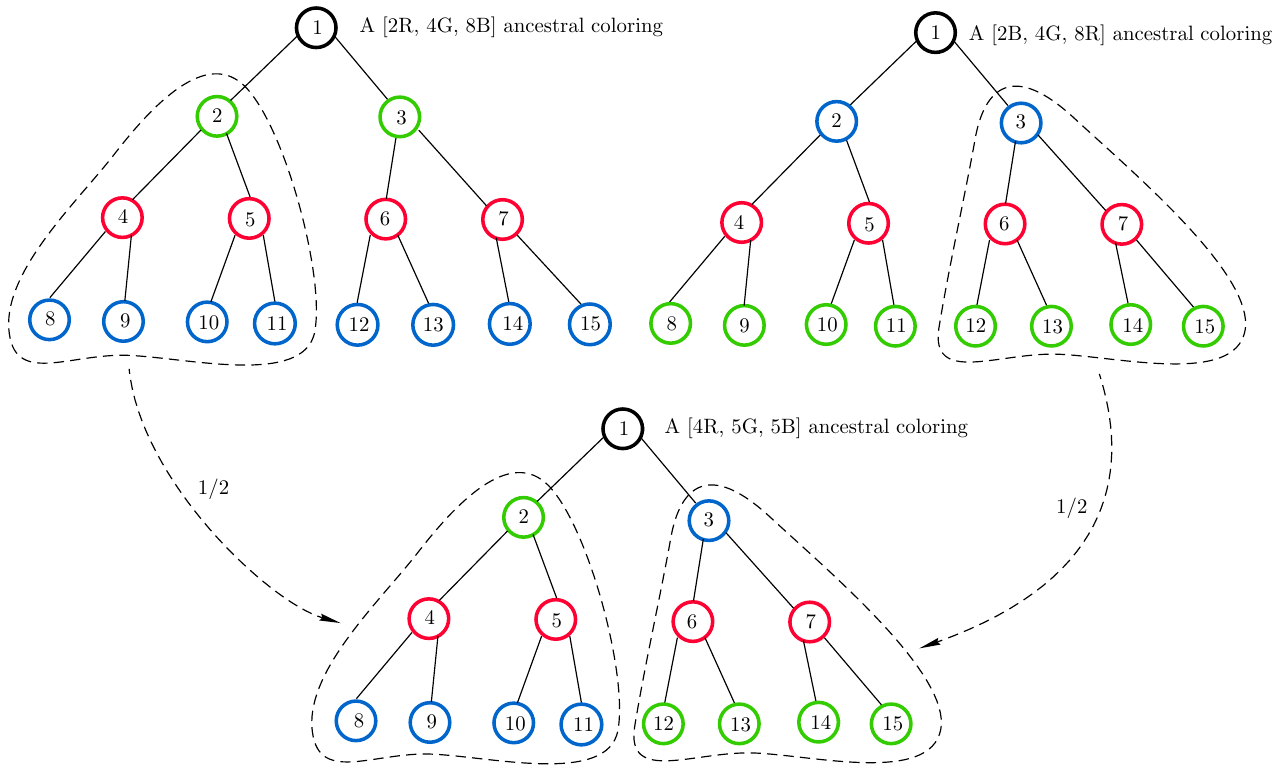}
    \caption{A [4R, 5G, 5B] ancestral coloring for $T_3$ can be obtained by ``stitching'' one half of the [4R, 2G, 8B] (trivial) coloring and one half of the [4R, 8G, 2B] (trivial) coloring. Note that $[4,5,5]=\frac{1}{2}[4,2,8]+\frac{1}{2}[4,8,2]$.}
    \label{fig:stitching1}
\end{figure}

In the remainder of this section, we discuss a potential way to solve the ancestral tree coloring problem using a geometric approach. It was proved by Hardy, Littlewood, and P\'{o}lya in 1929 that $\vec{x}\prec \vec{y}$ if and only if $\vec{x}=\vec{y}P$ for some doubly stochastic matrix $P$. Moreover, the Birkhoff theorem ensures that every doubly stochastic matrix can be expressed as a convex combination of permutation matrices. With these, a geometric interpretation of majorization was established that $\vec{x}\prec \vec{y}$ if and only if $\vec{x}$ lies in the convex hull of all the permutations of $\vec{y}$ \cite{Marshall2011}. Therefore, it is natural to consider a geometric approach to the ancestral coloring problem. The following example is one such attempt.

\begin{example}\label{ex:geometry_h3}
    Let $h=3$. Recall that $\vec{c}_h=[2,4,8]$. Also, let $\vec{c'}=[4,5,5]$ and $\vec{c''}=[3,4,7]$. Clearly, $\vec{c'}\prec \vec{c}_h$ and $\vec{c''}\prec \vec{c}_h$, and hence the CSA would work for both cases. Alternatively, we first write $\vec{c'}$ and $\vec{c''}$ as convex combinations of permutations of $\vec{c}_h$ as
    \begin{align}
        \vec{c'} &= \frac{1}{2}[4,2,8] + \frac{1}{2}[4,8,2], \label{eq:c_1_convex}\\
        \vec{c''} &= \frac{1}{4}[2,4,8] + \frac{1}{4}[2,8,4]+\frac{1}{2}[4,2,8]. \label{eq:c_2_convex}
    \end{align}
\end{example}

\begin{figure}[htb!]
    \centering
    \includegraphics[scale=0.38]{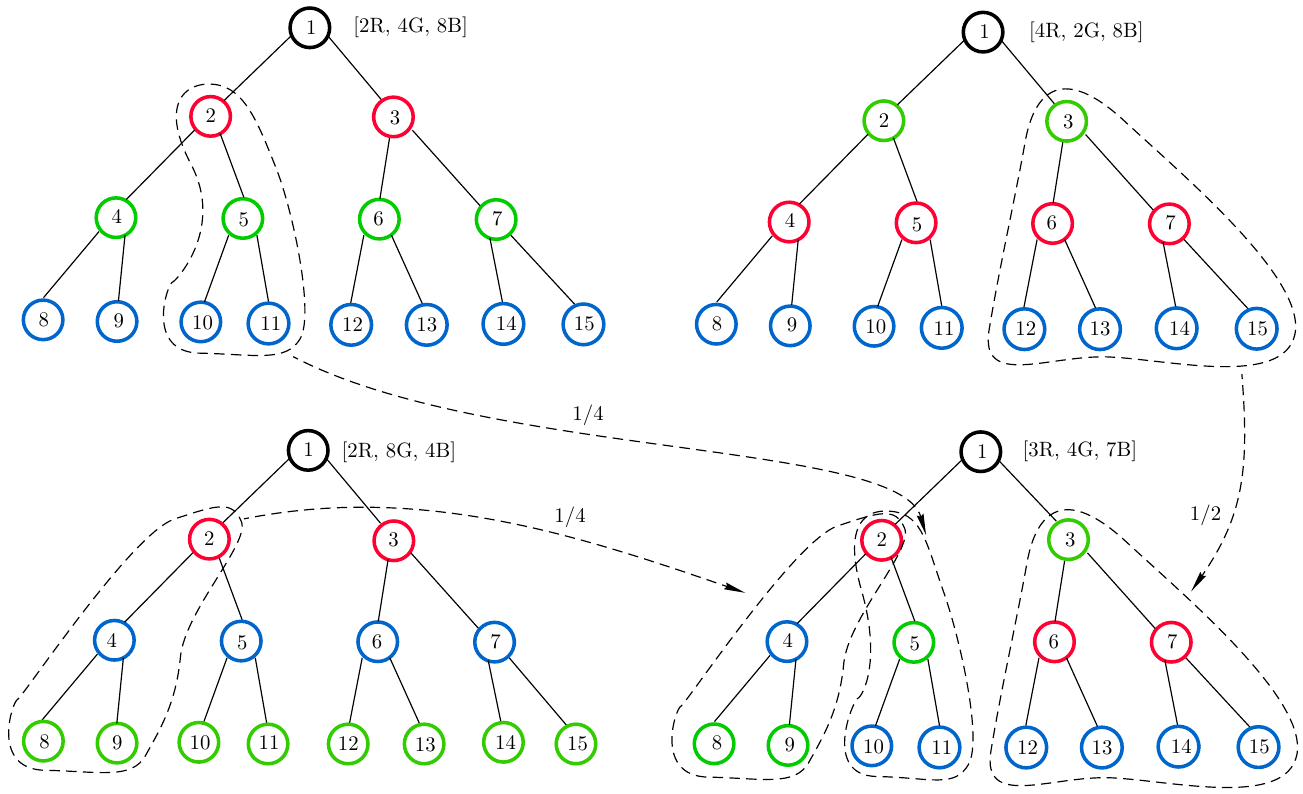}
    \caption{A [3R, 4G, 7B] ancestral coloring for $T_3$ can be obtained by ``stitching'' a quarter of the coloring [2R, 4G, 8B], a quarter of the coloring [2R, 8G, 4B], and a half of the coloring [4R, 2G, 8B]. Note that $[3,4,7]=\frac{1}{4}[2,4,8]+\frac{1}{4}[2,8,4]+\frac{1}{2}[4,2,8]$.}
    \label{fig:stitching2}
\end{figure}

Note that every permutation of $\vec{c}_h$ admits a trivial layer-based ancestral coloring as shown in the two trees at the top in Fig.~\ref{fig:stitching1}. Now, \eqref{eq:c_1_convex} demonstrates that it is possible to obtain an ancestral coloring for $\vec{c'}$ by taking a half of the tree colored by $[4,2,8]$ and a half of that colored by $[4,8,2]$ and stitching them together, which is precisely what we illustrate in Fig.~\ref{fig:stitching1}. Similarly, Fig.~\ref{fig:stitching2} shows how we obtain an ancestral coloring for $\vec{c''}=[3,4,7]$ by taking a quarter of the tree colored by $[2,4,8]$, a quarter of the tree colored by $[2,8,4]$, and a half of the tree colored by $[4,2,8]$ and stitching them together.

With the success of this approach for the above examples ($h=3$) and  other examples with $h=4$ (not included for the sake of conciseness), it is tempting to believe that for every $\vec{c}$ majorized by $\vec{c}_h$, an ancestral coloring can be obtained by stitching sub-trees of trivial layer-based ancestral-colored trees. However, our initial (failed) attempt for $h=5$ indicates that it could be quite challenging to obtain ancestral colorings by such a method, even if a convex and dyadic combination can be identified. Further investigation on this direction is left for future work.

\section{Coloring Sparse Merkle Tree}
\label{app:SMT}

A Sparse Merkle Tree (SMT) (see, e.g.~\cite{DahlbergPP16}) is 
a perfect Merkle tree of enormous size, e.g. $2^{256}$ leaves. It has a unique leaf for every possible output of a cryptographic hash function, hence perfect for storing large data structures such as the state tree of a blockchain (e.g. Ethereum) or any indefinitely growing database. Despite its vast size, it can be efficiently simulated due to its sparsity (most leaves are empty). We provide some sketch of how the problem of coloring Sparse Merkle Tree can be transformed into the problem of coloring \textit{full binary tree} (which is still an open problem) via an example to avoid cumbersome formalization.
We consider a sparse Merkle tree of height 4 in Fig.~\ref{fig:SMT1}, in which the majority of nodes (dashed) are empty. Note that data stored at root nodes of empty sub-trees can be precomputed.  

\begin{figure}[h!]
\centering
\includegraphics[scale=0.35]{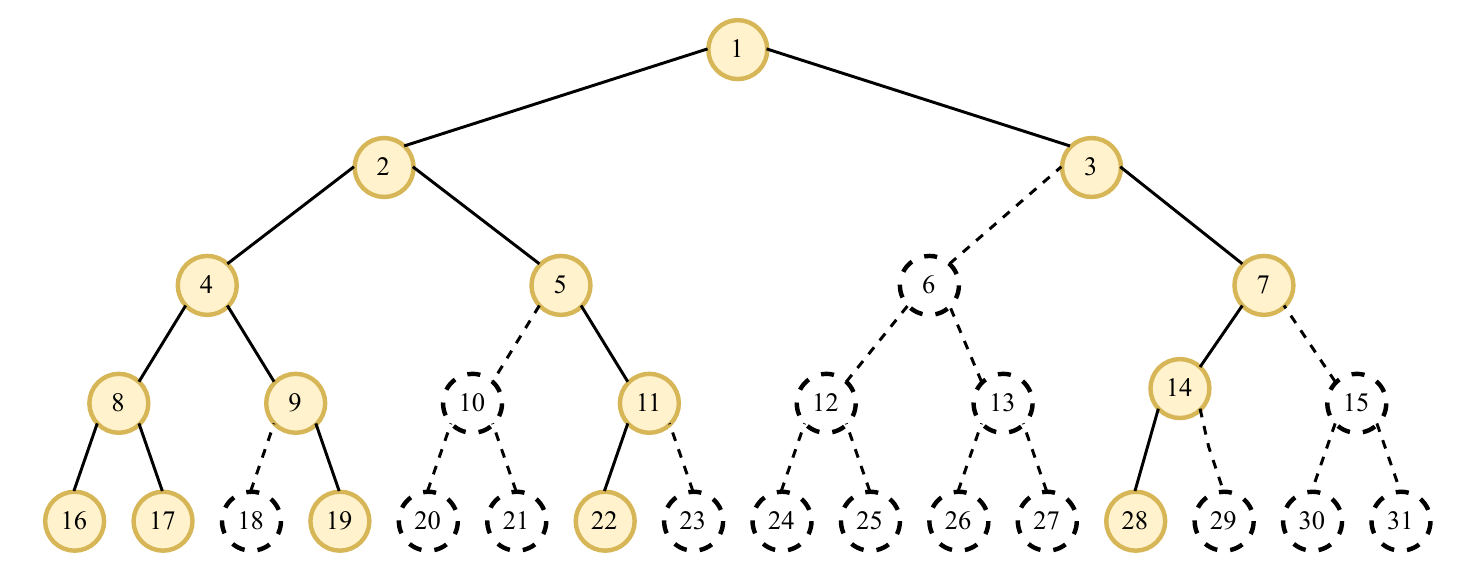}
\caption{An illustration of a Sparse Merkle Tree with five non-empty leaves ($16, 17, 19, 22, 28$). The rest stores a trivial value.}
\label{fig:SMT1}
\end{figure}

As all the dashed nodes can be precomputed, we can ``remove'' them to obtain the tree depicted in Fig.~\ref{fig:SMT2}. 

\begin{figure}[h!]
\centering
\includegraphics[scale=0.35]{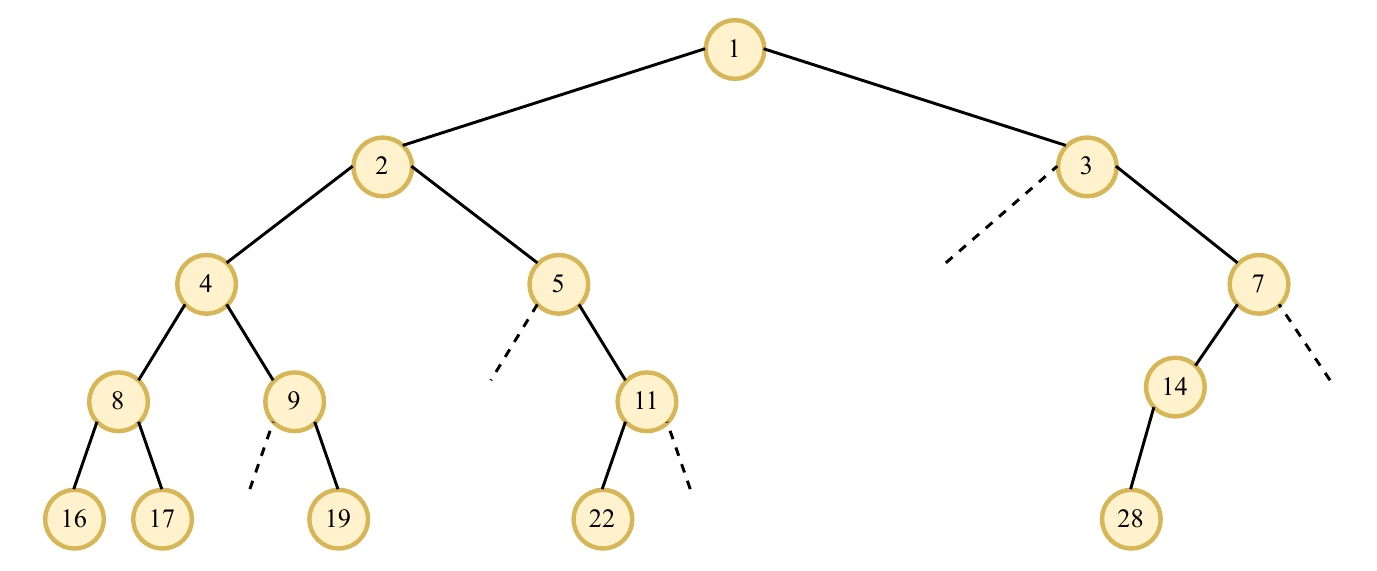}
\caption{The Sparse Merkle Tree after discarding trivial nodes, which can be precomputed.}
\label{fig:SMT2}
\end{figure}

We can compress the tree to turn it into a full binary tree by recursively applying the following rule: if a leaf node does not have a sibling, then it determines the value of its parent node, and hence, can replace the position of the parent (because the value of the parent node depends solely on the value of the leaf and is redundant). The process is repeated until every leaf node has a sibling. The outcome is a full binary tree as in Fig.~\ref{fig:comSMT}. 

\begin{figure}[h!]
\centering
\includegraphics[scale=0.4]{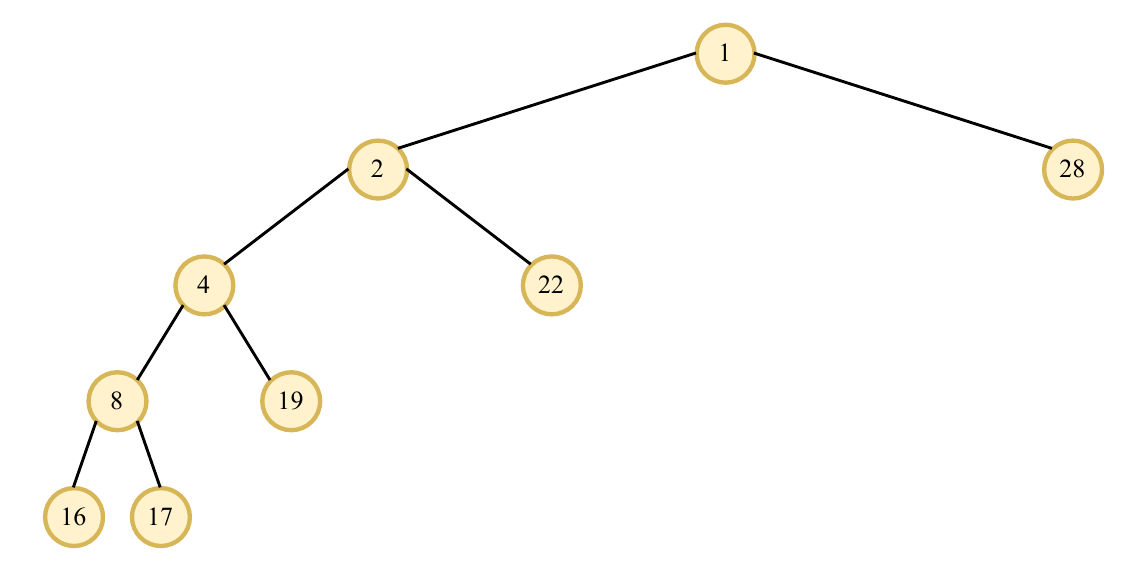}
\caption{Compressing an SMT into a full binary tree.}
\label{fig:comSMT}
\end{figure}

After compressing, we can swap the sibling nodes to make each  Merkle proof into a root-to-leaf path. In this example, a balanced Ancestral Coloring [2R, 2G, 2B, 2P] still exists (see Fig.~\ref{fig:colorSMT}. 
In general, developing a necessary and sufficient condition, if any, for a color sequence to be feasible on a full tree, seems to be a very challenging task. Also, dealing with a growing Sparse Merkle Tree is tricky and requires a new theoretical ground on tree coloring. 

\begin{figure}[h!]
\centering
\includegraphics[scale=0.4]{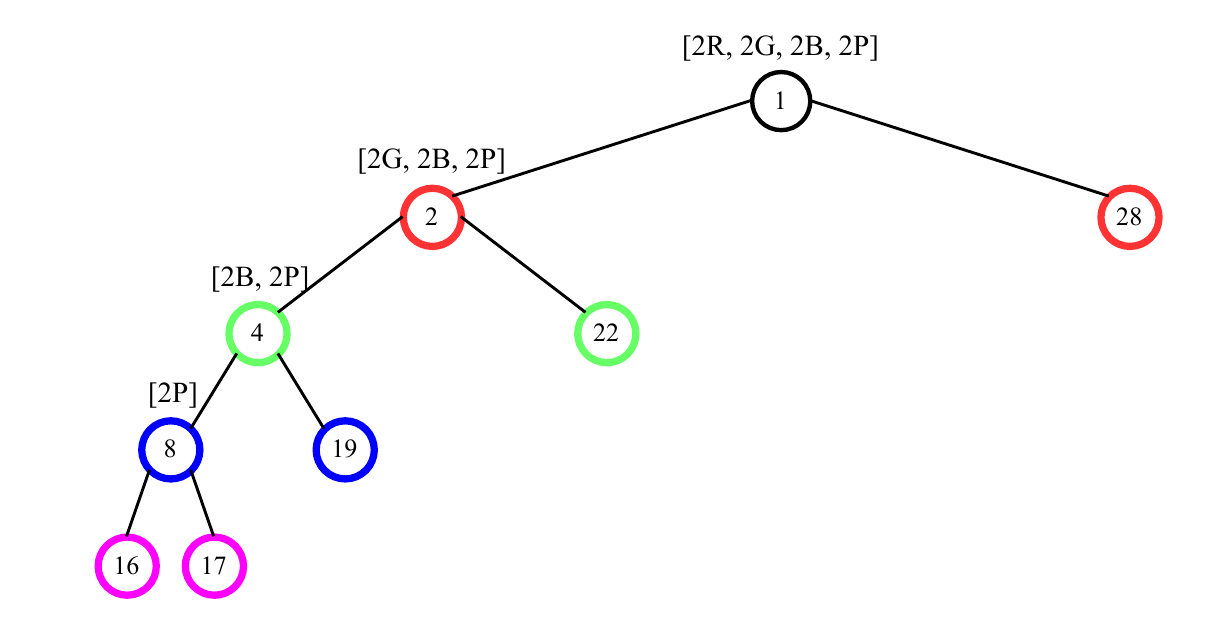}
\caption{A balanced ancestral coloring of the (compressed) full binary tree using the color sequence [2R, 2G, 2B, 2P].}
\label{fig:colorSMT}
\end{figure}

\end{document}